\documentclass[a4paper,onecolumn,11pt]{quantumarticle} 
\pdfoutput=1
\usepackage[utf8]{inputenc}
\usepackage[english]{babel}
\usepackage[T1]{fontenc}
\usepackage{amsmath,amssymb,amsthm,mathtools}
\usepackage{braket}
\usepackage{hyperref}

\usepackage{tikz}
\usepackage{lipsum}
\usepackage[numbers]{natbib}
\usepackage{ulem}

\usepackage{amsthm}
\usepackage{xcolor}
\usepackage{tcolorbox}
\usepackage{dsfont}
\usepackage{todonotes}
\usepackage{import}
\usepackage{soul}
\usepackage{float}
\usepackage{xspace}

\newcommand{\mancfoot}[1]{%
  \insert\footins{%
    \everycr{}%
    \font\smallfnt=cmr10 at 8pt\smallfnt%
    \vskip  2pt%
    \noindent\textsuperscript{#1}%
  }%
}

\newcommand{\eqdef}{\ensuremath{\stackrel{\mbox{\upshape\tiny def.}}{=}}}

\definecolor{main}{HTML}{5989cf}    
\definecolor{sub}{HTML}{cde4ff}     

\tcbset{
    sharp corners,
    colback = white,
    before skip = 0.2cm,    
    after skip = 0.5cm      
}                           

\newtcolorbox{blue_box}[1]{
    title = #1,
    fonttitle = \bf,
    colback = sub, 
    colframe = main, 
    boxrule = 0pt, 
    toprule = 3pt, 
    bottomrule = 3pt 
}

\newcounter{protocol}

\newtheorem{theorem}{Theorem}[section]
\newtheorem{lemma}[theorem]{Lemma}
\newtheorem{proposition}{Proposition}
\newtheorem{criterion}{Criterion}

\theoremstyle{remark}
\newtheorem{remark}[theorem]{Remark}
\newtheorem{example}[theorem]{Example}

\theoremstyle{definition}
\newtheorem{definition}[theorem]{Definition}

\newcommand{\Pos}{\mathrm{Pos}}
\newcommand{\CP}{\mathrm{CP}}

\newcommand{\Tr}{\text{Tr}}
\newcommand{\Hil}{\mathcal{H}}
\newcommand{\id}{\mathds{1}}
\newcommand{\SOS}{\text{SOS}\xspace}
\newcommand{\CSOS}{\text{CSOS}\xspace}
\newcommand{\RSOS}{\text{RSOS}\xspace}
\newcommand{\nonSOS}{\text{non-SOS}\xspace}

\newcommand{\ie}{\textit{i.e.}\xspace}

\newcommand{\RR}{\mathbb{R}}
\newcommand{\NN}{\mathbb{N}}

\begin{document}

\title{Detecting bipartite entanglement with PnCP maps and non-negative polynomials}

\author{Gaël Massé \textsuperscript{*}}
\affiliation{LIP6, Sorbonne Université,75005 Paris, France}
\affiliation{ICFO-Institut de Ciencies Fotoniques, Mediterranean Technology Park, 08860 Castelldefels (Barcelona), Spain}

\author{Mounir Rezig \textsuperscript{*}}
\affiliation{LIP6, Sorbonne Université,75005 Paris, France}
\affiliation{Thales SIX GTS France, 92230 Gennevilliers, France}
\email{mounir.rezig@lip6.fr}

\mancfoot{* These two authors contributed equally}

\author{Paul Catala}
\affiliation{CRAN,Université de Lorraine, CNRS, Vandoeuvre-lès-Nancy,France}
\author{Santiago Scheiner}
\affiliation{LIP6, Sorbonne Université,75005 Paris, France}
\author{Laia Serradesanferm Córdoba}
\affiliation{ICFO-Institut de Ciencies Fotoniques, Mediterranean Technology Park, 08860 Castelldefels (Barcelona), Spain}
\author{Enky Oudot}
\affiliation{LIP6, Sorbonne Université,75005 Paris, France}
\affiliation{ICFO-Institut de Ciencies Fotoniques, Mediterranean Technology Park, 08860 Castelldefels (Barcelona), Spain}
\author{Damian Markham}
\affiliation{LIP6, Sorbonne Université,75005 Paris, France}

\orcid{0009-0008-4834-0641}

\maketitle

\begin{abstract}
  Positive non-Completely Positive (PnCP) maps are an essential tool to detect entanglement since their characterization is a dual aspect of the separability problem. A recent algorithm proposed in~\cite{Klep2017ManyMorePositiveMaps} explains how to generate PnCP maps based on the construction of certain positive non-Sum-of-Squares polynomials. We implement this algorithm in a numerically robust way  and propose a working version on GitHub. We theoretically demonstrate that the maps produced by the algorithm are indecomposable, localized on the boundary of the positive cone and show that they are inequivalent with most other known PnCP maps. We numerically investigate their entanglement power, demonstrating notably that they are capable of detecting PPT entangled states that most criteria fail to detect. 
\end{abstract}

\section*{Introduction}
\label{sec:introduction}

A particularly striking aspect of quantum theory is the way the state of a composite system relates to the description of its individual parts. This peculiarity, already highlighted in the foundational works of the theory, originates in the fact that the Hilbert space of a composite system is given by the tensor product of the Hilbert spaces of its subsystems. This tensor-product structure makes possible the existence of genuinely non-classical correlations between different parts of a composite quantum system. In this work, we focus on the manifestation of this non-classical aspect called entanglement in the case of bipartite systems, that is quantum states whose description require the tensorial product space of exactly two subsystems. 

Entanglement is a central feature of quantum physics and a key resource for quantum information and computation~\cite{Horodecki2009,Chitambar2019}. Accordingly, the characterization of the set of entangled states, also called the separability problem, has been a central issue for quantum scientists, both from a theoretical and a practical perspective.  Remarkably, an important work proposed a family of separability criteria that can be cast in the form of an hierarchy of Semi Definite Programs (SDP) that is complete, \ie that any entangled state must be detected at a finite level of the hierarchy~\cite{Doherty2004CompleteFamily}. Nonetheless, this criterion is not necessarily suited for applications because of its exploding computational cost. Indeed, it has been shown that not only the separability problem was NP-hard~\cite{Gharibian2009Strong_NP_hardness}, but also that it could not be solved with a SDP in polynomial time for bipartite quantum states acting on Hilbert spaces of dimension larger than $3\times 2$~\cite{fawzi2019setseparablestatesfinite}.

Thus, a large number of different techniques have been introduced to address the problem from different perspectives. There are several notable criteria that can be quoted, including the famous Positive Partial Transpose (PPT) criterion~\cite{Peres1996SeparabilityCriterion} that completely solves the problem in low dimensions, the reduction criterion~\cite{Horodecki1999ReductrionCriterion}, the realignment criterion~\cite{chen2003MatrixRealignement} and many others, such as those listed in~\cite{Guhne2009EntanglementDetection}. Different benchmarks can coexist to evaluate the usefulness of a criterion, but in order to potentially extend the set of detected states, a criterion should be able to detect so called PPT entangled states, those for which the handy Partial Transpose fails, before anything else. In this work, we address the separability problem by implementing and analysing a new entanglement detection criteria. 

The usage of PnCP maps to detect entanglement has been acknowledged for a certain time now, several useful maps having been successfully derived notably to detect PPT Entangled states~\cite{Kossakowski2003ClassLinear,Ha2003ClassAtomic,Breuer2006OptimalEntanglement,Hall2006NewCriterion}. In order to construct new PnCP maps, a powerful link between PnCP maps and polynomials that are positive but can not be written as sum-of-squares (SoS) has recently been recognized. It has started to be used by the quantum information community~\cite{Ohst2025RevealingHidden}. Multiple connections had however been demonstrated previously between moment-SoS representations and the separability problems~\cite{Gribling2022SeparableRank,Bohnet2017EntanglementTMP}. The existence of such polynomials has been known since 1888, when David Hilbert demonstrated their existence~\cite{Hilbert1888}, and has since been an important field of study, which gained new momentum in the last decades following important applications in the field of global optimization~\cite{Lasserre2001GlobalOptimization, blekherman2012semidefinite, Laurent2008sumssquares}. In particular, these works were translated for non-commutative algebras with specific purpose for quantum information~\cite{Navascues2007BoundingtheSet, Navascues2008Acconvergenthierachy}. More broadly, moment-SOS techniques provide a common SDP language for polynomial optimization and quantum information\cite{Lasserre2001GlobalOptimization,Parrilo2003Semidefiniteprogramming,Laurent2008sumssquares}. In the separability problem, the DPS hierarchy can be viewed through its dual SOS formulation~\cite{Doherty2004CompleteFamily,Fang2020SumSquares}. A parallel while the NPA hierarchy translates the same moment-matrix idea to non-commutative polynomial algebras generated by quantum measurement operators \cite{Navascues2007BoundingtheSet,Navascues2008Acconvergenthierachy,Pironio2010NCPO}. In this sense, polynomial hierarchies connect separability relaxations, quantum correlation constraints, and SOS certificates \cite{Helton2002PositiveNC,Pironio2010NCPO,Fang2020SumSquares}. The present work uses this link in a complementary direction: rather than only approximating positivity by SOS relaxations, we start from explicit positive non-SOS biquadratic forms and convert them into PnCP maps \cite{Klep2017ManyMorePositiveMaps}.

Building on an important work establishing the respective volumes of positive and sum of square polynomials~\cite{blekherman2006thereare}, a recent work showed that there are many more Positive maps than Completely Positive~\cite{Klep2017ManyMorePositiveMaps}. Decisively, this last article presents an algorithm (KSMZ) to construct positive but non sum-of-square polynomials along with the mathematical transformation to turn them into PnCP maps. An inspiring implementation of this algorithm has been presented in a recent article~\cite{Bhardwaj2023PracticalApproach} but it suffers from numerical imprecisions leading to some false identification of entanglement. 

In this article, we present in section~\ref{sec:algoimpl} our implementation of the KSMZ article, that is accessible on the platform Github~\cite{GithubCode}. It is able to generate reliable positive non \SOS polynomials and the associated PnCP maps thanks to a new numerical test that guarantees positivity. Furthermore, we analyze in depth the properties of the algorithm in its original form, notably exploiting the duality~\cite{Fang2020SumSquares} between the Doherty-Parilo-Spedalieri (DPS) hierarchy~\cite{Doherty2004CompleteFamily} and the Sum-of-Squares (SOS) hierarchy~\cite{Lasserre2001GlobalOptimization} in section~\ref{sec:Maps}. The distinction it establishes between so-called real sum-of-squares (\RSOS) and complex sum-of-squares polynomials (\CSOS) enables us to classify the KSMZ maps. In particular, it allows us to demonstrate that they are capable of detecting PPT entangled states.

For some mathematical protocols, notably for Semi-Definite-Programming (SDP) optimization problems used for entanglement detection, it is necessary to work with the observables called Entanglement Witnesses (EW)~\cite{Chruscinski2014EntanglementWitnesses} rather than directly with the PnCP maps. Indeed, using the PnCP maps directly gives rise to non-convex constraints that are incompatible with SDP formulations. EWs are deduced from the maps through the famous Choi-Jamiołkowski isomorphism~\cite{Choi1975CompletelyPositive,Jamilokowski1972Lineartransformations}, explained in detail in~\cite{Min2013ChannelState}, and their study is dual to that of the entangled states since for every entangled state there exists an EW that is capable of detecting it~\cite{Horodecki1996SeparabilityMixed}. Geometrically, a witness can be interpreted as a hyperplane that separates some entangled states from the convex set of separable states~\cite{Dirkse2020WitnessingEntanglementGeometry}. Thus, the study of entanglement detection can benefit from numerous results originating from the field of matrix algebras~\cite{Stormer2013PositiveLinear}.

In section~\ref{sec:EntanglementDetection}, we present the numerical results of the entanglement detection led with EW emanating from the KSMZ maps. We show that it can detect NPT and PPT entangled states, benchmark it against other criteria, and analyse its limits for practical purposes in terms of noise robustness. We finally conclude in section~\ref{sec:conclusion}. We start by introducing the elements of entanglement and polynomial theory needed in section~\ref{sec:entanglement_theory}.

\section{Entanglement Theory}
\label{sec:entanglement_theory}

\subsection{Entanglement, PnCP maps and Witnesses}

In this paper we restrict ourselves to bipartite systems, \ie quantum states, $\rho$, acting on a product of two Hilbert spaces $\mathcal{H}_A \otimes \mathcal{H}_B$.
A state is called separable when it can be factorized, that is when there exist $\{ \lambda_i \} \in [0,1]$, $ \{ \rho_i^A, \rho_i^B \} \in \mathcal L(\mathcal{H})$ such that 

\begin{definition}[Separability and entanglement]
A state is called separable when it can be factorized, that is when there exist $\{ \lambda_i \} \in [0,1]$, $ \{ \rho_i^A, \rho_i^B \} \in \mathcal L(\mathcal{H})$ such that 

\begin{equation}
    \rho = \sum_i \lambda_i \rho_i^A \otimes \rho_i^B 
\end{equation} with $\sum_i \lambda_i = 1$. A state is entangled when it is not separable. 
\end{definition}
For the following we denote the set of separable states as $\mathcal{SEP}$
The task we pursue consists in developing a new entanglement detection tool. It is based on the theory of Positive-non-Completely-Positive maps. 

\begin{definition}[Positivity of a map]
     \begin{align}
    \Lambda:= \left\{
\begin{array}{lll}
        \mathcal L(\mathcal{H}_A) & \rightarrow & \mathcal L(\mathcal{H}_B) \\
                     \rho & \mapsto  & \Lambda(\rho) \end{array}
                     \right. \nonumber
    \label{eq:posmap}
\end{align}
$\Lambda$ is positive if for any $\rho \geq 0$, $\Lambda(\rho) \geq 0$.
\end{definition}

A map $\Lambda$ is called Completely Positive if, in any dimension $d$, $\mathds{1}_d \otimes \Lambda$ is a positive map. In the framework of quantum theory, \textit{completely positive, trace-preserving} (CPTP) maps correspond to physical evolutions. \textit{Positive but not Completely Positive} (PnCP) maps, on the other hand, do not represent natural transformations, but are solely mathematical tools that, in this context, are aimed at detecting entanglement. This is due to the following criterion.

\begin{criterion}[PnCP Criterion]
If $\Lambda$ is a Positive non-Completely Positive map (PnCP) and if $\rho$ is a quantum state,
\begin{equation}
    (\mathds{1}_d \otimes \Lambda) (\rho) \ngeq 0 \Rightarrow \rho \text{ entangled } \nonumber
\end{equation}
\label{crit:PnCP}
\end{criterion}
The fact that for any entangled state, there exists a PnCP map that is able to detect it~\cite{Chruscinski2014EntanglementWitnesses}, establishes that the study of PnCP maps is dual to that of entangled states. Since PnCP maps are not physically implementable though, at first glance they require a full tomography, that is the complete knowledge of a quantum state $\rho$ to detect entanglement. Fortunately, there exist families of implementable observables whose detection capacity are equivalent to PnCP maps and do not require a full tomography~\cite{Chruscinski2014EntanglementWitnesses}. They are called Entanglement Witnesses. An Entanglement Witness (EW) is a block-positive but not positive operator. The formal definition is given below. 
\begin{definition}[Positivity of an operator]
A Hermitian operator $W \in \mathcal{L}(H)$ is \textit{positive} iff
\begin{equation}
\langle \Psi \vert W \vert \Psi \rangle \ge 0 ,
\end{equation}
for all vectors $\Psi \in \mathcal{H}$.
\label{def:posop}
\end{definition}

\begin{definition}[Block-positivity]
A Hermitian operator $W \in \mathcal{L}(H)$ is \textit{block-positive} iff
\begin{equation}
\langle \psi \otimes \phi \vert W \vert \psi \otimes \phi \rangle \ge 0 ,
\end{equation}
for all product vectors $\psi \otimes \phi \in \mathcal{H}$.
\label{def:blockposop}
\end{definition}

Geometrically speaking, an EW defines an hyperplane within the space of quantum states, separating detected entangled states to all others \cite{BengtssonZyczkowski2006GeometryQuantum}, as depicted in Figure~\ref{fig:EWgeometry}. 

\begin{figure}
  \centering
    \includegraphics[width=0.7\linewidth]{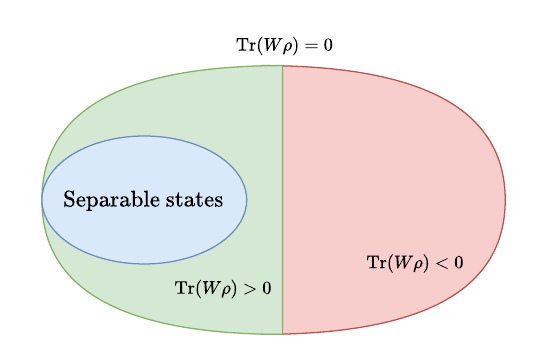}
    \caption{Pictorial view of the topology of separable and entangled states. The separable states form a convex set depicted in light blue, and the entangled states surround them in green and red.  The black line corresponds to the hyperplane for which $\Tr(W \rho) = 0$. The states are accordingly separated in two distinct areas, those for which the mean value of $W$ is positive and those for which it is negative. The entangled states that are detected by the witness are in red, those that are not are in green.}
    \label{fig:EWgeometry}
\end{figure}

An equivalent definition of an EW is that its expectation value is positive for all separable states, and negative for at least one entangled state. 

\begin{definition}[Entanglement Witness]
An operator $W \in \mathcal{L}(\Hil_A \otimes \Hil_B)$ is an \textit{entanglement witness (EW)} 
iff the following conditions hold:
\begin{itemize}
    \item $\Tr(W \rho_{\mathrm{sep}}) \ge 0$ for all separable states 
    $\rho_{\mathrm{sep}} \in \mathcal{S}_{\mathrm{sep}}$,
    \item there exists at least one entangled state 
    $\rho_{\mathrm{ent}} \notin \mathcal{S}_{\mathrm{sep}}$ such that
    \[
    \Tr(W \rho_{\mathrm{ent}}) < 0 .
    \]
\end{itemize}
\end{definition}
Interestingly, EWs can be robust to noise: although generally tailored to detect a targeted state, they can still detect a mixture of it with some noise, up to a certain threshold~\cite{Masse2020ImplementableHybrid,Bourennane2004ExperimentalDetection}.

The link between positive maps and operators is established by the celebrated Choi-Jamiołkowski isomorphism~\cite{Choi1975CompletelyPositive,Jamilokowski1972Lineartransformations}, for whose definition we first introduce some preliminary concepts, as follows.
Let $\Phi:~\mathcal{L}(\Hil_A)~\to~\mathcal{L}(\Hil_B)$ be a linear map and let $\{ \ket{i} \}_{i=1}^n$ be an orthonormal basis of $\Hil_A$, with $n = \dim \Hil_A$, and the maximally entangled vector
\[
\ket{\Omega} = \frac{1}{\sqrt{d}} \sum_{i=1}^n \ket{i} \otimes \ket{i} \in \Hil_A \otimes \Hil_A .
\]

\begin{definition}[Choi-Jamiołkowski isomorphism]

The \textit{Choi operator} (or Choi matrix) of $\Phi$ is defined by
\begin{equation}
C_\Phi = d(\Phi \otimes \id)(\ket{\Omega}\bra{\Omega})
= \sum_{i,j=1}^d \ket{i}\bra{j} \otimes \Phi(\ket{i}\bra{j}) .
\end{equation}
where $d = \dim(\Hil_A)$.
The correspondence $\Phi \mapsto C_\Phi$ is called the 
\textit{Choi-Jamiołkowski isomorphism}. 
\label{def:CHOI}
\end{definition}
This isomorphism maps Positive maps to block-positive operators, and Completely Positive (CP) maps to positive operators. Accordingly, the Choi operator of a map $\Phi$ is an EW if and only if $\Phi$ is a PnCP map. This is illustrated in Figure~\ref{fig:ChoiIsomorphism}.

\begin{figure*}[t!]
    \centering
    \includegraphics[width=1\linewidth]{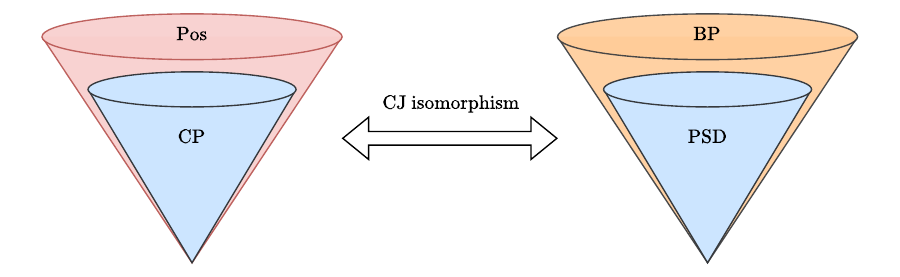}
    \caption{Choi-Jamiołkowski isomorphism between cones of maps and cones of operators. Positive maps form a cone $\Pos$ containing the completely positive cone $\CP$; under the Choi isomorphism, these correspond respectively to the cone of block-positive operators $\text{BP}$ and its positive semidefinite subcone $\text{PSD}$. Thus, completely positive maps give positive semidefinite Choi operators, while positive but not completely positive maps correspond to block-positive operators that are not positive semidefinite, \ie entanglement witnesses.}
\label{fig:ChoiIsomorphism}
\end{figure*}

\subsection{Positive polynomials and entanglement}

Central to this article is the link that exists between PnCP maps and some positive polynomials.

\subsubsection{Positive non \SOS Polynomials}

Certifying that a polynomial $p(x)\ge 0$ over $\mathbb{R}^n$ is a crucial task problem in polynomial optimization. A sufficient condition is the existence of a sum-of-squares (SOS) decomposition, presented in Definition~\ref{def:sos}.

\begin{definition}
Let $n \in \mathbb N$ and $p \in \mathbb R[x_1, \dots, x_n]$ be a real polynomial . We say that $p$ is a $\SOS$ polynomial if there exist real polynomials $q_1, \dots, q_r \in \mathbb R[x_1, \dots, x_n]$, $r\in \mathbb N$, such that 
\begin{equation}
    p(x)=\sum_{i=1}^rq_i(x)^2
\end{equation}
    \label{def:sos}
\end{definition}

 Any such representation certifies nonnegativity. The converse is false in general: not every nonnegative polynomial admits an SOS decomposition, as the Hilbert’s 17th problem presents~\cite{Hilbert1888}. However, as originally suggested by Hilbert, a result by Artin shows that 
 any nonnegative polynomial is a quotient of sum-of-squares, \textit{i.e.} if $p$ is nonnegative, then 
%
there exists a SOS multiplier $s(x)$ such that $s(x)\,p(x)=\sum_i q_i(x)^2$.
Sharper SOS representations also exist when one considers nonnegative polynomials over (compact) semi-algebraic sets~\cite{putinar1993positive}. Such properties have a high numerical interest, since checking if a polynomial is SOS amounts to solving a semidefinite program, while checking nonnegativity remains NP-hard. 
SOS representations are thus used to compute 
tractable and arbitrarily tight lower bounds for global polynomial optimization problems~\cite{Lasserre2001GlobalOptimization, Parrilo2003Semidefiniteprogramming}, 
which has a broad range of applications ranging from quantum mechanics to mathematical finance~\cite{Laurent2008sumssquares, lasserre08, blekherman2012semidefinite}.


\subsubsection{Positive polynomials and PnCP maps}

For our purposes, positive non-\SOS polynomials are interesting insofar as they correspond to PnCP maps and EWs.

To this end,~\cite{Klep2017ManyMorePositiveMaps} proposes the following isomorphism between the vector space linear maps from $S_n(\mathbb R)$ to $S_m(\mathbb R)$ and biforms of bidegree $(2,2)$ that are bihomogeneous polynomials of degree 2 in each set of variables :
\begin{align}
\Gamma : &\mathcal{L}(S_n(\mathbb{R}), S_m(\mathbb{R})) \to \mathbb{R}[x,y]_{2,2} \nonumber \\
&\Phi \mapsto p_\Phi(x,y) := \bra{y}\,\Phi\big(\ket{x}\bra{x}\big)\ket{y}. 
\label{eq:KSMZIsomorphism}
\end{align}
This is illustrated in Figure~\ref{fig:KSMZIsomorphism}.

   \begin{figure*}[t!]
        \centering  \includegraphics[width=1\linewidth]{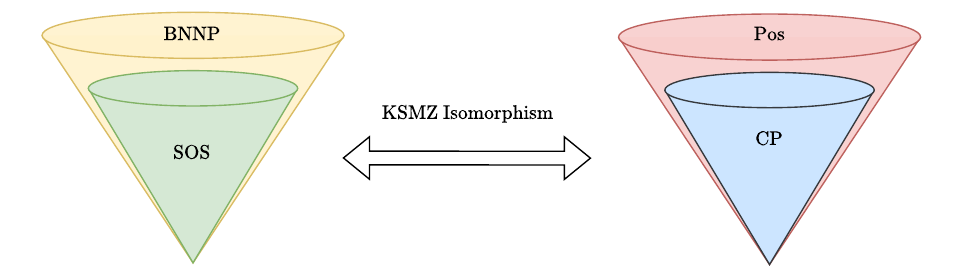}
        \caption{KSMZ isomorphism between nonnegative biquadratic forms and positive maps. The cone $\text{BNNP}$ of biquadratic nonnegative polynomials contains the subcone \SOS of sums of squares. Through the biquadratic-form/positive-map correspondence, $\text{BNNP}$ is identified with the cone $\text{Pos}$ of positive maps, while the \SOS subcone corresponds to completely positive maps $\text{CP}$. Hence, nonnegative biquadratic forms that are not sums of squares give rise to positive but not completely positive maps}
        \label{fig:KSMZIsomorphism}
    \end{figure*}


Accordingly, the positive non-\SOS polynomials that we consider here are
biforms of bidegree $(2,2)$ which are defined jsut below.

\begin{definition}[Bihomogeneous polynomial]
A polynomial $p(x,y)$ is bihomogeneous of bidegree $(d_1,d_2)$ if every
monomial of $p$ has total degree $d_1$ in the variables $x$
and total degree $d_2$ in the variables $y$.
\end{definition}

Positivity and complete positivity of $\Phi$ translate into nonnegativity and \SOS decompositions of the polynomial $p_\Phi$, which is formally proven in~\cite{Klep2017ManyMorePositiveMaps}. The resulting map is then extended to the whole space of real matrices, by putting the skew-symmetric part to zeros~\cite{Klep2017ManyMorePositiveMaps}. This decomposition is unique since 
\begin{equation}
M_n(\mathbb{R}) = S_n(\mathbb{R}) \oplus K_n(\mathbb{R}) \label{eq:direct_sum}
\end{equation}
where $K_n(\mathbb{R})$ is the skew-symmetric real matrices vector space. 

In order to define the complexification of this map, we need the following definition:
\begin{definition}
A linear map {$\Phi : \mathcal L(\mathcal H_A) \to \mathcal L(\mathcal H_B)$} between matrix spaces is said to be
$\dagger$-linear if
\[
\Phi(A^\dagger) = \Phi(A)^\dagger
\quad \text{for all } A \in \mathcal H_A.
\]
\end{definition}

The maps acting on the real matrices are extended to a $\dagger$-linear map $\Phi_c$ belonging to $\mathcal{L}\big(M_n(\mathbb{C}), M_m(\mathbb{C})\big)$ through the following construction
\[
\Phi_c (A+iB) = \Phi(A) + i \Phi(B)
\]
for all $A$ and $B$ are real matrices from $M_n(\mathbb{R})$. All of this is explained in~\cite{Klep2017ManyMorePositiveMaps} alongside with the proof that the extended map are still PnCPs. This isomorphism between positive non \SOS polynomials and PnCP maps is summed-up in Figure~\ref{fig:KSMZIsomorphism}.

\subsubsection{Positive polynomials and Entanglement Witness}

A direct link between polynomial forms and witnesses has been acknowledged as early as 2004, as can be seen in~\cite{Doherty2004CompleteFamily}. From an entanglement witness $W$ it is possible to construct a corresponding polynomial, considering 
\begin{align}
E_W(x,y) &= \langle xy | W | xy \rangle
= \operatorname{Tr}[|x\rangle\langle x| \otimes |y\rangle\langle y|\, W] \nonumber \\
&= \sum_{ijkl} W_{ijkl}\, x_i^{*} y_j^{*} x_k y_l .
\label{eq:witness_polynomial}
\end{align}
where $|x\rangle = \sum_i x_i |i\rangle$, $|y\rangle = \sum_j y_j |j\rangle$, for some bases
$\{|i\rangle\}$ and $\{|j\rangle\}$ of $\mathcal{H}_A$ and $\mathcal{H}_B$. This mapping is also an isomorphism, that we named after the authors of~\cite{Doherty2004CompleteFamily}, the DPS isomorphism. This is illustrated in Figure~\ref{fig:DPS_iso}. Links between the three isomorphisms will be presented clearly in subsection~\ref{subsubsec:decomposabilitytriple}.

\begin{figure*}
    \centering
    \includegraphics[width=1\linewidth]{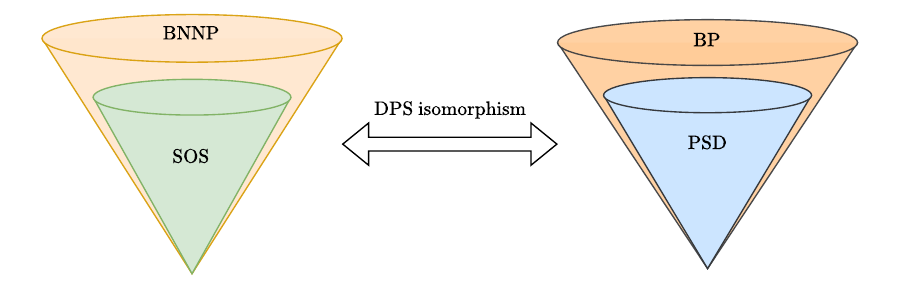}
    \caption{Schematic representation of the DPS correspondence between biquadratic forms and Hermitian operators. The cone $\text{BNNP}$ of biquadratic nonnegative polynomials is mapped to the cone $\text{BP}$ of block-positive operators, since nonnegativity is tested on product vectors. The subcone \SOS corresponds to positive semidefinite operators $\text{PSD}$. Consequently, a biquadratic form that is nonnegative but not \SOS corresponds to a block-positive but non-positive semidefinite operator, namely an entanglement witness.}
    \label{fig:DPS_iso}
\end{figure*}
\subsection{Decomposability and Hermitian polynomials}

\subsubsection{Maps decomposability}

A crucial question for our purposes, but absent from the study
of~\cite{Klep2017ManyMorePositiveMaps}, concerns the entanglement detection power of the generated maps.

Traditionally, entangled states are classified according to whether they can be detected by the partial transpose. More precisely, one distinguishes states that remain positive under the map $\id \otimes T$ (PPT states) from those that become non-positive (NPT states). It is the simplest separability criterion from a computational perspective. Even though its computational power decreases as the dimension of the system increases, it is a highly powerful criterion in low dimensions, and particularly in qubit-qubit or qubit-qutrit system where it is necessary and sufficient. To translate these properties at the level of maps, we need the following definition:

\begin{definition}[Decomposable maps]
        A positive map $\Lambda : \mathcal L(\mathcal H)\longrightarrow \mathcal L(\mathcal H)$ is said to be decomposable if it can be written as 
        \begin{equation*}
            \Lambda=CP_1 + CP_2 \circ \Gamma,
        \end{equation*}
        where $CP_1, CP_2$ are completely positive maps and $\Gamma$ is the partial transpose map. \label{def:decomposable}
\end{definition}

Decomposable maps form the class of maps whose entanglement detection capacities are limited to that of the Partial Transpose - they cannot detect PPT entangled states. This is elegantly shown in~\cite{Kye2013FacialStructures} by noticing that the cone of PPT states is the dual of the cone of decomposable maps. 


Any map that is not decomposable is said to be indecomposable. A fundamental result on positive maps was proven by \cite{Woronowicz1976PositiveMapsLowDimensional}:

\begin{theorem}[Decomposition of positive maps in low dimension]
\label{thm:decomposition_positive_maps_low_dimension}
Let us consider a linear map $\Lambda$ acting on a product of Hilbert spaces $\mathcal{H}_A \otimes \mathcal{H}_B$ and denote $d = \dim(\mathcal{H}_A \otimes \mathcal{H}_B) = \dim(\mathcal{H}_A) \times \dim(\mathcal{H}_B)$. If $d \leq 6$, then $\Lambda$ is decomposable~\cite{Stormer2013PositiveLinear}.
\end{theorem}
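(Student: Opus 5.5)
The plan is to prove the dual statement about states and then transfer it to maps through the Choi--Jamiołkowski isomorphism (Definition~\ref{def:CHOI}). The statement should be read as concerning a \emph{positive} map $\Lambda:\mathcal L(\mathbb C^m)\to\mathcal L(\mathbb C^n)$ with $mn\le 6$, that is $(m,n)\in\{(2,2),(2,3),(3,2)\}$ up to trivial cases. Its Choi operator $C_\Lambda$ is then block-positive on $\mathbb C^m\otimes\mathbb C^n$.

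\textbf{Step 1: reduce to a separability statement.} As recalled after Definition~\ref{def:decomposable}, the cone of decomposable maps is dual to the cone of PPT states. Likewise, the cone of positive maps, equivalently of block-positive operators, is dual to the cone $\mathcal{SEP}$ of separable states. Both cones are closed and convex, so the bipolar theorem shows that
\[
\text{Pos}=\text{Dec} \iff \mathcal{SEP}=\text{PPT}.
\]
Concretely, suppose $\Lambda$ were positive but not decomposable. Hahn--Banach separation of $C_\Lambda$ from the closed convex cone of decomposable Choi operators, which is $\text{PSD}+\text{PSD}^{\Gamma}$, gives a Hermitian $\rho$ with two properties: $\Tr(\rho X)\ge 0$ on that cone, and $\Tr(\rho\, C_\Lambda)<0$. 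The first property forces $\rho\ge0$ and $\rho^{\Gamma}\ge 0$, so $\rho$ is PPT. The second property says $\rho$ is not separable, because $C_\Lambda$ is block-positive. It therefore suffices to show that every PPT state on $2\times2$ or $2\times3$ is separable. This is the Horodecki theorem, and proving it in Woronowicz's form gives the result directly.

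\textbf{Step 2: show that every positive map in these dimensions is decomposable.} The core is Størmer's theorem for $M_2\to M_2$ and Woronowicz's theorem for $M_2\to M_3$ and $M_3\to M_2$. I would first treat the qubit case. A positive unital map on $M_2$ is, after normalization, an affine map of the Bloch ball into itself. Using the polar decomposition of $SL(2,\mathbb C)$ acting by $X\mapsto AXA^\dagger$, one reduces to a diagonal real form $\mathrm{diag}(1,\lambda_1,\lambda_2,\lambda_3)$ in the Pauli basis, up to CP congruences. One then checks explicitly that this is a sum of a CP map and a CP map composed with transposition. The check rests on the observation that transposition flips the sign of $\lambda_2$, which lets one move into the tetrahedron of CP parameters.

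\textbf{Step 3: the $2\times3$ case, which I expect to be the main obstacle.} For $M_2\to M_3$ I would follow Woronowicz, proceeding by extremality. First decompose the positive map into extreme rays of the positive cone. Then show that each extreme positive map in this dimension is either of the form $X\mapsto AXA^\dagger$ or of the form $X\mapsto AX^TA^\dagger$. The key ingredient is a dimension count together with the fact that every block-positive operator on $\mathbb C^2\otimes\mathbb C^3$ whose polynomial $E_W(x,y)$ from~\eqref{eq:witness_polynomial} vanishes on a large enough set of product vectors must be an elementary positive or elementary copositive operator. In polynomial language, this matches the fact that nonnegative Hermitian biquadratic forms in $2+3$ variables are sums of Hermitian squares of bilinear forms, up to partial conjugation. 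Controlling the zero set of the form is the delicate part, and for the full details I would lean on~\cite{Woronowicz1976PositiveMapsLowDimensional,Stormer2013PositiveLinear}. The case $M_3\to M_2$ follows by taking duals, that is, adjoint maps with respect to the Hilbert--Schmidt product.
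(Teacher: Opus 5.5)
Your plan matches the paper, which gives no argument of its own and simply cites the St\o{}rmer--Woronowicz theorem \cite{Woronowicz1976PositiveMapsLowDimensional,Stormer2013PositiveLinear}. You rely on the same references for the decisive $2\times 3$ case, since your zero-set ``key ingredient'' is essentially a restatement of Woronowicz's result. Your $2\times 2$ sketch is a sound addition, because the cube $[-1,1]^3$ is indeed the convex hull of the CP tetrahedron and its reflection under $\lambda_2\mapsto-\lambda_2$. It needs two fixes: you must use the full Lorentz singular-value normal form, and you must finish with a density argument (the decomposable cone is closed) to cover non-invertible $\Lambda(\mathds{1})$ and the non-diagonalizable normal forms.

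Step~1 does no work, since it only trades the statement for its exact dual. Citing the Horodecki theorem there would be circular, because its standard proof is deduced from this very statement.
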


 This is one way to understand why the PPT criterion is sufficient and necessary to detect entanglement for qubit-qubit and qubit-qutrit systems~\cite{Horodecki1996SeparabilityMixed}. 

\subsubsection{Classification of \SOS Hermitian Polynomials}

It is useful to see how these concepts are transcribed at the level of polynomials.  For real polynomials, there is only one notion of \SOS.
 
 Quantum mechanics changes this picture. States, observables, and optimization problems in quantum information are naturally expressed in terms of polynomials in complex variables $z \in \mathbb C^n$ and their conjugates $\bar z$ \cite{Doherty2004CompleteFamily,Fang2020SumSquares}. These are Hermitian polynomials, real valued functions of both $z$ and $\bar z$. When one asks what a sum of squares certificate should mean for such objects, two genuinely different answers emerge, depending on which functions are allowed as summands: Real \SOS (RSOS) which is a square of Hermitian polynomials, \ie function of both $z$ and $\bar z$, and complex \SOS (\CSOS) which is a modulus squares of holomorphic polynomials in variable $z$ only \cite{Fang2020SumSquares}. Both reduce to the classical notion when restricted to real variables, yet they are not equivalent in the complex setting. 
 \begin{definition}[Bi-Hermitian polynomial]
     A bi-hermitian polynomial $p(z,\bar z,w,\bar w)$ is a polynomial in two complex vectors $z\in \mathbb C^n$, $w\in \mathbb C^m$ and their complex conjugates. The general form of such polynomial is 
     \begin{equation}
         p(z,\bar z,w,\bar w)=\sum_{(u,v,t,s)}p_{uv,st}z^u\bar z^v w^s\bar w^t
     \end{equation}
     where the coefficients satisfy $p_{uv,st}=\overline{p_{vu,ts}}$, the symmetry conditions that enforces real values. We say that $p$ is nonnegative if $p(z,\bar z,w,\bar w) \geq 0 $ $\forall(z,w)\in \mathbb  C^n \times \mathbb C^m$.
 \end{definition}

 The polynomial associated to an entanglement witness $W$, namely $E_W(z,w)=\bra{zw}W\ket{zw}$, is a canonical example of a bi-Hermitian polynomial: it is nonnegative for all product vectors if and only if W is block positive. Certifying the non-negativity of $E_W$ is therefore directly tied to the entanglement detection problem, and it is precisely here that the two notions of \SOS enter \cite{Doherty2004CompleteFamily,Fang2020SumSquares}.

 \begin{definition}(Real Sum of Squares)
     A bi-Hermitian polynomial $p(z,\bar z,w,\bar w)$ is \RSOS if there exist bi-Hermitian polynomials $g_1,...,g_k$ such that 
     \begin{equation}
         p(z,\bar z,w ,\bar w)=\sum_ig_i(z,\bar z,w,\bar w)^2. 
     \end{equation}
     \label{def:RSOS}
 \end{definition}

\begin{definition}(Complex Sum of Squares)
A bi-hermitian polynomial $p(z,\bar z,w,\bar w)$ is \CSOS if there exist holomorphic polynomials $q_1,...,q_k$, polynomial in $(z,w)$ alone, not in $(\bar z,w)$, such that 
\begin{equation}
    p(z,\bar z,w,\bar w)=\sum_i |q_i(z,w)|^2
\end{equation}
\label{def:CSOS}
\end{definition} 

The correspondence between positive maps and bi-Hermitian polynomials provides the natural framework in which the distinction between \CSOS, \RSOS, and non-\RSOS becomes meaningful \cite{Fang2020SumSquares}. In this setting, algebraic certificates on the polynomial side mirror exactly hierarchy on the map side. More precisely, if $\Gamma : M_n(\mathbb C) \to M_m(\mathbb C)$ is block positive and
\[
p_\Gamma(z,w)=\bra{w}\Gamma\big(\ket{z}\bra{z}\big)\ket{w}
\]
is the associated bi-Hermitian polynomial, then a \CSOS decomposition corresponds to the case $\Gamma$ completely positive. Allowing \RSOS enlarges this class precisely to decomposable map as we defined in \ref{def:decomposable}. Accordingly, failure of an \RSOS decomposition is the polynomial signature of indecomposability.

This correspondence is illustrated by three canonical examples. First, for the identity map $\id:M_n(\mathbb C)\to M_m(\mathbb C)$. Its polynomial is given by
\[
p_{\id}(z,w)=|\bra{w}z\rangle|^2
\]
where $\ket{z}=\sum_iz_i\ket{i}$ and $\ket{w}=\sum_jw_j\ket{j}$ with $z_i \in \mathbb C$, $w_j \in \mathbb C$ $\forall i \in \{1,...,n\}$, $\forall j\in \{1,...,m\}$. Upon this basis decomposition we get 
\[
p_{\id}(z,w)=\sum_i|w_iz_i|^2
\]
This is a \CSOS polynomial, since it is the modulus square of a holomorphic polynomial.

Second, consider the transposition map \(T(X)=X^T\).
\[
p_T(z,w)=\bra{w}T\big(\ket{z}\bra{z})\ket{w}
\]
And by expanding $\ket{z}$ and $\ket{w}$ in the computational basis we get 
\begin{equation*}
    p_{T}(z,\bar z,w,\bar w)=\sum_{i,j,k,l}\bar z_i\bar w_k w_lz_j\braket{i|l}\braket{k|j}
\end{equation*}
this impose $i=l$ and $k=j$, $p_T$ become 
\begin{equation*}
    p_{T}(z,\bar z,w,\bar w)=\sum_{i,j}\bar z_iw_iz_j\bar w_j
\end{equation*}
by factorizing it yields to 
\begin{equation*}
    p_{T}(z,\bar z,w,\bar w)=\bigg(\sum_i\bar z_iw_i\bigg)\bigg(\sum_jz_j\bar w_j\bigg)=\bigg|\sum_i \bar z_iw_i\bigg|^2
\end{equation*}
By defining $L(z,\bar z,w,\bar w)=\sum_i\bar z_iw_i$ and 
\begin{equation*}
    g_1=\frac{L+\bar L}{2}; \quad g_2=\frac{L-\bar L}{2i}
\end{equation*}
which are both bi-Hermitian polynomial in $(z,\bar z,w,\bar w)$ and $|L|^2=g_1^2+g_2^2$, thus $p_T$ is \RSOS, since it is expressed as the sum of a squared of Hermitian polynomials
\[
p_{T}(z,\bar z,w,\bar w)=g_1(z,\bar z, w, \bar w)^2+g_2(z,\bar z, w, \bar w)^2
\]
but it is not \CSOS, because the Transpose map is not CP. It therefore provides the basic example of an \RSOS polynomial that is not \CSOS, and corresponds exactly to a decomposable map.

Finally, for the Choi map on $M_3(\mathbb C)$,
\[
\Phi_{Ch}(X)=
\begin{pmatrix}
x_{11}+x_{33} & -x_{12} & -x_{13}\\
-x_{21} & x_{22}+x_{11} & -x_{23}\\
-x_{31} & -x_{32} & x_{33}+x_{22}
\end{pmatrix},
\]
Which is is indecomposable, and the polynomial
\[
p_{\Phi_{Ch}}(z,\bar z, w,\bar w)=\bra{w}\Phi_{Ch}\big(\ket{z}\bra{z}\big)\ket{w}
\]
with $\ket{z}\bra{z}=(z_i\bar z_j)_{i,j}$ so 
\begin{equation*}
    \Phi_{\mathrm{Ch}}(\ket{z}\bra{z})=
\begin{pmatrix}
|z_1|^2+|z_3|^2 & -\bar z_1z_2 & -\bar z_1z_3\\
-\bar z_2z_1 & |z_2|^2+|z_1|^2 & -\bar z_{2}z_3\\
-\bar z_3z_1 & -\bar z_{3}z_2 & |z_3|^2+|z_2|^2
\end{pmatrix},
\end{equation*}
Then, the biquadratic form associated to the Choi map 

\begin{align}
    p_{\Phi_{Ch}}(z,\bar z,w,\bar w)&=(|z_1|^2+|z_3|^2)|w_1|^2   \label{eq:choi_polynomial}
    \\ 
    &+(|z_2|^2+|z_1|^2)|w_2|^2 \nonumber \\ 
    &+(|z_3|^2+|z_2|^2)|w_3|^2  \nonumber \\
    & -\bar z_2z_1\bar w_2w_1 -\bar z_1z_2\bar
    w_1w_2 \nonumber \\
    &-\bar z_3z_1\bar w_3 w_1-\bar z_1z_3\bar w_1 w_3 \nonumber \\ &-  \bar z_2z_3\bar w_2w_3-\bar z_3 z_2 \bar w_3 w_2 \nonumber
\end{align}

is nonnegative on product vectors but admits no \RSOS decomposition. This gives a canonical example of a non-\RSOS bi-Hermitian polynomial.

Taken together, these examples make clear that the passage from positive maps to bi-Hermitian polynomials preserves the relevant convex and algebraic structure as we can see in Figure~\ref{fig:TripleIsomorphism}. Complete positivity appears as a \CSOS certificate, decomposability as the broader \RSOS property, and indecomposability as the failure of any \RSOS representation despite global nonnegativity on product vectors. The identity map, the transpose map, and the three-dimensional Choi map therefore illustrate the three basic regimes of this correspondence.

\subsubsection{Decomposability in triple aspects}
\label{subsubsec:decomposabilitytriple}

Thus, bihomogeneous polynomials, maps and operators can be put into correspondence two by two through three different isomorphism:
\begin{itemize}
    \item The KSMZ isomorphism between polynomials and maps \cite{Klep2017ManyMorePositiveMaps}, following:
    \begin{equation}
 p_\Phi(x,y) = y^* \Phi(xx^*) y. \nonumber \label{}
\end{equation}
    \item The Choi isomorphism between maps and  \cite{Choi1975CompletelyPositive}, following:
    \begin{equation}
C_\Phi = \sum_{i,j=1}^n \ket{i}\bra{j} \otimes \Phi(\ket{i}\bra{j}) . \nonumber
\end{equation}
    \item The DPS isomorphism between operators and polynomials \cite{Doherty2004CompleteFamily},following:
   \begin{align}
       E_W(x,y) &= \langle xy | W | xy  \rangle
   \end{align}
A sketch that sums it up is given in Figure~\ref{fig:TripleIsomorphism}.
\end{itemize}

\begin{figure*}[t!]
    \centering
    \includegraphics[width=1\linewidth]{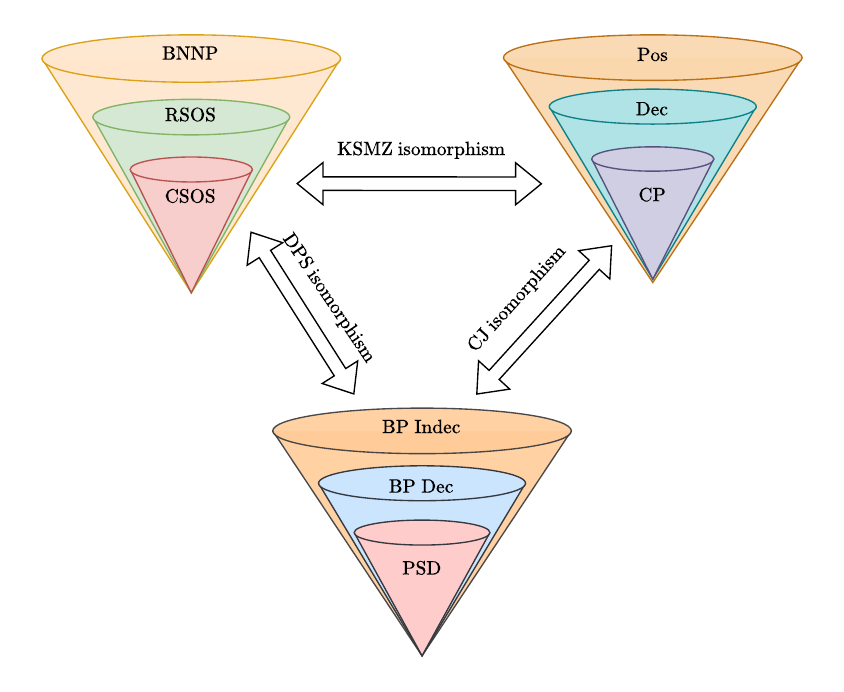}
    \caption{Triple correspondence between Hermitian biquadratic forms, positive maps, and block-positive operators. The cone $\text{BNNP}$ of biquadratic nonnegative polynomials corresponds to the cone $\text{Pos}$ of positive maps and, through the Choi isomorphism, to the cone $\text{BP}$ of block-positive operators. The subcone \CSOS corresponds to completely positive maps and positive semidefinite Choi operators, while the intermediate cone \RSOS corresponds to decomposable maps and decomposable block-positive witnesses. The \RSOS/\CSOS spearation is known on the operator polynomial side from \cite{Fang2020SumSquares}, where \CSOS corresponds to $\text{PSD}$ operator and \RSOS corresponds to decomposable block-positive operators. The central observation of this work is that this separation extends, through Choi-Jamiołkowski and KSMZ isomorphism, to a classification for complexified the positive maps generated by the KSMZ construction.}
    \label{fig:TripleIsomorphism}
\end{figure*}

We now show that the DPS algorithm is indeed equal to the composition of the KSMZ Isomorphism with the Choi-Jamiłkowski.
\begin{proposition}[Compatibility of the three correspondences] 
\label{prop:compatiblity_3iso}
Let 
\begin{equation*}
    \Phi:\mathcal L(\mathcal H_A)\longrightarrow \mathcal L(\mathcal H_B)
\end{equation*}
be a linear map, and let $C_\Phi\in\mathcal L(\mathcal H_A\otimes\mathcal H_B)$ be its Choi operator,
\begin{equation*}
    C_\Phi
    =
    \sum_{i,j}
    |i\rangle\langle j|\otimes \Phi(|i\rangle\langle j|).
\end{equation*}
Then, the Hermitian polynomial associated with $C_\Phi$ through the DPS correspondence coincides with the biquadratic form associated with $\Phi$ through the KSMZ correspondence. Namely, for all $x\in\mathcal H_A$ and $y\in\mathcal H_B$,
\begin{equation}
    E_{C_\Phi}(x,y)
    :=
    \langle x\otimes y|C_\Phi|x\otimes y\rangle
    =
    \langle y|\Phi(|x\rangle\langle x|)|y\rangle
    =
    p_\Phi(x,y).
\end{equation}
Equivalently, the DPS correspondence is the composition of the Choi-Jamiołkowski correspondence with the KSMZ correspondence.
\end{proposition}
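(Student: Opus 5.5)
The plan is a direct computation. Expand $x$ and $y$ in the fixed orthonormal bases used to define the Choi operator, then use linearity of $\Phi$ to match the two sides term by term.

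Write $|x\rangle=\sum_k x_k|k\rangle$ with $x_k=\langle k|x\rangle$. Then $|x\rangle\langle x|=\sum_{i,j}x_i\bar x_j\,|i\rangle\langle j|$, and linearity of $\Phi$ gives
\begin{equation*}
p_\Phi(x,y)=\langle y|\Phi(|x\rangle\langle x|)|y\rangle=\sum_{i,j}x_i\bar x_j\,\langle y|\Phi(|i\rangle\langle j|)|y\rangle .
\end{equation*}

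On the other side, insert the definition of $C_\Phi$ and use $\langle x\otimes y|(A\otimes B)|x\otimes y\rangle=\langle x|A|x\rangle\langle y|B|y\rangle$. This gives
\begin{equation*}
E_{C_\Phi}(x,y)=\sum_{i,j}\langle x|i\rangle\langle j|x\rangle\,\langle y|\Phi(|i\rangle\langle j|)|y\rangle=\sum_{i,j}\bar x_i x_j\,\langle y|\Phi(|i\rangle\langle j|)|y\rangle .
\end{equation*}

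Here the expected obstacle appears: a conjugation convention. The two sums differ by exchanging $x$ with $\bar x$, that is, by $\langle x\otimes y|$ versus $\langle\bar x\otimes y|$. This is the standard fact that the Choi matrix paired with $|x\rangle$ reproduces $\Phi(|\bar x\rangle\langle\bar x|)$. I would resolve it in one of two ways.

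The first way uses the fact that the KSMZ correspondence is set up for real vectors, $x\in\mathbb R^n$ with $p_\Phi\in\mathbb R[x,y]_{2,2}$, which is the setting of the isomorphism~\eqref{eq:KSMZIsomorphism}. On real vectors $\bar x=x$, so the two sums coincide identically.

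The second way covers the complexified $\dagger$-linear extension $\Phi_c$. There I would read the DPS polynomial with the variable convention of~\eqref{eq:witness_polynomial}, i.e.\ identify the $\mathcal H_A$ variable through the basis-conjugate vector. Equivalently, $E_{C_\Phi}(x,y)=p_\Phi(\bar x,y)$, and this relabeling $x\mapsto\bar x$ is a bijection of $\mathcal H_A$.

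The relabeling preserves nonnegativity on product vectors. It also preserves the \CSOS{} and \RSOS{} classes, because conjugating the $x$ variables permutes holomorphic and antiholomorphic monomials in $x$ consistently. So the identification of cones in Figure~\ref{fig:TripleIsomorphism} holds either way.

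Finally, both $\Phi\mapsto C_\Phi$ and $W\mapsto E_W$ are linear bijections, and $\Phi\mapsto p_\Phi$ is linear. The displayed identity therefore shows $\text{DPS}\circ\text{Choi}=\text{KSMZ}$, which is the "equivalently" clause of the statement.
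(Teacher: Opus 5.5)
Your expansion is the same direct computation the paper performs, and your diagnosis is correct. It actually exposes an error in the paper's own proof.

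\textbf{Where the paper's proof breaks.} The paper reaches $\sum_{i,j}x_i^*x_j\,\langle y|\Phi(|i\rangle\langle j|)|y\rangle$ exactly as you do. It then asserts $\sum_{i,j}x_i^*x_j\,|i\rangle\langle j|=|x\rangle\langle x|$, but the left-hand side is $|\bar x\rangle\langle\bar x|=(|x\rangle\langle x|)^T$. What holds in general is $E_{C_\Phi}(x,y)=p_\Phi(\bar x,y)=p_{\Phi\circ T}(x,y)$. The displayed identity therefore fails for complex $x$. Take $\Phi=\mathrm{id}$ on $M_2(\mathbb C)$ and $x=y=(1,i)$: then $E_{C_\Phi}(x,y)=|x_1y_1+x_2y_2|^2=0$, while $p_\Phi(x,y)=|\langle y|x\rangle|^2=4$.

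\textbf{Your first resolution.} Restricting to real vectors, the setting of~\eqref{eq:KSMZIsomorphism}, is exactly the range in which the paper's argument is valid. One small correction: the bijectivity you invoke at the end fails on real vectors. For instance, $W$ and $W^{T_B}$ give the same real form. It is not needed for the composition identity, though.

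\textbf{Your second resolution.} This does not prove the displayed identity; it replaces it with a corrected one, and you should present it that way. Equation~\eqref{eq:witness_polynomial} is simply $\langle xy|W|xy\rangle$ and contains no conjugated convention.

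Your side claim that $x\mapsto\bar x$ preserves CSOS is false. The relabeling sends $\sum_k|q_k(z,w)|^2$ with $q_k$ holomorphic to $\sum_k|q_k(\bar z,w)|^2$, which in general is not CSOS in the sense of Definition~\ref{def:CSOS}. For example, in the KSMZ convention $p_T(z,w)=|\sum_i z_iw_i|^2$ is CSOS although $T$ is not CP. Its relabeling $E_{C_T}(z,w)=|\sum_i\bar z_iw_i|^2$, the DPS form of the swap operator, is not CSOS.

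Nonnegativity and RSOS are preserved, because the relabeling is the real linear substitution $(\mathrm{Re}\,x,\mathrm{Im}\,x)\mapsto(\mathrm{Re}\,x,-\mathrm{Im}\,x)$. CSOS$\leftrightarrow$CP, however, holds only on the DPS side, and the relabeling is precisely what transports it to the KSMZ side.

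\textbf{A cleaner rescue.} Since $E_{C_\Phi}=p_{\Phi\circ T}$, the displayed identity holds for all complex $x,y$ if and only if $\Phi\circ T=\Phi$. The converse direction follows by polarization, since rank-one projectors span $M_n(\mathbb C)$. This condition is satisfied by the paper's maps $\Gamma_{\mathbb C}$, which depend only on the symmetric parts of the real and imaginary parts of their argument. So the identity does hold for all complex vectors for the maps the paper actually uses.
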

\begin{proof}
Let us consider $|x\rangle = \sum_i x_i |i\rangle$, $|y\rangle = \sum_j y_j |j\rangle$, 
for some orthonormal bases $\{|i\rangle\}$ and $\{|j\rangle\}$ of 
$\mathcal{H}_A$ and $\mathcal{H}_B$. We apply the DPS expression to the Choi operator
$C_\phi$ of a given PnCP map $\phi$:
\[
E_{C_\phi}(x,y) = \langle xy | C_\phi | xy \rangle .
\]

Using the definition of the Choi operator,
\[
C_\phi = \sum_{i,j} |i\rangle\langle j| \otimes \phi(|i\rangle\langle j|),
\]
we obtain
\[
\langle xy | C_\phi | xy \rangle
=
\sum_{i,j}
\langle x|\langle y|
\left(|i\rangle\langle j| \otimes \phi(|i\rangle\langle j|)\right)
|x\rangle|y\rangle .
\]

Using tensor product factorization,
\[
=
\sum_{i,j}
\langle x|i\rangle
\langle j|x\rangle
\,
\langle y| \phi(|i\rangle\langle j|) |y\rangle .
\]

Since $|x\rangle = \sum_k x_k |k\rangle$, we have
\[
\langle x|i\rangle = x_i^*, 
\qquad
\langle j|x\rangle = x_j,
\]
thus
\[
E_{C_\phi}(x,y) =
\sum_{i,j}
x_i^* x_j
\,
\langle y| \phi(|i\rangle\langle j|) |y\rangle .
\]

We can rewrite the sum as
\[
E_{C_\phi}(x,y) =
\left\langle y \middle|
\phi\!\left(
\sum_{i,j} x_i^* x_j |i\rangle\langle j|
\right)
\middle| y
\right\rangle .
\]

Noting that
\[
\sum_{i,j} x_i^* x_j |i\rangle\langle j|
=
|x\rangle\langle x|
=
xx^*,
\]
we obtain
\[
\langle xy | C_\phi | xy \rangle
=
\langle y | \phi(xx^*) | y \rangle.
\]

Finally,
\[
\langle y | \phi(xx^*) | y \rangle
=
y^* \, \phi(xx^*) \, y
=
p_\phi(x,y),
\]
which is exactly the KSMZ polynomial.

Therefore, the DPS correspondence coincides with the composition of the
KSMZ isomorphism with the Choi–Jamiołkowski isomorphism.
\end{proof}


\section{Algorithm Implementation}
\label{sec:algoimpl}

The present work builds on the algorithm presented in~\cite{Klep2017ManyMorePositiveMaps} that is able to generate positive \nonSOS polynomials, before turning them into PnCP maps through the isomorphism previously described in \ref{eq:KSMZIsomorphism}. We implemented this algorithm in MATLAB and made it available at the public repository \url{https://github.com/Paulcat/Entanglement-Moments}~\cite{GithubCode}. In this part, we explain theoretically how this construction works, we present the manner with which we implemented it and finally we present an original numerical test to guarantee positivity of the maps constructed.

\subsection{Construction of positive \nonSOS polynomials useful for entanglement detection}

The construction of positive polynomials that are \SOS goes back to Hilbert, who used special configurations of zeros together with the Cayley-Bacharach theorem to prove existence; this approach was later made explicit and constructive by Reznick~\cite{Reznick2000}. A second line of work provided concrete examples (such as the Motzkin, Robinson, and Choi-Lam polynomials), where nonnegativity is established via algebraic inequalities, while the sparse monomial structure prevents any \SOS decomposition~\cite{Motzkin1967,ChoiLam1977,Robinson1973}. More recently, convex-analytic methods interpret the \SOS cone as a cone strictly contained in the Positive cone, and construct separating linear functionals showing that a given polynomial is not \SOS~\cite{Blekherman2012,Laurent2008sumssquares}. Finally, modern approaches from convex algebraic geometry exploit the structure of extreme rays and the algebraic boundary of these cones (e.g.\ via symmetroids or Gram matrix rank obstructions) to systematically produce Positive \nonSOS polynomials~\cite{BlekhermanRanestadSturmfels2013}. 

 However, not all positive non-\SOS polynomials are physically relevant for bipartite entanglement detection, as this structure imposes strict algebraic constraints. For a system of dimensions $d_A \times d_B$, we must restrict our attention to \textit{bihomogeneous} polynomials of degree $(2,2)$, which are functions of two vector-valued variables $\mathbf{a} \in \mathbb{R}^{d_A}$ and $\mathbf{b} \in \mathbb{R}^{d_B}$ representing the local degrees of freedom. Furthermore, non-homogeneous terms violate the quadratic scaling required for expectation values of linear operators, $\langle a, b | W | a, b \rangle$, rendering them impossible to map to standard quantum observables. It is necessary to distinguish between general \textit{quartic forms} (homogeneous polynomials of total degree four) and the biquadratic forms required in this case. While a general quartic polynomial $P(\mathbf{x})$ may be globally nonnegative and not \SOS, imposing a bipartite structure $P(\mathbf{a}, \mathbf{b})$—that is, interpreting it as a biquadratic form with respect to a fixed tensor decomposition—can lead to a loss of positivity when restricted to product states. In other words, although global nonnegativity is preserved, the resulting form may fail to be block positive with respect to the chosen bipartition~\cite{SkowronekZyczkowski2009}.

Nevertheless, these researches have started to be used to address the separability problem. Recently,~\cite{Ohst2025RevealingHidden} used PnCP maps created out of positive \nonSOS polynomials to witness non-classicality in the field of optics. More directly related to our work, a construction of PnCP maps~\cite{Buckley2020}, building on results regarding the number of zeros of positive semi-definite biquadratic form~\cite{quarez2015} has been adapted to detect PPT entangled states not detected either by the Choi map~\cite{buckley2022newexamplesentangled}. The KSMZ algorithm is one of those techniques. We now explain how its polynomials are created

\subsection{Theoretical description of the KSMZ algorithm}
\label{sec:descriptionKSMZ}

The algorithm  introduced in~\cite{Klep2017ManyMorePositiveMaps} presents a construction of nonnegative \nonSOS biquadratic forms $p(x,y)$ in two steps: 
\begin{enumerate}
    \item Build a quadratic form $q \in \big(\mathbb R[z]/I_{n,m}\big)_2$ that is nonnegative on $S_{n-1,m-1}(\mathbb R)$ but \nonSOS in the quotient ring.
    \item  Apply a map (the Segre pullback) to transfer these properties to a biquadratic form that defines the PnCP map. 
 \end{enumerate}

Working in the quotient ring $\mathbb R[z]/I_{n,m}$ (see Definition~\ref{def:segre-ideal}) is the key algebraic idea: $I_{n,m}$ encodes exactly the rank-one constraint $z_{ij}=x_iy_j$, so two polynomials in the same equivalence class vanish identically on every point of the Segre variety. Since biquadratic forms are completely determined by their values on rank-one matrices, no information is lost, nonnegativity of $p$ on $\mathbb R[x,y]_{2,2}$ is equivalent to nonnegativity of $q$ on $S_{n-1,m-1}(\mathbb R)$ \cite{Blekherman_2021}, and the impossibility to be \SOS in the quotient ring translates exactly into the \nonSOS property of $p$. The full mathematical details are given in Appendix~\ref{app:algo}.

\subsubsection{Construction of a nonnegative \nonSOS polynomial}
\label{subsubsec:AlgorithmKSMZ}
The key insight is that biquadratic forms $p(x,y)$ are completely determined by their values on rank-one matrices, which correspond to points on the real Segre variety. The algorithm randomly samples $N$ points $s_1,...,s_N$ on this variety and forces prescribed zeros at these points. The algorithm constructs:

\begin{enumerate}
\item A \SOS base term $H(z) = \sum_i h_i^2(z) \in \big(\mathbb R[z]/I_{n,m}\big)_2$ that vanishes exactly at the prescribed points, where the $h_i$ are linear forms.
\item A perturbation $f(z) \in \big(\mathbb R[z]/I_{n,m}\big)_2$ that vanishes to second order at each prescribed point along the variety, yet cannot be expressed as a product of the $h_i$.
\item A one-parameter family \[q_\delta(z) = \delta f + H(z)
\label{eq:one_param_family}
\] in $\big(\mathbb R[z]/I_{n,m}\big)_2$ for which nonnegativity is preserved on $S_{n-1,m-1}(\mathbb R)$ for sufficiently small $\delta > 0$.
\end{enumerate}

The role of the perturbation \(f\) is twofold. First, at each prescribed point
of the Segre variety, \(f\) vanishes together with its first derivatives;
equivalently, it vanishes to second order. Geometrically, \(f\) does not merely
pass through zero, but has tangential contact with zero at these points. This
second-order vanishing is what allows the perturbation \(H+\delta f\) to
preserve nonnegativity for sufficiently small \(\delta>0\).

The \nonSOS condition  comes from a different property:
\(f\) is chosen outside the span of the products \(h_i h_j\), where the \(h_i\)
are the linear forms vanishing at the prescribed points. Indeed, suppose that
\(q_\delta=H+\delta f\) admitted an \SOS decomposition in the quotient ring. Since
\(q_\delta\) vanishes at each prescribed point, every square appearing in the
decomposition would also have to vanish there. Hence each linear summand would
belong to the space spanned by the \(h_i\), and the whole \SOS expression would
belong to the span of the products \(h_i h_j\). Since \(H=\sum_i h_i^2\) already
belongs to this span, this would force \(f\) itself to belong to
\(\operatorname{span}\{h_i h_j\}\), contradicting the way \(f\) was chosen.

Therefore, \(q_\delta\) cannot be \SOS in
\(\mathbb R[z]/I_{n,m}\) for any \(\delta>0\).

\subsubsection{Transformation into PnCP maps }

Since $q_\delta \geq 0$ on $S_{n-1,m-1}(\mathbb R)$ for small $\delta > 0$ and is \nonSOS in the quotient ring, $\mathbb R[z]/I_{n,m}$ by construction, applying the Segre pullback map
\begin{equation}
    \tilde \sigma^{\#}_{n,m} : \big(\mathbb R[z]/I_{n,m}\big)_2 \to  \mathbb R[x,y]_{2,2}
\end{equation}
yields a biquadratic form $p(x,y) = \tilde \sigma^{\#}_{n,m}(q_\delta)$ that is both nonnegative and \nonSOS. By the isomorphism between biquadratic forms and linear maps presented in~\eqref{eq:KSMZIsomorphism}, this translates directly into a PnCP map $\Phi$.

For a complete description of the algorithm, including the role of the Segre variety's compactness in controlling the perturbation, the choice of prescribed zeros, and the technical details of the \SOS obstruction, we refer to Appendix~\ref{app:algo}.

\subsection{Implementation}
\label{sec:implementation}

We now explain how the algorithm is implemented in practice. Constructing a PnCP map reduces to finding the largest $\delta > 0$ such that the polynomial $q_\delta$ defined in \eqref{eq:one_param_family} is nonnegative. Such a condition is NP-hard to check, but can be handled via moment-\SOS relaxations \cite{Lasserre2001GlobalOptimization, Parrilo2003Semidefiniteprogramming}. The authors in~\cite{Klep2017ManyMorePositiveMaps}, building on a result by P\'{o}lya \cite{Polya1928}, propose for instance to solve for some order $\ell\in\NN_*$
\begin{equation}
    \label{eq:relax-Polya}
    \max\ \delta \quad \text{s.t.}\quad p_\delta(x,y)  Q(x,y)^\ell \;\;\text{is \SOS}
\end{equation}
where $Q(x,y) \eqdef \sum_{i,j} (x_i y_j)^2$, and $p_\delta \eqdef \tilde{\sigma}^\sharp_{n,m}(q_\delta)$ as before. The sum-of-squares constraint is known to amount to matrix inequalities constraints, and can be handled in polynomial time by interior point solvers \cite{vanderberghe96}.

The factorization appearing in \eqref{eq:relax-Polya} is in fact a particular instance of a more general result by Artin, which states that any nonnegative polynomial can be written as a ratio of sum-of-squares, yielding the following general relaxation:
\begin{equation}
    \label{eq:relax-Artin}
    \max_{\substack{\delta \in \mathbb{R}_+ \\ \sigma \in \mathbb{R}[x,y]_\ell}}\ \delta \quad \text{s.t.}\;\;
        \left\{\begin{aligned}
        &p_\delta(x,y) \sigma(x,y) \;\;\text{is \SOS}\\
        &\sigma(x,y) \quad \text{is \SOS}
    \end{aligned}\right.
\end{equation}
which is non-convex in $(\delta,\sigma)$. As described in \cite{Bhardwaj2023PracticalApproach}, \eqref{eq:relax-Artin} may be solved with the bisection method: starting with $\delta=1$ and $\ell=1$, solve the (convex) feasibility problem over $\sigma \in \RR[x,y]_\ell$; if a solution is not found, set $\delta \leftarrow \delta/2$ and reiterate until $\delta$ gets below a chosen threshold. If a solution is still not found, we increase $\ell$ and reiterate the whole process. 
In our experiments, the SDPs are solved with Mosek, interfaced with the Yalmip library \cite{Lofberg2004Yalmip}. We stop at the second order of the hierarchy, and set the threshold for $\delta$ to be $10^{-2}$. In the case no solution is found at this point, we start the procedure from the beginning to produce new SOS terms $h_i$ and perturbation $f$ (see section~\ref{subsubsec:AlgorithmKSMZ}).

\paragraph{Comparison with \cite{Bhardwaj2023PracticalApproach}.} The code is very largely based on the extensive proposition of the author of \cite{Bhardwaj2023PracticalApproach}, which can be found at \url{https://github.com/Abhishek-B/PnCP}. We however preferred to carefully re-implement each step of the procedure, rather than directly use the existing code, after observing a (possibly numerical) flaw in the results of~\cite{Bhardwaj2023PracticalApproach}: with the map $\Phi$ given in example 5.2 of \cite{Bhardwaj2023PracticalApproach}, testing against random semidefinite matrices $A$ shows that $\Phi(A)$ is not always semidefinite, as made explicit in the example below.

\begin{example}
With
\begin{equation*}
    A \eqdef \begin{bmatrix}0.9093 & 0.4767 & 0.6162 \\0.4767 & 0.8005 & 0.8288 \\0.6162 & 0.8288 & 1.1633 \end{bmatrix} \succeq 0
\end{equation*}
and the map $\Phi$ of example 5.2 in \cite{Bhardwaj2023PracticalApproach}, we obtain
\begin{equation*}
    \Phi(A) = \begin{bmatrix} 30.5807 & -19.9351 & 18.8731 \\ -19.9351 & 135.3230 & -29.6926 \\ 18.8731 & -29.6926 & 12.8955\end{bmatrix}
\end{equation*}
whose eigenvalues read $\begin{bmatrix} -0.9514 & 32.7428 & 147.0078 \end{bmatrix}^\top$, possibly contradicting the positivity of $\Phi$.
\end{example}

In order to certify the positivity of the maps outputted by our code, we added the test described in the following subsection. The code is available at \url{https://github.com/Paulcat/Entanglement-Moments}. 

\subsection{Guaranty of positivity}

If the polynomials $p_\delta$ are by construction inherently positive, their closeness to the border of the cones may result in numerical imprecisions, sometimes yielding (falsely) negative values. 
Hence, to ensure unequivocally the positivity of these polynomials, we complete the KSMZ algorithm with an additional test: we compute via a mere Lasserre's hierarchy a global lower bound on the values of the polynomials. Since the polynomials are bihomogeneous, we know any finite lower bound in practice proves that the polynomial is positive. This is due to the following proposition:

\begin{proposition}
The infimum of a non constant bihomogeneous polynomial can only be $-\infty$ or $0$.
\end{proposition}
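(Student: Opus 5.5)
The proof will rest only on the scaling behaviour of a bihomogeneous polynomial. Let $p(x,y)$ be bihomogeneous of bidegree $(d_1,d_2)$ with $d_1+d_2\geq 1$, since the polynomial is non constant. For every real $t$ we have $p(tx,y)=t^{d_1}p(x,y)$ and $p(x,ty)=t^{d_2}p(x,y)$. In particular $p(0,0)=0$, because at least one of $d_1,d_2$ is positive and we can take $t=0$ in the corresponding slot. Hence $\inf p\leq 0$ always. It then remains to show that if the infimum is finite it must equal $0$.

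First suppose $p$ attains a negative value, say $p(x_0,y_0)=c<0$, and assume without loss of generality $d_1\geq 1$. Consider $p(tx_0,y_0)=t^{d_1}c$. If $d_1$ is odd, letting $t\to+\infty$ already gives $-\infty$. If $d_1$ is even, then $t^{d_1}=|t|^{d_1}\to+\infty$ and again $t^{d_1}c\to-\infty$. So a single negative value forces $\inf p=-\infty$.

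Otherwise $p\geq 0$ everywhere. Combined with $p(0,0)=0$, this gives $\inf p=0$, and the minimum is actually attained.

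For the case of interest, bidegree $(2,2)$ (and also for the Hermitian version with $|t|$ scaling), both exponents are even, so $p(tx,y)=t^2p(x,y)$ and the argument is immediate. The dichotomy is therefore exactly that $\inf p\in\{-\infty,0\}$. Consequently any finite lower bound certified by the Lasserre hierarchy proves $p\geq 0$.

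There is no real obstacle here. The only points needing care are the case split on the parity of the degree, and noting that non-constancy guarantees some $d_i\geq 1$, so the scaling argument is available. The same argument shows that a finite numerical lower bound, even a negative one, is only possible when $p$ is nonnegative.
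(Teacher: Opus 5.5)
Your proof is correct and uses the same scaling argument as the paper. The paper scales both slots at once, $p(tx_0,ty_0)=t^{k_1+k_2}p(x_0,y_0)$, while you scale only a slot with $d_1\geq 1$; this is equivalent. Your parity case split is unnecessary: you only send $t\to+\infty$, and there $t^{d_1}\to+\infty$ for every $d_1\geq 1$.
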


\begin{proof}
Let $p(x,y) \in \RR[x,y]_{k1,k2}$ be a bihomogeneous polynomial in the variable $(x,y) \in \RR^n \times \RR^m$ of bidegree $(k_1,k_2)$. Then, for any $\lambda,\mu \in \RR$ the homogeneity relation imposes 
\begin{equation*}
    p(\lambda x, \mu y) = \lambda^{k_1} \mu^{k_2} p(x,y).
\end{equation*}
Since $p(x,y)$ is non constant, $k_1+k_2>0$. Therefore, if there exists $(x_0,y_0) \in \RR^n \times \RR^m$ such that $p(x_0,y_0) < 0$, then choosing $\lambda=\mu=t$ leads to $p(t x_0, t y_0) = t^{k_1 + k_2}p(x_0,y_0) $, hence $ p(t x_0, t y_0) \xrightarrow[t \to \infty]{} -\infty$, and so, $\inf_{x,y} p(x,y) = -\infty$. Conversely, with $p$ nonnegative and non constant, its minimum is $0$, since $p(0,0)=0$.

\end{proof}

This test is decisive in the sense that the certification that the polynomials we keep are indeed positive does not depend on any numerical precision. 
Hence, in our code, only polynomials that pass this additional test are retained. 


\section{Characterization of the PnCP maps generated by the KSMZ algorithm}
\label{sec:Maps}
We now  characterize the relevance of the KSMZ maps for the separability problem. In ~\cite{Klep2017ManyMorePositiveMaps} the authors left open the question of the decomposability of the maps generated by their algorithm, whereas in~\cite{Bhardwaj2023PracticalApproach} some numerical flaws invalidated the significance of their reported PPT state detection. As a consequence, we start by theoretically characterizing the capacities of these maps. First, we analytically demonstrate that the maps are indecomposable, which means that each of them is capable to detect a PPT entangled state. Then, we show that they live on the border of the positive cone, which gives an intuitive explanation on the fragility of our maps in terms of robustness. In addition, we characterize the EW associated to the KSMZ maps in terms of tightness and extremality and propose and demonstration of how one can recover the full detection power of a map by considering a whole family of witnesses rather than the unique Choi witness. Finally, we list other PnCP maps from the literature and prove for some, notably the famous Choi map, that the KSMZ maps are inequivalent to them. It means that they are capable of detecting PPT entangled states that the other map can't.

\subsection{Indecomposability of the maps}
\label{subsec:indecomposability}

In this subsection, we analytically demonstrate that the KSMZ maps are indecomposable, using an original proposition, building from~\cite{Fang2020SumSquares}. The algorithm of Section \ref{sec:algoimpl} produces a real linear map $\Phi :  S_n (\mathbb R) \to S_m(\mathbb R)$. Since $\Phi$ is not a complex linear map on $M_n(\mathbb C)$, we connect it to quantum information by passing to the complexified trivial extension introduced in \cite{Klep2017ManyMorePositiveMaps}
\begin{equation}
    \Gamma_\mathbb C=\big (\Phi \oplus 0\big)_\mathbb C : M_n(\mathbb C) \to M_m(\mathbb C) \label{eq:complexification}
\end{equation}
defined by $\Gamma_\mathbb C(A+iB)= \Phi(A)$ for $A \in  S_n(\mathbb R)$ and $B \in  K_n(\mathbb R)$ where $ K_n(\mathbb R)$ denotes the real skew-symmetric matrices. 

To $\Gamma_\mathbb C$, one associates the Hermitian biquadratic polynomial 
\begin{equation}
    p_{\Gamma_\mathbb C}(z,w)=\bra{w}\Gamma_\mathbb C(\ket{z}\bra{z})\ket{w}
\end{equation}
with $\ket{z}\in \mathbb C^n, \ket{w} \in \mathbb C^m$. Writing $\ket{z}=\ket{a}+i\ket{b}$ and $\ket{w}=\ket{c}+i\ket{d}$, $\ket{a},\ket{b} \in \mathbb R^n$ and $\ket{c},\ket{d}\in \mathbb R^m$, one has the structural identity 

\begin{align}
    p_{\Gamma_\mathbb C}(a+ib,c+id)&=p_\Phi(a,c)+p_\Phi(b,c)\notag\\
    &+p_\Phi(a,d)+p_\Phi(b,d) \label{eq:pol-dec}
\end{align}

whose derivation is given in Appendix~\ref{app:derivation_Herm_pol}. This identity is the bridge between the real algebraic obstruction obtained from the Segre based construction and the hermitian polynomial $p_{\Gamma_\mathbb C}$ : it shows that $p_{\Gamma_\mathbb C}$ is entirely controlled by the values of the real biquadratic form $p_\Phi$ on real inputs. The notions of CSOS and RSOS Hermitians polynomials introduced by~\cite{Fang2020SumSquares} and defined in Definitions~\ref{def:RSOS} and~\ref{def:CSOS}, provide the appropriate framework to state the main result of this subsection:
\begin{theorem}[Non-RSOS]
    Let $p_\Phi(x,y) \in \mathbb R[x,y]_{2,2}$ be the nonnegative non-SOS biquadratic form produced by the algorithm~\ref{sec:algoimpl} from~\cite{Klep2017ManyMorePositiveMaps}, and let $p_{\Gamma_\mathbb C}$ admit the decomposition \eqref{eq:pol-dec}. Then 
    \begin{equation*}
        p_\Phi(x,y) \; \text{is not SOS} \implies p_{\Gamma_\mathbb C} \text{is not RSOS}
    \end{equation*}
\end{theorem}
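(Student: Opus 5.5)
We argue by contraposition: assuming $p_{\Gamma_\mathbb C}$ is \RSOS, we show that $p_\Phi$ is \SOS. The idea is to restrict a hypothetical \RSOS certificate to real inputs and read off an \SOS decomposition of $p_\Phi$ from the structural identity \eqref{eq:pol-dec}.

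\textbf{Step 1 (restriction to the real slice).} Suppose
\begin{equation*}
p_{\Gamma_\mathbb C}(z,\bar z,w,\bar w)=\sum_i g_i(z,\bar z,w,\bar w)^2
\end{equation*}
with each $g_i$ bi-Hermitian, hence real-valued. Write $z=a+ib$ and $w=c+id$. Each $g_i$ becomes a real polynomial $\tilde g_i(a,b,c,d)\in\RR[a,b,c,d]$, since real-valuedness means its coefficients in the real coordinates are real. Now set $b=0$ and $d=0$. Because $p_\Phi$ is bihomogeneous of bidegree $(2,2)$, we have $p_\Phi(0,c)=p_\Phi(a,0)=p_\Phi(0,0)=0$. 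Identity \eqref{eq:pol-dec} therefore collapses to
\begin{equation*}
p_\Phi(a,c)=p_{\Gamma_\mathbb C}(a,c)=\sum_i \tilde g_i(a,0,c,0)^2 .
\end{equation*}
This is a representation of $p_\Phi$ as a sum of squares of real polynomials in $(a,c)$.

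\textbf{Step 2 (degree control).} Definition~\ref{def:sos} allows arbitrary real polynomials $q_i$, so Step 1 already contradicts non-\SOS in that sense. The construction of Section~\ref{sec:algoimpl}, however, certifies non-\SOS through the quotient ring, that is, with respect to squares of bilinear forms. We therefore also check that the $q_i(a,c):=\tilde g_i(a,0,c,0)$ can be taken to be bilinear forms in $(a,c)$. We use the standard Newton-polytope argument. If $p=\sum q_i^2$ and $p$ is bihomogeneous of bidegree $(2,2)$, consider the top and bottom degree parts of the $q_i$ in the $a$-variables. The highest-degree parts cannot cancel in a sum of squares, so every $q_i$ has $a$-degree at most $1$. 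Likewise the lowest-degree parts force $a$-degree at least $1$. The same holds in the $c$-variables. Hence each $q_i$ is bihomogeneous of bidegree $(1,1)$, i.e.\ bilinear. Writing $q_i(a,c)=\sum_{kl}\alpha^{(i)}_{kl}a_kc_l$, each $q_i$ is the pullback of the linear form $\sum\alpha^{(i)}_{kl}z_{kl}$ under the Segre map. So $q_\delta$ is \SOS in $\RR[z]/I_{n,m}$, contradicting the construction recalled in Section~\ref{subsubsec:AlgorithmKSMZ}.

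\textbf{Main obstacle.} The delicate point is Step 1: we must justify that a bi-Hermitian $g_i$, once expressed in real coordinates, is a genuine real polynomial. The symmetry condition $p_{uv,st}=\overline{p_{vu,ts}}$ on the coefficients gives exactly this, since substituting $z=a+ib$ and $w=c+id$ into a real-valued polynomial in $(z,\bar z,w,\bar w)$ yields a polynomial in $(a,b,c,d)$ with real coefficients. We must also make sure that setting $b=d=0$ in \eqref{eq:pol-dec} is legitimate, which it is because the identity holds pointwise for all real $a,b,c,d$. Once these are settled, Step 2 is routine. By the correspondence recalled from \cite{Fang2020SumSquares}, the resulting non-\RSOS property then yields indecomposability of $\Gamma_\mathbb C$.
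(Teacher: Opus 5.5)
Your proof is correct and follows the paper's argument: assume an RSOS certificate for $p_{\Gamma_{\mathbb C}}$, restrict it to the real slice $b=d=0$, use bihomogeneity to collapse \eqref{eq:pol-dec} to $p_\Phi(a,c)$, and read off an SOS decomposition of $p_\Phi$. Your Step 2 is a sound refinement the paper omits, since the paper tacitly identifies SOS with arbitrary polynomial summands and SOS of bilinear forms; note that pulling the bilinear decomposition back to an SOS of $q_\delta$ in $\mathbb R[z]/I_{n,m}$ also silently uses that $\tilde\sigma^{\#}_{n,m}$ is injective on $(\mathbb R[z]/I_{n,m})_2$, i.e.\ that the $2\times 2$ minors generate the degree-$2$ kernel of $z_{ij}\mapsto x_iy_j$.
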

\begin{proof}
    Assume by contradiction that $p_{\Gamma_\mathbb C}$ is RSOS. By definition, there exists Hermitian polynomials $g_1,...,g_r$ in the variables $(z,\bar z,w,\bar w)$ such that 
    \begin{equation}
        p_{\Gamma_\mathbb C}(z,w)= \sum_i g_i(z,\bar z, w, \bar w)^2
    \end{equation}
    Restricting to the real slices $b=0$ and $d=0$, the identity \eqref{eq:pol-dec} becomes 
    \[
    p_{\Gamma_\mathbb C}(a+0,c+0)=p_\Phi(a,c)+p_\Phi(a,0)
   +p_\Phi(0,c)+p_\Phi(0,0).
    \]
 But then, since $p_\Phi$ is bihomogeneous of bidegree $(2,2)$, it vanishes whenever
one of its two vector variables is zero, see in Appendix \ref{app:bihomogeneous_zero_argument}.
 $p_\Phi(a,0) = 0$ ($\Phi$ is linear), and $p_\Phi(0,c) = 0$, and similarly $p_\Phi(0,0). =0$.
   Hence, $p_{\Gamma_\mathbb C}(a,c)=p_\Phi(a,c)$ and one obtains
    \begin{equation}
        p_\Phi(a,c)=\sum_i g_i(a,a,c,c)^2
    \end{equation}
    which is a SOS decomposition of $p_\Phi$. Hence, we have proven that \begin{equation*}
      p_{\Gamma_\mathbb C} \text{RSOS} \implies  p_\Phi(x,y) \; \text{SOS} 
    \end{equation*} which is the contrapositive of the theorem claimed
\end{proof}
The connection to indecomposability now follows directly from the duality relation established in \cite{Fang2020SumSquares}. A map $\Lambda$ is decomposable, in the sense of \ref{def:decomposable}, if and only if its associated Hermitian polynomial $p_\Lambda$ is RSOS. Since we have established that $p_{\Gamma_\mathbb C}$ is not RSOS, it follows immediately that $\Gamma_\mathbb C$ is not decomposable. To sum-up, since by construction the KSMZ polynomials $ p_\Phi(x,y) \in \mathbb R[x,y]_{2,2}$ are non-SOS, the polynomials  $p_{\Gamma_\mathbb C}$ are non RSOS, and the KSMZ maps $\Gamma_\mathbb C\ $ are indecomposable. 

\par Although we only focus on the KSMZ maps in this manuscript, the method we presented here can be useful to demonstrate or redemonstrate the indecomposability of other PnCP maps. It suffices to consider the Choi of the map. As an example, we provide a demonstration of the indecomposability of the Choi map in Appendix~\ref{app:ChoiIndec}. While finishing this manuscript, we became aware that a similar method was presented and used to do so, with a different language, in~\cite{Ho2026applicationschoipolynomials}.

\subsection{Location of the maps on the cone}
\label{subsec:location_of_the_maps}

Algorithm \ref{sec:algoimpl}  prescribes
zeros of the associated nonnegative biquadratic form \cite{Klep2017ManyMorePositiveMaps} :
\begin{equation}
    p_\Phi(x_0,y_0)=0,
    \qquad x_0\neq 0,\quad y_0\neq 0.
\end{equation}
Therefore $p_\Phi$ cannot be an interior point of the cone of nonnegative biquadratic forms.

Indeed, for every $\varepsilon>0$, the perturbation
\begin{equation}
    p_\varepsilon(x,y)
    =
    p_\Phi(x,y)
    -
    \frac{\varepsilon}{\|x_0\|^2\|y_0\|^2}
    \|x\|^2\|y\|^2
\end{equation}
satisfies
\begin{equation}
    p_\varepsilon(x_0,y_0)=-\varepsilon<0.
\end{equation}
Thus $p_\varepsilon$ is not nonnegative, while $p_\varepsilon\to p_\Phi$ as
$\varepsilon\to 0$.

Moreover, under the correspondence:
\begin{equation}
    \Phi \longmapsto p_\Phi(x,y)
    =
    \bra{y}\Phi(\ket{x}\bra{x})\ket{y},
\end{equation}
positivity of $\Phi$ is equivalent to nonnegativity of the biquadratic form $p_\Phi$ on all product directions:
\begin{equation}
    \Phi \in \mathcal P_{n,m}
    \qquad \Longleftrightarrow \qquad
    p_\Phi(x,y)\geq 0
    \quad \forall x\in\mathbb R^n,\; y\in\mathbb R^m.
\end{equation}
This equivalence is due to the fact that every positive semidefinite matrix is a convex combination of rank-one projectors. More directly, the form
\[
r(x,y)=\|x\|^2\|y\|^2
\]
corresponds to the positive map
\begin{equation}
    R(X)=\operatorname{Tr}(X)I_m.
\end{equation}
Therefore $p_\varepsilon$ corresponds to
\begin{equation}
    \Phi_\varepsilon
    =
    \Phi
    -
    \frac{\varepsilon}{\|x_0\|^2\|y_0\|^2}R.
\end{equation}
Since
\begin{equation}
    \bra{y_0}\Phi_\varepsilon(\ket{x_0}\bra{x_0})\ket{y_0}
    =
    -\varepsilon<0,
\end{equation}
the map $\Phi_\varepsilon$ is not positive. Hence arbitrarily small perturbations of
$\Phi$ leave the cone of positive maps. Since $\Phi$ is positive, we conclude that
\begin{equation}
    \Phi\in\partial\mathcal P_{n,m}.
\end{equation}

Finally, the same boundary property extends to the complexified map $\Gamma_{\mathbb C}$ defined in \eqref{eq:complexification}, which is the physically relevant object for entanglement detection. Indeed, viewing  $(x_0,y_0) \in \mathbb R^n \times \mathbb R^m$ as a complex vector and applying \eqref{eq:pol-dec} together with the bihomogeneity of $p_\Phi$ (see Appendix \ref{app:bihomogeneous_zero_argument}), one obtains 
\begin{equation}
     p_{\Gamma_\mathbb{C}}(x_0, y_0) 
    = p_\Phi(x_0,y_0) + p_\Phi(0,y_0) + p_\Phi(x_0,0) + p_\Phi(0,0) = 0.
\end{equation}
Hence $(x_0,y_0)$ is also a zero of $p_{\Gamma_\mathbb{C}}$, and the same perturbation argument applies verbatim with $\tilde{R}: \mapsto 
\operatorname{Tr}(X)I_m$, yielding $\Gamma_\mathbb{C} \in \partial 
\mathcal{P}_{n,m}^{\mathbb{C}}$. The real zero thus survives complexification, and the boundary location is a genuine geometric property of the physically  relevant map.

In agreement with subsections~\ref{subsec:indecomposability} and ~\ref{subsec:location_of_the_maps}, the location of the KSMZ maps is visualised in Fig~\ref{fig:ConesMapsDecomposability}.

\begin{figure}
    \centering
    \includegraphics[width=0.7\linewidth]{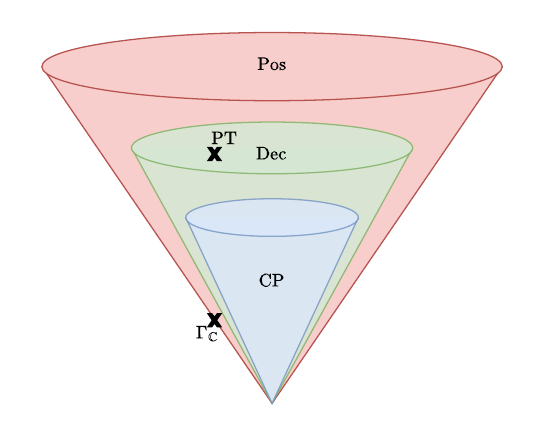}
    \caption{Cone of Positive maps (in red) that contain the cone of Decomposable map (in green), which contain itself the cone of Completely Positive map (in blue).The KSMZ maps lie in the border of Positive maps close to the Decomposable cone.}
    \label{fig:ConesMapsDecomposability}
\end{figure}

\subsection{Geometric characterization of the entanglement witness}
Having established that $W = C_{\Gamma_{\mathbb{C}}}$ is an entanglement witness (Definition \ref{def:CHOI}) lying on the boundary of the positive cone we now characterize it geometrically. The geometric properties of a witness, how and where its supporting hyperplane touches $\mathcal{SEP}$, directly inform its detection capacity. We introduce three levels of this hierarchy,\textit{tightness}, \textit{optimality}, and \textit{extremality}, and show where $W$ sits within it 

A witness $W$ is called \textit{tight} \cite{Chruscinski2014EntanglementWitnesses} if there is at least one product vectors saturating it, i.e some $\ket{x \otimes y}$ for which $\bra{x\otimes y}W\ket{x\otimes y}=0$. Geometrically the supporting hyperplane genuinely touches $\mathcal{SEP}$ in at least one point. A witness is called \textit{optimal} \cite{Lewenstein_2000,Chruscinski2014EntanglementWitnesses} if no positive semidefinite operator $P\succeq 0$ can be subtracted while preserving block positivity : the supporting hyperplane cannot be tilted any further inward toward $\mathcal{SEP}$. A sufficient condition for optimality is that the product vectors saturating $W$, i.e  those $\ket{x\otimes y}$ 
for which $\bra{x\otimes y}W\ket{x\otimes y}=0$, span the whole space 
$\mathcal{H}_A\otimes\mathcal{H}_B$~\cite{Lewenstein_2000}. Geometrically, the supported hyperplane then touches $\mathcal{SEP}$ along a face of maximal dimension leaving no margin for improvement. 

A witness is called \textit{extremal} \cite{hansen2015extremalentanglementwitnesses} if it is an extremal point of the convex cone of block positive operator cone $\text{BP}(\mathcal H_A ,\mathcal H_B)$, i.e it cannot be written as proper convex combination $W = \alpha W_1 + (1-\alpha)W_2, \; \alpha \in [0,1] $ of two distinct witnesses. Equivalently, \cite{hansen2015extremalentanglementwitnesses} show that $W$ is extremal if and only if no other witness has the same or larger set of zeros and \textit{Hessian zeros}, directions along which the biquadratic form $\bra{x \otimes y}W\ket{x \otimes y}$ vanishes to higher than second order, i.e. whose second derivative also vanishes at that point \cite{hansen2015extremalentanglementwitnesses}.

These three levels form a strict hierarchy illustrated in Figure~\ref{fig:hierarchy-ew} for which no reverse implication holds in general: an optimal witness need not be extremal~\cite{bera2023class}, and a tight witness needs not optimal \cite{Lewenstein_2000,Chruscinski2014EntanglementWitnesses}.

\begin{figure}[h!]
\centering
\begin{tikzpicture}[
    every node/.style={font=\normalsize}
]
\node (ext) at (0, 0)    {\textit{extremal}};
\node (opt) at (3, 0)    {\textit{optimal}};
\node (tgt) at (6, 0) {\textit{tight}};

\draw (ext) -- (opt)
    node[midway, above, sloped] {$\Rightarrow$};
\draw (opt) -- (ext)
    node[midway, below, sloped] {$\not\Leftarrow$};
\draw (opt) -- (tgt)
    node[midway, above, sloped] {$\Rightarrow$};
\draw (tgt) -- (opt)
    node[midway, below, sloped] {$\not\Leftarrow$};

\end{tikzpicture}
\caption{Hierarchy of geometric properties of entanglement witnesses.
Here \textit{tight} means that the witness is saturated by at least one product vector.
One has \textit{extremal} $\Rightarrow$ \textit{optimal} $\Rightarrow$ \textit{tight},
while the converses fail in general. }
\label{fig:hierarchy-ew}
\end{figure}

\subsubsection{The KSMZ maps are not extremal}
The KSMZ construction is perturbative: the parameter \(0<\delta<\delta_0\) is chosen 
precisely to preserve nonnegativity of
\[
q_\delta = H+\delta f 
\]
on the Segre variety. This strict choice also implies that the corresponding 
witness does not generate an extremal ray of the block-positive cone. Indeed, 
as long as one can choose a larger admissible parameter 
\(\delta<\delta'\), the witness \(W_\delta\) lies on a genuine line 
segment of block-positive operators.  The proof is given in 
Appendix~\ref{app:non_extremality_ksmz}.

Non-maximality of the retained $\delta$ may be checked numerically trough the so-called collapse of the semidefinite hierarchy with which we compute $q_\delta$ (or more specifically its pull-back $p_\delta = \sigma^\sharp_{n,m}(q_\delta)$). The hierarchy is said to have collapsed at some order $\ell$ if the solution remains the same for every higher order ($\ell+k)$. Such a phenomenon may be detected numerically by verifying some rank condition on the dual variables associated to our optimization program, called flatness condition \cite{curto96,Laurent2008sumssquares}. As long as the hierarchy has not collapsed, we know that the corresponding $\delta$ is not maximal, since its value can be made larger simply by considering a higher order of the SOS-hierarchy. We leave this additional certification for a future version of this work; it should be noted, however, that when using the bisection method to determine $\delta$ (see section~\ref{sec:implementation}), which we did in our experiments, we can be almost certain that $\delta$ does not reach its maximum value, even if this means reducing it slightly in a final step of the procedure.


As a conclusion, the witness are tight but not extremal. Since by construction, the $\ker$ of the witnesses are smaller than the dimension of the Hilbert spaces in which the EWs live, it is not straightforward to show whether the witnesses are optimal or not (the spanning property~\cite{Lewenstein_2000} is not fulfilled) and is thus left for future work. See Appendix~\ref{app:optimal?} for more explanations.

\subsection{Recovering the detection power of a map with a family of witnesses}
\label{subsec:recover_detection_power_family_witnesses}

The construction above yields a single witness
$W=C_{\Gamma_{\mathbb{C}}}$. In 
practice, one may consider a \textit{family} of witnesses $\{W_t\}_{t\in T}$ parametrised by the choice of prescribed zeros or by a deformation of the underlying positive map. Each member $W_t$ defines its own saturating set $\mathcal{Z}(W_t)$, its own position in the optimal-extremal hierarchy, and its own supporting hyperplane. The union of detected states $\bigcup_t \mathcal{E}(W_t)$ can substantially enlarge the detected region compared to any single witness and recover the full detection power of the PnCP map. Understanding how these hyperplanes collectively approximate $\mathcal{SEP}$ is the subject of this subsection.

Let
\[
\Phi:\mathcal L(\mathcal H_B)\to \mathcal L(\mathcal H_C)
\]
be a positive but not completely positive map. The usual Choi construction associates with \(\Phi\) a single witness, obtained by applying \(\operatorname{Id}\otimes\Phi\) to a maximally entangled state. However, the positive-map criterion does not test only one fixed direction \cite{Horodecki1996}. It detects a state \(\rho_{AB}\) whenever the operator
\[
(\operatorname{Id}_A\otimes \Phi)(\rho_{AB})
\]
fails to be positive semidefinite.

Thus, the detection power of \(\Phi\) is naturally recovered by considering the negative directions of this operator. This viewpoint is consistent with the witnessed-entanglement approach, where
entanglement properties are extracted by optimizing over suitable classes of witness operators \cite{Lewenstein_2000,Brandao_2005}. Let \(\Phi^\dagger\) denote the Hilbert--Schmidt adjoint of \(\Phi\), defined by
\[
\operatorname{Tr}\!\left[\Phi(X)Y\right]
=
\operatorname{Tr}\!\left[X\Phi^\dagger(Y)\right].
\]
For every vector
\[
|\eta\rangle\in \mathcal H_A\otimes\mathcal H_C,
\]
we define
\[
W_{\Phi,\eta}
=
(\operatorname{Id}_A\otimes\Phi^\dagger)
\left(|\eta\rangle\langle\eta|\right).
\]
Since \(\Phi\) is positive, these operators are block-positive; hence the non-positive ones are entanglement witnesses.

\begin{proposition}
Let
\[
\Phi:\mathcal L(\mathcal H_B)\to \mathcal L(\mathcal H_C)
\]
be a positive linear map, and let $\dim\mathcal H_A=n$ and $\dim\mathcal H_C=m$
\[
r=\min\{n,m\}.
\]
Then the family
\[
\mathcal W_\Phi^{\mathrm{fsr}}
=
\left\{
(\operatorname{Id}_A\otimes\Phi^\dagger)
\left(\ket{\eta}\bra \eta\right)
:
\operatorname{Schmidt\,rank}(\eta)=r
\right\}
\]
recovers the full detection power of \(\Phi\). More precisely, for every state
\[
\rho_{AB}\in \mathcal L(\mathcal H_A\otimes\mathcal H_B),
\]
one has
\[
(\operatorname{Id}_A\otimes\Phi)(\rho_{AB})\not\succeq 0
\]
if and only if there exists a vector
\[
\ket{\eta} \in \mathcal H_A\otimes\mathcal H_C
\]
of maximal Schmidt rank such that
\[
\operatorname{Tr}\!\left[
(\operatorname{Id}_A\otimes\Phi^\dagger)
\left(\ket\eta\bra\eta\right)
\rho_{AB}
\right]<0.
\]
\end{proposition}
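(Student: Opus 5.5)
The plan is to reduce everything to the adjoint identity
\[
\operatorname{Tr}\!\left[(\operatorname{Id}_A\otimes\Phi^\dagger)(\ket\eta\bra\eta)\,\rho_{AB}\right]=\bra\eta(\operatorname{Id}_A\otimes\Phi)(\rho_{AB})\ket\eta ,
\]
which follows from the defining relation of $\Phi^\dagger$ applied to product operators and extended by linearity. With $X:=(\operatorname{Id}_A\otimes\Phi)(\rho_{AB})$, the statement becomes: $X\not\succeq 0$ if and only if $\bra\eta X\ket\eta<0$ for some $\eta$ of maximal Schmidt rank $r$. Note that $X$ is Hermitian when $\Phi$ is $\dagger$-linear, which holds for positive maps.

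The direction ($\Leftarrow$) is immediate. If $\bra\eta X\ket\eta<0$ for any vector, then $X$ is not positive semidefinite. Equivalently, the witness $W_{\Phi,\eta}$ detects $\rho_{AB}$, and by Criterion~\ref{crit:PnCP} this detection is equivalent to non-positivity of $X$.

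For ($\Rightarrow$), suppose $X\not\succeq0$. Then there is a unit vector $\ket{\eta_0}$, for instance an eigenvector for a negative eigenvalue, with $\bra{\eta_0}X\ket{\eta_0}<0$. In general $\eta_0$ need not have maximal Schmidt rank, so the key step is a density/continuity argument.

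First, the set of vectors in $\mathcal H_A\otimes\mathcal H_C\cong\mathbb C^{n\times m}$ with Schmidt rank $r$ is the set of $n\times m$ coefficient matrices of full rank $r=\min\{n,m\}$. Its complement is the zero locus of all $r\times r$ minors, a proper algebraic subset, so the full-rank set is open and dense. Concretely, take $\eta_\epsilon=\eta_0+\epsilon\,\xi$ with $\xi$ a fixed full-rank matrix (for example $\sum_{i\le r}\ket{i}\ket{i}$). Then $\det$ of a suitable $r\times r$ minor of $\eta_\epsilon$ is a polynomial in $\epsilon$ that is nonzero at large $\epsilon$, hence has finitely many roots. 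So $\eta_\epsilon$ has rank $r$ for all but finitely many $\epsilon$, in particular for all sufficiently small $\epsilon\neq0$.

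Second, the map $\eta\mapsto\bra\eta X\ket\eta$ is continuous, being a quadratic form. Since $\bra{\eta_0}X\ket{\eta_0}<0$, we get $\bra{\eta_\epsilon}X\ket{\eta_\epsilon}<0$ for $\epsilon$ small enough. Choosing such an $\epsilon$ that avoids the finitely many rank-deficient values yields the required $\eta$ of maximal Schmidt rank.

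The main obstacle is this density step. The rest is bookkeeping, apart from one point: the statement pairs $\dim\mathcal H_A$ with $\dim\mathcal H_C$, which is correct because $\eta$ lives in $\mathcal H_A\otimes\mathcal H_C$ and $X$ acts there. I would state this explicitly so that the Schmidt rank is understood with respect to the $A|C$ cut.
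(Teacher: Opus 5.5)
Your proof is correct and follows the paper's own route. The adjoint identity reduces the claim to finding $\eta$ with $\bra{\eta}X\ket{\eta}<0$, and the converse is then immediate. For the forward direction, you replace a negative direction $\eta_0$ by a nearby full-Schmidt-rank vector, using density of full-rank coefficient matrices and continuity of $\eta\mapsto\bra{\eta}X\ket{\eta}$. Your explicit perturbation $\eta_0+\epsilon\xi$, with a minor that is a nonzero polynomial in $\epsilon$, is a concrete version of the paper's density claim, and it correctly treats rectangular $n\times m$ coefficient matrices. One small point: in the ($\Leftarrow$) direction the appeal to the PnCP criterion is misplaced, since that criterion concerns entanglement rather than this equivalence, but your first sentence there already proves that direction.
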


\begin{proof}
Let
\[
X_\rho
=
(\operatorname{Id}_A\otimes\Phi)(\rho_{AB}).
\]
Since $\rho_{AB}$ is Hermitian and $\Phi$ is positive, $X_\rho$ is Hermitian. Therefore,
\[
X_\rho\not\succeq 0
\]
if and only if there exists a vector
\[
\ket{\eta_0} \in \mathcal H_A\otimes\mathcal H_C
\]
such that
\[
\bra{\eta_0}X_\rho\ket{\eta_0}<0.
\]
Equivalently,
\[
\bra{\eta_0}
(\operatorname{Id}_A\otimes\Phi)(\rho_{AB})
\ket{\eta_0}<0.
\]

Using the definition of the adjoint, we have:
\[
\begin{aligned}
\bra{\eta_0}
(\operatorname{Id}_A\otimes\Phi)(\rho_{AB})
\ket{\eta_0}
&=
\operatorname{Tr}\!\left[
\ket{\eta_0}\bra{\eta_0}
(\operatorname{Id}_A\otimes\Phi)(\rho_{AB})
\right] \\
&=
\operatorname{Tr}\!\left[
(\operatorname{Id}_A\otimes\Phi^\dagger)
\left(\ket{\eta_0}\bra{\eta_0}\right)
\rho_{AB}
\right].
\end{aligned}
\]
Thus, if $\rho_{AB}$ is detected by the map, there exists a vector
$\ket{\eta_0}$ such that
\[
W_{\Phi,\eta_0}
=
(\operatorname{Id}_A\otimes\Phi^\dagger)
\left(\ket{\eta_0}\bra{\eta_0}\right)
\]
satisfies
\[
\operatorname{Tr}(W_{\Phi,\eta_0}\rho_{AB})<0.
\]
This operator is block-positive: for every product vector
$\ket{x}\otimes \ket{y}$, the adjoint relation gives
\[
\langle x\otimes y|W_{\Phi,\eta_0}|x\otimes y\rangle
=
\operatorname{Tr}\!\left[
|\eta_0\rangle\langle\eta_0|
\bigl(|x\rangle\langle x|\otimes\Phi(|y\rangle\langle y|)\bigr)
\right]\geq 0,
\]
since $\Phi$ acts only the second subspace, and using the cyclicity of the trace. The conclusion comes from the fact that $\Phi(|y\rangle\langle y|)\succeq 0$ since $\Phi$ is positive, and that $|x\rangle\langle x|$ and $|\eta_0\rangle\langle\eta_0|$ since they are density matrices . Finally, since this operator has a negative expectation value on the state $\rho_{AB}$, it is not positive semidefinite.
Therefore $W_{\Phi,\eta_0}$ is an entanglement witness.

It remains to show that one may choose the detecting vector with maximal Schmidt rank The use of Schmidt-rank restrictions is standard in the theory of Schmidt-number witnesses and $k$-positive maps \cite{Terhal_2000}. The set of maximal Schmidt-rank vectors is dense in
\[
\mathcal H_A\otimes\mathcal H_C.
\]

Indeed, after choosing orthonormal bases $\{|i\rangle\}_{i \in \mathbb N}$ of $\mathcal{H}_A$ and $\{|j\rangle\}_{j \in \mathbb N}$ of $\mathcal{H}_C$, a vector $|\eta\rangle = \sum_{i,j} \eta_{ij} |i\rangle \otimes |j\rangle$ can be identified with its coefficient matrix $M_\eta = (\eta_{ij})_{i,j \in \mathbb N} \in M_{n}(\mathbb C)$, and its Schmidt rank is exactly the rank of $M_\eta$ via the singular value decomposition. Since matrices of maximal rank form a dense open subset of $M_{n}(\mathbb C)$, maximal Schmidt-rank vectors are dense in $\mathcal{H}_A \otimes \mathcal{H}_C$.

Therefore, there exists a sequence of maximal Schmidt-rank vectors
\[
|\eta_n\rangle\to|\eta_0\rangle.
\]
Now consider the function
\[
f(\eta)
=
\langle\eta|X_\rho|\eta\rangle.
\]
This function is continuous because it is polynomial. Since
\[
f(\eta_0)<0,
\]
we have, for \(n\) large enough,
\[
f(\eta_n)<0.
\]
Hence, for some maximal Schmidt-rank vector \(|\eta_n\rangle\),
\[
\langle\eta_n|
(\operatorname{Id}_A\otimes\Phi)(\rho_{AB})
|\eta_n\rangle<0.
\]

By the adjoint identity, this is equivalent to
\[
\operatorname{Tr}\!\left[
(\operatorname{Id}_A\otimes\Phi^\dagger)
\left(|\eta_n\rangle\langle\eta_n|\right)
\rho_{AB}
\right]<0.
\]
Thus, whenever \(\rho_{AB}\) is detected by \(\Phi\), it is detected by a member of the family \(\mathcal W_\Phi^{\mathrm{fsr}}\).

Conversely, if there exists a maximal Schmidt-rank vector \(|\eta\rangle\) such that
\[
\operatorname{Tr}\!\left[
(\operatorname{Id}_A\otimes\Phi^\dagger)
\left(|\eta\rangle\langle\eta|\right)
\rho_{AB}
\right]<0,
\]
then the adjoint identity gives
\[
\langle\eta|
(\operatorname{Id}_A\otimes\Phi)(\rho_{AB})
|\eta\rangle<0.
\]
Therefore,
\[
(\operatorname{Id}_A\otimes\Phi)(\rho_{AB})\not\succeq 0.
\]
This proves the equivalence.
\end{proof}

Geometrically, the map \(\Phi\) detects \(\rho_{AB}\) when the operator
\begin{equation}
X_\rho=(\operatorname{Id}_A\otimes\Phi)(\rho_{AB})
\end{equation}
fails to be positive semi definite. This means that $X_\rho$ has a negative direction $\ket{\eta}$. Each such direction defines a separating hyperplane in the original state space through the operator
\begin{equation}
W_{\Phi,\eta}
=
(\operatorname{Id}_A\otimes\Phi^\dagger)
\left(|\eta\rangle\langle\eta|\right).
\end{equation}
The key point is that maximal Schmidt-rank directions form a dense open set among pure directions. Since strict negativity is stable under small perturbations, any negative direction can be replaced by a nearby maximal Schmidt-rank direction. Hence the maximal Schmidt-rank family of witnesses recovers the full detection power of the map.

\begin{remark}
This result is already known by some members of the community but we chose to give a clear reference proof. 
\end{remark}



\begin{remark}
    To construct the family of witnesses associated with a given map, one thus needs to generate all states with maximal Schmidt rank. One only needs to multiply the maximally entangled state $\ket{\Omega}$ by $\id \otimes A$, where $A$ is an invertible matrix. Indeed, considering the action of $A$ in some canonical basis,
\[
A\ket{k} = \sum_{j}^{N} A_{jk}\ket{j},
\]
one has
\[
(\id \otimes A)\ket{\Omega}
= \sum_{i,j=1}^{N} \frac{A_{ji}}{\sqrt{N}} \ket{i}\ket{j}
\]
where $N$ is the dimension of the square matrix $A$.

This way, we can span the entire Hilbert space. With $M := \frac{A^T}{\sqrt{3}}$, the condition that the Schmidt rank is maximal is that the matrix of coefficients $M$ is invertible, which means that $A$ needs to be invertible.
\label{remark:fsr}
\end{remark}

\subsection{Translation at the polynomial level}
\label{subsec:familiespolynomial}

The family $\mathcal W_\Phi^{\mathrm{fsr}}$ also admits a natural interpretation at the polynomial level. Let
\begin{equation}
    \ket{\eta} \in \mathcal H_A\otimes\mathcal H_C .
\end{equation}
For every vector $\ket{x} \in\mathcal H_A$, define the contraction
\begin{equation}
    \ket{\eta_x}
    :=
    (\bra{x}\otimes I_C)\ket{\eta}
    \in \mathcal H_C .
\end{equation}
Equivalently, if
\begin{equation}
    \ket{\eta}
    =
    \sum_{i,\alpha}\eta_{i\alpha}\ket{i}_A\otimes \ket\alpha_C
\end{equation}
then
\begin{equation}
    \ket{\eta_x}
    =
    \sum_{\alpha}
    \left(
        \sum_i \overline{x_i}\eta_{i\alpha}
    \right)
    \ket\alpha_C .
\end{equation}
Thus $\ket\eta$ defines a contraction map
\begin{equation}
    |x\rangle\longmapsto |\eta_x\rangle,
\end{equation}
whose rank is exactly the Schmidt rank of $\ket\eta$.

Now consider
\begin{equation}
    W_{\Phi,\eta}
    =
    (\operatorname{Id}_A\otimes\Phi^\dagger)
    \left(\ket\eta\bra\eta\right).
    \label{eq:generalized_choi}
\end{equation}
The DPS polynomial associated with $W_{\Phi,\eta}$ is
\begin{equation}
    E_{W_{\Phi,\eta}}(x,y)
    =
    \bra{ x\otimes y}|W_{\Phi,\eta}\ket{x\otimes y} .
\end{equation}
Using the adjoint relation, we obtain
\begin{align}
E_{W_{\Phi,\eta}}(x,y)
&=
\operatorname{Tr}\!\left[
\ket{\eta}\bra\eta
\left(
\ket x\bra x\otimes \Phi(\ket{y}\bra{y})
\right)
\right] \nonumber \\
&=
\bra{\eta_x}
\Phi(\ket{y}\bra{y})
\ket{\eta_x} 
\end{align}
Therefore,
\begin{equation}
    E_{W_{\Phi,\eta}}(x,y)
    =
    p_\Phi(y,\eta_x),
\end{equation}
where
\begin{equation}
    p_\Phi(u,v)
    =
    \bra v|\Phi(\ket u\bra u)\ket v\
\end{equation}
is the KSMZ polynomial associated with $\Phi$.

Hence, the polynomial associated with $W_{\Phi,\eta}$ is not independent of $p_\Phi$: it is obtained from $p_\Phi$ by replacing its second variable by the contracted vector $\eta_x$. In other words, each witness in the family
$\mathcal W_\Phi^{\mathrm{fsr}}$ corresponds to a pullback of the original KSMZ polynomial along the contraction map
\begin{equation}
    \ket{x} \mapsto \ket{\eta_x}
\end{equation}

The maximal Schmidt-rank condition now has a direct polynomial meaning. It requires the contraction map $x\mapsto\eta_x$ to have maximal rank, so that the resulting polynomial explores as many directions of $p_\Phi$ as allowed
by the dimensions. In the square case $\dim\mathcal H_A=\dim\mathcal H_C$, maximal Schmidt rank means that this contraction map is invertible. Thus, the usual Choi witness corresponds to the canonical maximally entangled choice,
whereas the full-Schmidt-rank family corresponds to all full-rank linear reparametrizations of the same underlying polynomial.

\begin{figure}
    \centering
    \begin{tikzpicture}[>=stealth, node distance=3.5cm and 3.5cm]
    
    \node[draw, rectangle, minimum width=3cm, minimum height=1cm] (gamma) { $p_{\Gamma_{\mathbb C}}$ };
    \node[draw, rectangle, minimum width=3cm, minimum height=1cm, right=of gamma] (pgamma) {$\Gamma_{\mathbb C}$};
    \node[draw, rectangle, minimum width=3cm, minimum height=1cm, below=of pgamma] (weta) {$W_{\Phi,\eta}$};
    \node[draw, rectangle, minimum width=3cm, minimum height=1cm, left=of weta] (peta) {$E_{W_{\Phi,\eta}}$};
    
    \draw[<->] (pgamma) -- node[above] {KSMZ isomorphism}(gamma);
    \draw[<->] (pgamma) -- node[sloped,above, align=center] {Generalized \\ Choï-Jamio$\l$kowski \\ Isomorphism}(weta);
    \draw[<->] (weta) -- node[below] {DPS}(peta);
    \draw[<->] (peta) -- node[sloped, above] {Contraction}(gamma);
    \draw[<->, dotted] (gamma) -- node[sloped, above]{$\text{DPS} \circ \text{Contraction}$} (weta);

    \end{tikzpicture}
    
    \caption{Compatibility diagram for the polynomial interpretation of the witness family $\mathcal W_{\Phi,\eta}$. The top horizontal arrow represents the KSMZ correspondence between the complexified map $\Gamma_{\mathbb C}$ and its associated biquadratic polynomial $p_{\Gamma_{\mathbb C}}$. The right vertical arrow is the Choi--Jamio{\l}kowski correspondence, which maps $\Gamma_{\mathbb C}$ to the witness $W_{\Phi,\eta}$ after applying $\ket{\eta}$ \eqref{eq:generalized_choi}. The bottom horizontal arrow is the DPS correspondence, associating to $W_{\Phi,\eta}$ the product-state polynomial $E_{W_{\Phi,\eta}}$. The left vertical arrow represents the contraction induced by $\ket{\eta}$, through which $E_{W_{\Phi,\eta}}$ is obtained from the original KSMZ polynomial. The dotted diagonal arrow summarizes the resulting compatibility between these correspondences.}
    \label{fig:placeholder}
\end{figure}

\subsection{The symmetric aspect of the KSMZ maps makes them inequivalent to the Choi map}
\label{subsec:ImpossibilityChoiBreuer}

The maps produced by the KSMZ algorithm are inequivalent to the Choi map, mostly due to the fact that they possess no skew-symmetric part. We now prove this fact. We also show that the KSMZ maps yields Partial Transpose invariant witnesses, which is the only indecomposable example we know. 

\subsubsection{Inequivalence with the generalized Choi maps}

Let us consider the generalized Choi maps proposed in~\cite{ha2011one}, and analysed in~\cite{Chruscinski2014EntanglementWitnesses}. Let $\Phi_{Choi}$ $:   \mathcal L(\mathcal H_B) \longrightarrow  \mathcal L(\mathcal H_B)$ and let us consider the following family of maps in $\mathcal H_B=M_3(\mathbb C)$:
\begin{widetext}
\begin{equation}
\Phi_{\mathrm{Choi}}[a,b,c](X)
=
\begin{pmatrix}
a x_{11} + b x_{22} + c x_{33} & -x_{12} & -x_{13} \\
-x_{21} & a x_{11} + b x_{22} + c x_{33} & -x_{23} \\
-x_{31} & -x_{32} & a x_{11} + b x_{22} + c x_{33}
\end{pmatrix}.
\end{equation}
\end{widetext}
It is known that
$\Phi_{\mathrm{Choi}}[a,b,c]$ , for $a,b,c$ positive reals, is PnCP if and only if
\begin{equation}
\tag{C1}\label{C1}
\left\{
\begin{aligned}
0 \le a \le 2, \\
a + b + c \le 2, \\
\text{if } a \le 1,\ \text{then } bc \ge 1 - a^2.
\end{aligned}
\right.
\end{equation}

\begin{widetext}
\begin{proposition}
    The KSMZ maps and the Generalized Choi maps are inequivalent.
\end{proposition}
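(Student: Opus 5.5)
The strategy is to isolate an invariant that every KSMZ map has and no generalized Choi map $\Phi_{\mathrm{Choi}}[a,b,c]$ has. The section title suggests the invariant: the absence of a skew-symmetric part.

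By construction (see \eqref{eq:complexification}), $\Gamma_{\mathbb C}(A+iB)=\Phi(A)$ for $A\in S_n(\mathbb R)$ and $B\in K_n(\mathbb R)$. This has two consequences.
\begin{itemize}
\item $\Gamma_{\mathbb C}$ annihilates every skew-symmetric input. In particular $\Gamma_{\mathbb C}(\ket{i}\bra{j}-\ket{j}\bra{i})=0$.
\item $\Gamma_{\mathbb C}$ commutes with transposition: $\Gamma_{\mathbb C}(X^T)=\Gamma_{\mathbb C}(X)$, and its range consists of real symmetric matrices.
\end{itemize}
Equivalently, $\Gamma_{\mathbb C}\circ T=\Gamma_{\mathbb C}$. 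At the level of the Choi operator, $C_{\Gamma_{\mathbb C}}$ is invariant under partial transposition on the first factor, since $\sum_{i,j}\ket{j}\bra{i}\otimes\Gamma_{\mathbb C}(\ket{i}\bra{j})=C_{\Gamma_{\mathbb C}}$.

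For the generalized Choi map, the off-diagonal action is $x_{ij}\mapsto -x_{ij}$. Hence $\Phi_{\mathrm{Choi}}[a,b,c](\ket{1}\bra{2}-\ket{2}\bra{1})=-(\ket{1}\bra{2}-\ket{2}\bra{1})\neq 0$. The same fact, read on the Choi operator, is that $\sum_{i\neq j}\ket{i}\bra{j}\otimes(-\ket{i}\bra{j})$ is not invariant under partial transposition.

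Next I fix what "equivalent" means. I take the standard notion used in \cite{Chruscinski2014EntanglementWitnesses}: two maps are equivalent if $\Phi_2=\mathrm{Ad}_{V}\circ\Phi_1\circ\mathrm{Ad}_{U}$ up to a positive scalar, with $U,V$ invertible (or unitary), and possibly composed with transposition. The natural invariant to use is the kernel dimension of the map, or more robustly the rank of the Choi operator together with its behaviour under partial transposition.

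The key computation concerns the kernel of $\Gamma_{\mathbb C}$. It contains $K_3(\mathbb C)$, which has complex dimension $3$, because every complex skew-symmetric matrix $B_1+iB_2$ with $B_1,B_2\in K_3(\mathbb R)$ is killed. The generalized Choi map, by contrast, has kernel dimension at most $1$ on $M_3(\mathbb C)$. Indeed, its off-diagonal entries are mapped bijectively, so the kernel lies in the diagonal. On the diagonal the map is $\mathrm{diag}(x)\mapsto (a x_{11}+bx_{22}+cx_{33})I$ as written, a rank-one functional, which gives a kernel of dimension $2$. If one uses the intended cyclic version (as in the Choi map displayed earlier), the diagonal block is the circulant $\begin{pmatrix}a&b&c\\c&a&b\\b&c&a\end{pmatrix}$, whose kernel has dimension at most $2$ and is generically $0$.

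Either way, $\dim\ker\Phi_{\mathrm{Choi}}\le 2<3\le\dim\ker\Gamma_{\mathbb C}$. Composition with invertible $\mathrm{Ad}_U,\mathrm{Ad}_V$ and with $T$ preserves kernel dimension, so no equivalence can exist. I would present this as the main line of the proof. The partial-transpose invariance of $C_{\Gamma_{\mathbb C}}$ would be a remark, noting that it is preserved under local unitary equivalence in the form $\mathrm{Ad}_{\bar U}\otimes\cdot$.

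The main obstacle is being careful about the definition of equivalence and the exact form of $\Phi_{\mathrm{Choi}}[a,b,c]$. If equivalence also allows adding CP or decomposable maps, as in "detecting the same PPT states", the kernel argument breaks down. In that case I would instead try to exhibit a PPT entangled state that is detected by one map and not by the other, for instance a Horodecki-type state tailored to the zeros of $p_{\Gamma_{\mathbb C}}$. I would leave that to the numerical section and keep the algebraic proof to the stated notion of equivalence.
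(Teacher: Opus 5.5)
Your argument is correct for the notion of equivalence you fix. It starts from the same observation as the paper: a KSMZ map vanishes on skew-symmetric inputs, while the Choi map acts as $-\mathrm{id}$ there. You then turn that observation into a genuinely different argument.

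The paper stays inside the real splitting $M_3(\mathbb R)=S_3(\mathbb R)\oplus K_3(\mathbb R)$. By uniqueness of this decomposition, the off-diagonal action $X_{AD}\mapsto -X_{AD}$ of $\Phi_{\mathrm{Choi}}[a,b,c]$ forces a nonzero skew-symmetric component, whereas a KSMZ map is zero on $K_n(\mathbb R)$. The appendix then compares symmetric cores. It argues that in the PnCP range the polynomial of the symmetric restriction of $\Phi_{\mathrm{Choi}}[a,b,c]$ is SOS, because its restricted Choi matrix is PSD. Since the algorithm only produces non-SOS cores, it cannot output a generalized Choi map, whatever skew-symmetric extension is added.

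You replace this with a basis-free count: $\dim\ker\Gamma_{\mathbb C}\ge\dim K_3(\mathbb C)=3>2\ge\dim\ker\Phi_{\mathrm{Choi}}[a,b,c]$. This count is invariant under $\mathrm{Ad}_U$ and $\mathrm{Ad}_V$ with arbitrary invertible complex $U,V$, under transposition, and under rescaling. It therefore gives inequivalence in the usual local sense. The paper's splitting is preserved only by real orthogonal conjugations, so its main-text argument really shows only that no generalized Choi map is itself a KSMZ map.

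What you give up is the extension-robust statement. A nonzero skew-symmetric extension of a KSMZ core can shrink the kernel to $\{0\}$. In that case only a core-level obstruction, like the paper's SOS argument, can work.

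A few slips to fix:
\begin{itemize}
\item You first say the kernel of $\Phi_{\mathrm{Choi}}[a,b,c]$ has dimension at most $1$ and then compute $2$. For the displayed form it is exactly $2$.
\item In both the displayed and the circulant form, the bound $\le 2$ needs $(a,b,c)\neq 0$, and you should derive this from positivity. Condition (C1) excludes $a=b=c=0$. Directly, $\Phi_{\mathrm{Choi}}[0,0,0]$ sends $(\ket{1}+\ket{2})(\bra{1}+\bra{2})\succeq 0$ to $-(\ket{1}\bra{2}+\ket{2}\bra{1})$, which is not PSD.
\item The range of $\Gamma_{\mathbb C}$ consists of complex symmetric matrices. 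It is real symmetric only on Hermitian inputs.
\item Do not lean on the partial-transpose remark as an equivalence invariant. Under $\Phi\mapsto\Phi\circ\mathrm{Ad}_U$ the Choi operator becomes $(U^T\otimes I)C_\Phi(\bar U\otimes I)$. Invariance under $T_A$ is guaranteed to survive only when $U$ is real up to a phase, whereas the kernel dimension survives every invertible $U$.
\end{itemize}
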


\begin{proof}
    
We want to decompose the restriction of this Choi map to  $M_n(\mathbb R)$ into it symmetric and antisymetric part. The transformation of the diagonal elements:
\end{proof}

\[
\begin{pmatrix}
x_{11} & x_{12} & x_{13} \\
x_{21} & x_{22} & x_{23} \\
x_{31} & x_{32} & x_{33}
\end{pmatrix}
\longmapsto
\begin{pmatrix}
a x_{11} + b x_{22} + c x_{33} & * & * \\
* & a x_{11} + b x_{22} + c x_{33} & * \\
* & * & a x_{11} + b x_{22} + c x_{33}
\end{pmatrix}.
\]
\end{widetext}
\begin{proof}[Proof continued]
belongs to the symmetric part. If we now consider the remaining terms
\end{proof}
\begin{widetext}
\[
X_{AD}=\begin{pmatrix}
* & x_{12} & x_{13} \\
x_{21} & * & x_{23} \\
x_{31} & x_{32} & *
\end{pmatrix}
\longmapsto
\begin{pmatrix}
* & -x_{12} & -x_{13} \\
-x_{21} & * & -x_{23} \\
-x_{31} & -x_{32} & *
\end{pmatrix} = -X_{AD}
\]
\end{widetext}
\begin{proof}[Proof continued]
The only way to decompose 
\[
X_{AD}\longmapsto- X_{AD}
\]
is then through
\[
- \left( \frac{X_{AD} + X_{AD}^T}{2} \right)
\;\oplus\;
- \left( \frac{X_{AD} - X_{AD}^T}{2} \right),
\]
where
\[
\frac{X_{AD} + X_{AD}^T}{2} \in S_n(\mathbb{R}),
\qquad
\frac{X_{AD} - X_{AD}^T}{2} \in K_n(\mathbb{R}).
\]
by unicity of the decomposition in the symmetric and skew-symmetric part. Hence, a skew-symmetric term is necessary unless $a=b=c=0$ but this breaks the conditions to be PnCP displayed in~\eqref{C1}. Additionally, we show in appendix~\ref{app:InequivalenceChoiExtension} that it is not possible to construct a generalized Choi map from the KSMZ algorithm, even with an anti-symetric extension.
\end{proof}

\subsection{Inequivalence with the Breuer-Hall maps}
\label{subsec:inequivalence_breuer_hall}

We now compare the KSMZ construction with the Breuer--Hall maps
\cite{Breuer2006,Hall2006}. Let $d=2r\geq 4$, and let
$U\in M_d(\mathbb C)$ be an antisymmetric unitary, $U^T=-U$. The
Breuer-Hall map is, up to an irrelevant positive normalization factor,
\begin{equation}
    \Phi_U(X)
    =
    \operatorname{Tr}(X)I_d
    -
    X
    -
    U X^T U^\dagger .
    \label{eq:breuer_hall_map}
\end{equation}
The normalization factor plays no role for the present discussion, since we
only compare the algebraic structure of the corresponding cones.

Since the choice of antisymmetric unitary is unique up to a unitary change of basis, it is enough to prove the statement for the standard representative
\begin{equation}
    J
    =
    \bigoplus_{\alpha=1}^{r}
    \begin{pmatrix}
        0 & 1 \\
        -1 & 0
    \end{pmatrix},
    \qquad
    J^T=-J,
    \qquad
    J^T J = I_d .
\end{equation}
We denote the corresponding Breuer-Hall map by $\Phi_J$.

\begin{proposition}
\label{prop:ksmz_not_breuer_hall}
The Breuer-Hall maps are not equivalent with the KSMZ construction, even after allowing an arbitrary skew-symmetric extension.
\end{proposition}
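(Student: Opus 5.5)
Following the strategy used for the generalized Choi maps, I will split $\Phi_J$, restricted to real matrices $M_d(\mathbb R)=S_d(\mathbb R)\oplus K_d(\mathbb R)$ as in \eqref{eq:direct_sum}, into its action on the symmetric and on the skew-symmetric parts. A KSMZ map is by construction of the form $\Gamma_{\mathbb C}=(\Phi\oplus 0)_{\mathbb C}$: it vanishes identically on $K_d(\mathbb R)$, and more generally a "skew-symmetric extension" $\Phi\oplus\Psi$ lets the skew part be acted on by an arbitrary linear $\Psi$ while the symmetric part is still governed by a KSMZ form. The proof therefore has two parts. First, $\Phi_J$ does not kill $K_d(\mathbb R)$, so it is not a trivial extension. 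Second, and this is the main part, the symmetric restriction $\Phi_J|_{S_d(\mathbb R)}$ cannot be produced by the algorithm, so no choice of extension $\Psi$ helps.

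For the first step, I take $X\in K_d(\mathbb R)$, so $\operatorname{Tr}X=0$ and $X^T=-X$. Then
\[
\Phi_J(X)=-X+JXJ^T .
\]
I will exhibit a concrete $X=E_{12}-E_{21}$ with $d\ge 4$, chosen in a different $2\times 2$ block from the one it interacts with, or simply compute. For $X$ inside the first $J$-block, $JXJ^T=X$, so $\Phi_J(X)=0$. Instead I take $X=E_{13}-E_{31}$. Then $JXJ^T$ lies in the $(2,4)$ entries, so $\Phi_J(X)=-X+(\text{something supported on }\{2,4\})\neq 0$. Hence $\Phi_J\neq(\Phi\oplus 0)_{\mathbb C}$ for any $\Phi$.

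For the second step, I compute the real biquadratic form of $\Phi_J$ on $S_d(\mathbb R)$. For real $x,y$,
\[
p_{\Phi_J}(x,y)=\|x\|^2\|y\|^2-(x\cdot y)^2-(y^TJx)^2 .
\]
The key structural fact is that $Jx\perp x$ and $\|Jx\|=\|x\|$. So for $x\neq0$ the vectors $x/\|x\|$ and $Jx/\|x\|$ are orthonormal, and by Bessel $p_{\Phi_J}(x,y)=\|x\|^2\|P^\perp_{x,Jx}y\|^2\ge 0$. Writing this as $\|x\|^2\|y\|^2-(x\cdot y)^2-(Jx\cdot y)^2=\sum_{i<j}(x_iy_j-x_jy_i)^2-(y^TJx)^2$ suggests an explicit SOS certificate. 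I would try to show that the real form $p_{\Phi_J}$ is in fact SOS, for instance by noting that in $d=4$ it is a sum of squares of the $2\times2$ minors after a Plücker/quaternionic identity. Indeed $\|x\|^2\|y\|^2-(x\cdot y)^2-(Jx\cdot y)^2-(Kx\cdot y)^2-(Lx\cdot y)^2=0$ for quaternion units, so $p_{\Phi_J}=(Kx\cdot y)^2+(Lx\cdot y)^2$. The generalisation to $d=2r$ uses the Clifford/block structure or a Gram-matrix argument. Since every KSMZ real form is non-SOS by construction (Section~\ref{subsubsec:AlgorithmKSMZ}), $\Phi_J|_{S_d}$ cannot arise from the algorithm, whatever $\Psi$ is put on the skew part.

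The main obstacle is this SOS certificate for general $d=2r$. If it fails beyond $d=4$, my fallback is to compare zero sets. The zeros of $p_{\Phi_J}$ form the positive-dimensional variety $\{y\in\operatorname{span}(x,Jx)\}$, whereas the KSMZ construction produces forms vanishing exactly at finitely many prescribed Segre points, since $H$ vanishes only there and $\delta f$ is a small perturbation. Finitely many real zeros versus a continuum then separates the two classes directly.
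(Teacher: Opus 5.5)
Your strategy is the paper's. You pass to the symmetric restriction $\Psi_J=\Phi_J|_{S_d(\mathbb R)}$, which makes any skew-symmetric extension irrelevant. Your first step about $K_d(\mathbb R)$ is therefore correct but not needed. You then compute $p_{\Psi_J}(x,y)=\|x\|^2\|y\|^2-(x\cdot y)^2-(y^TJx)^2$, prove it is SOS, and contrast it with the non-SOS KSMZ core.

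\textbf{The gap.} The SOS certificate is the whole content of the proof, and you establish it only for $d=4$. The Clifford generalization of your quaternionic identity cannot work in general. An identity $\|x\|^2\|y\|^2=\sum_{k=0}^{d-1}(A_kx\cdot y)^2$ with $A_0=I$, $A_1=J$ and $A_k$ orthogonal forces $A_1,\dots,A_{d-1}$ to be pairwise anticommuting orthogonal complex structures on $\mathbb R^d$. By Hurwitz--Radon these exist only for $d\in\{1,2,4,8\}$, so already $d=6$ fails.

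\textbf{The missing idea.} Use only the single complex structure $J$. Set $z_\alpha=x_{2\alpha-1}+ix_{2\alpha}$ and $w_\alpha=y_{2\alpha-1}+iy_{2\alpha}$. Then $(x\cdot y)^2+(y^TJx)^2=\bigl|\sum_\alpha \bar z_\alpha w_\alpha\bigr|^2$, and the complex Lagrange identity gives
\[
p_{\Psi_J}(x,y)=\|z\|^2\|w\|^2-\Bigl|\sum_\alpha \bar z_\alpha w_\alpha\Bigr|^2=\sum_{1\le \alpha<\beta\le r}\bigl|z_\alpha w_\beta-z_\beta w_\alpha\bigr|^2 .
\]
The real and imaginary parts of each $z_\alpha w_\beta-z_\beta w_\alpha$ are real bilinear forms in $(x,y)$, so this is an SOS certificate. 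It is exactly the sum, over pairs $\alpha<\beta$ of $J$-blocks, of your $d=4$ form evaluated on the four coordinates of those two blocks. For $r=2$, your terms $Kx\cdot y$ and $Lx\cdot y$ are, up to sign and choice of units, $\Re$ and $\Im$ of $z_1w_2-z_2w_1$. So your quaternionic certificate does extend, but through this pairwise decomposition rather than a global Clifford structure.

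\textbf{The zero-set fallback.} This is a genuinely different route and can be made rigorous, but not as written.
\begin{itemize}
\item That $H$ vanishes on the real Segre variety only at the $s_k$ is the genericity hypothesis \eqref{eq:zero-set-assumption}. It is not a consequence of the construction.
\item ``$\delta f$ is a small perturbation'' does not by itself exclude new zeros. You need either the local bounds $H\ge c_k\|u\|^2$, $|f|\le C_k\|u\|^2$ from the proof of Lemma~\ref{lem:small-delta}, or the following identity. For an admissible $\delta'>\delta$, write $q_\delta=(1-\delta/\delta')H+(\delta/\delta')q_{\delta'}$. This is a positive combination of two nonnegative forms, so the zero set of $q_\delta$ is contained in that of $H$.
\item At the maximal admissible $\delta$, new zeros can appear.
\item Zeros must be counted projectively, on pairs $([x],[y])$, since every biquadratic form vanishes on $x=0$.
\end{itemize}
So this route excludes KSMZ outputs only under these extra hypotheses. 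The SOS route needs nothing beyond $f\notin\operatorname{span}\{h_ih_j\}$, i.e.\ Lemma~\ref{lem:nonsos}, which holds for every $\delta>0$.
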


\begin{proof}
The KSMZ construction starts from a real map
\[
    \Phi:S_d(\mathbb R)\longrightarrow S_d(\mathbb R)
\]
whose associated biquadratic form
\[
    p_\Phi(x,y)=y^T\Phi(xx^T)y
\]
is nonnegative but not a sum of squares. Any subsequent extension to
$M_d(\mathbb R)$ is obtained by choosing an arbitrary action on
$K_d(\mathbb R)$, but this choice cannot modify the restriction of the map
to $S_d(\mathbb R)$.

We therefore only have to inspect the symmetric real restriction of the
Breuer-Hall map. Let
\[
    \Psi_J
    :=
    \Phi_J|_{S_d(\mathbb R)} .
\]
Since $S^T=S$ for $S\in S_d(\mathbb R)$, one has
\begin{equation}
    \Psi_J(S)
    =
    \operatorname{Tr}(S)I_d
    -
    S
    -
    J S J^T .
    \label{eq:BH_symmetric_restriction}
\end{equation}
The real biquadratic form associated with this restriction is
\begin{equation}
    p_{\Psi_J}(x,y)
    =
    y^T\Psi_J(xx^T)y .
\end{equation}
By Proposition~\ref{prop:BH_symmetric_core_SOS}, proved in
Appendix~\ref{app:breuer_hall_sos_core}, this polynomial is a real SOS
biquadratic form.

Hence the symmetric real core of the Breuer-Hall map is SOS, whereas a genuine KSMZ core is non-SOS by construction. Since the skew-symmetric extension cannot alter the symmetric restriction, no KSMZ map can reproduce $\Phi_J$. Therefore the Breuer-Hall maps are inequivalent to the KSMZ maps in the sense of the real symmetric-core obstruction.
\end{proof}

The conclusion is conceptually useful: the indecomposability of the Breuer-Hall maps does not come from a positive non-SOS real symmetric core,
as in the KSMZ construction. It comes from the antiunitary structure
$X\mapsto U X^T U^\dagger$, which is invisible if one only keeps the real
symmetric core.

\subsection{Existence of indecomposable EWs that are Partial Transpose invariant}

We can decompose $\ket{i} \bra{j}$ in its Hermitian and anti-Hermitian part:
\begin{equation}
    \ket{i} \bra{j} = \frac{\ket{i} \bra{j} + \ket{j} \bra{i} }{2} + \frac{\ket{i} \bra{j} - \ket{j} \bra{i} }{2} \nonumber
\end{equation}

We consider the Choi operator of a KSMZ map $C_{\phi}$. Since its application on the anti-Hermitian part is trivial, we have:
\begin{align}
    C_{\phi} &= \sum \ket{i} \bra{j} \otimes \phi ( \ket{i} \bra{j} ) \nonumber \\
             &= \sum \frac{1}{2} \ket{i} \bra{j} \otimes \phi ( \ket{i} \bra{j} + \ket{j} \bra{i})
\end{align}

\begin{align}
    C_{\phi}^{\Gamma} &= \sum \frac{1}{2} \ket{i} \bra{j} \otimes \phi ( \ket{i} \bra{j} + \ket{j} \bra{i})^{T} \nonumber \\
    &= \sum \frac{1}{2} \ket{i} \bra{j} \otimes \phi ( \ket{i} \bra{j} + \ket{j} \bra{i})
\end{align}
Since $\phi ( \ket{i} \bra{j} + \ket{j} \bra{i}) \in S_n(\mathbb{R})$.

So the KSMZ construction gives indecomposable entanglement witnesses that are partial-transpose invariant. This property is not specific to the Choi witness. Indeed, the whole family of witnesses introduced Section~\ref{subsec:recover_detection_power_family_witnesses} satisfies the same invariance. For any
\begin{equation}
W_{\phi,\eta}
=
(\operatorname{Id}\otimes\phi^\dagger)(\ket{\eta}\bra{\eta}),
\end{equation}
the trivial skew-symmetric extension implies that the range of $\phi^\dagger$ is contained in $S_n(\mathbb R)$. Hence
\begin{equation}
T\circ\phi^\dagger=\phi^\dagger,
\end{equation}
and therefore
\begin{equation}
W_{\phi,\eta}^{T_B}
=
(\operatorname{Id}\otimes T\circ\phi^\dagger)(\ket{\eta}\bra{\eta})
=
W_{\phi,\eta}.
\end{equation}


\section{Entanglement Detection}
\label{sec:EntanglementDetection}

We now present the detection of entangled states thanks to the refined algorithm.

To produce one map that acts on a $3 \times 3$ subsystem, it takes on average 2 minutes with a 12th Gen Intel(R) Core(TM) i5-1235U with 10 cores, 12 threads, and 16 Gigabytes of RAM. Note that our algorithm rejects polynomials that do not pass the decisive test. 
In order to boost our performance, we used the LIP6 Cluster HPC, with 2 x Intel Xeon E5-2690v3 	24 cores / 48 threads and a RAM of 192 GB.  We created a library of 20,000 maps in $3 \times 3$, which took several days.

\subsection{Detection of entangled states through SDPs}

We first propose an SDP that detects the entanglement of a non-specific entangled state. It is not possible to directly use the PnCP maps inside an optimization program, since the detection region, 
\[
\mathds{1} \otimes \Lambda \ngtr 0
\]
is not convex, which renders the algorithm inefficient. Trying different approaches~\footnote{for instance, \begin{equation}
\begin{aligned}
\min_{\rho, t }\quad & t\\
\text{subject to}\quad & \rho \succeq 0,\\
& \rho = \rho^{\dagger} \\
& \text{Tr}(\rho)=1. \label{eq:SDP_NPT} \\
& \mathds{1} \otimes \Lambda \pm t \mathds{1} \geq 0
\end{aligned}
\end{equation}} to incorporate this constraint leads to unbounded problems or trivial solutions. Consequently, all the SDPs we implement use EWs associated to one of the PnCP map. To start with, we use only the usual Choi Isomorphism and define 
\[
W_{\text{KSMZ}} = d* \mathds{1} \otimes \Lambda \left( \ket{\phi^+} \bra{\phi^+} \right)
\]
where $\ket{\phi^+}$ is the maximally entangled state in dimension $d$. The maps and the density matrices we use are given, alongside with numerical precision issues and solver-dependent certification details, in ~\ref{app:numerics}.

Next, we check that our our entanglement witness $W_{\text{KSMZ}}$ can detect entanglement. To do so, we diagonalise it and exhibit a negative eigenvalue. The eigenstate associated to it could also be certified to be entangled via a simple Partial Transposition. The eigenvalue is of the orders of magnitude of $10^{-1}$, indicating that the EW we produce are noise robust with regard to NPT states. 

\par However we want to check that our maps are capable of detecting PPT entangled states. To do so, we run the following SDP: 
\begin{equation}
\begin{aligned}
\min_{\rho}\quad & \text{Tr}(W_{\text{KSMZ}}\rho)\\
\text{subject to}\quad & \rho \succeq 0,\\
& \rho = \rho^{\dagger} \\
& \text{Tr}(\rho)=1.  \\
& \rho^{T_B}\succeq 0, \label{eq:SDP_PPT}
\end{aligned}
\end{equation}

Because the feasible set is exactly the set of bipartite density operators with positive partial transpose, any strictly negative optimal value certifies that $W$ detects a PPT entangled state. This SDP is a key ingredient of our numerical analysis: it produces an explicit optimizer $\rho_\star$ satisfying \eqref{eq:SDP_PPT} and, consequently, numerical evidence for the indecomposability of $W$.

It is also interesting to demonstrate that the maps we produced are \textit{non-equivalent} with the Partial Transposition. So far we proved that they could detect states that the PPT criterion cannot, but the converse is actually also true. For a given KSMZ PnCP map and its associated EW $W_{\text{KSMZ}}$, we generated random NPT entangled states, and verified that the witness could not detect them.

\subsection{Beyond Choi Isomorphism} However, it is possible to enhance our criteria further. Indeed, the Choi matrix of a given map is not the only witness one can construct from it, as we demonstrated in subsection~\ref{subsec:recover_detection_power_family_witnesses}. By replacing the maximally entangled state $\ket{\phi^+} \bra{\phi^+}$ inside

\[
W_{\text{KSMZ}} = d \  \mathds{1} \otimes \Lambda \left( \ket{\phi^+} \bra{\phi^+} \right)
\]
by another entangled state, which is such that the operator created has at least one negative eigenvalue, we have defined another EW stemming from the same PnCP map. This new EW corresponds to another hyperplane in the plane of quantum states and can potentially lead to a better violation than the Choi witness. According to our result, the necessary and sufficient set of states that needs to be considered is the set of full Schmidt rank pure states. 
As explained in Remark~\ref{remark:fsr}, we can consider families of witness of the form
\[
W_{\text{KSMZ}}(A) = \left( \mathds{1} \otimes A \right) \left[ \mathds{1} \otimes \Lambda \left( \ket{\phi^+} \bra{\phi^+} \right) \right] \left( \mathds{1} \otimes A \right)^{\dagger}
\]
and sample on the set 
\[
\{ A \mid A \in \text{GL}_n(\mathbb{C}) \}
\]
In practice, we tried different optimization routines, sometimes restricting to the set of diagonal invertible matrices to lower the computational complexity. 
For instance, the minimum value over the set of PPT states for the quantity
\[
\Tr( W_{\text{KSMZ}} \rho),
\]
evaluated on the $20$ first maps, evolved from $-3.18 \times 10^{-8}$ with the Choi Witness to $-7.20 \times 10^{-8}$ when we considered a family of witness with diagonal invertible matrices, but only to $-4.68 \times 10^{-8}$ when we performed an optimization over the set of all invertible functions in the same comparable amount of time (few minutes). 

However, the optimization over the set of invertible matrices led to a more notable improvement for the best violation among the first $200$ maps. It enhanced the violation from $-5.02 \times 10^{-8}$ to $-1.37 \times 10^{-7}$ for the $62^{\text{nd}}$ map. 

\subsection{Average numerical indecomposability}
In addition, we want to numerically characterize the indecomposability of the KSMZ maps. Recall that a decomposable witness can be written 
\begin{equation}
W_D = P + Q^{T_B}, \qquad P,Q \succeq 0,
\end{equation}
so that the decomposable cone is
\begin{equation}
Dec=\{\,P+Q^{T_B}: P,Q\succeq 0\,\}.
\end{equation}
We then consider the operator-norm distance
\begin{equation}
d_\infty(W,Dec)
=
\inf_{W_D\in Dec}\|W-W_D\|_\infty. \label{eq:distance_dec}
\end{equation}
Now, we evaluate the following SDP:
\begin{equation} 
    \begin{aligned}
    \text{minimize}\quad & t\\
    \text{subject to}\quad & P \succeq 0,\quad Q \succeq 0,\\
    & -tI \preceq W-(P+Q^{T_B}) \preceq tI. \label{SDP:distance_dec}
    \end{aligned}
\end{equation}
The optimal value \(t^\star\) is exactly the distance from \(W\) to the decomposable cone in operator norm. In particular, a small value of \(t^\star\) means that the witness lies close to the decomposable set, whereas a larger value indicates a more robust form of indecomposability.

In parallel, one may also evaluate the PPT margin
\begin{equation}
\mu(W):=
-\min_{\rho\in\mathrm{PPT}} \operatorname{Tr}(W\rho), \label{eq:PPT_margin}
\end{equation}
which is computed through the SDP
\begin{equation}
    \begin{aligned}
    \text{minimize}\quad & \operatorname{Tr}(W\rho)\\
    \text{subject to}\quad & \rho \succeq 0 \\
    & \rho = \rho^{\dagger} \\
     & \rho^{T_B}\succeq 0\\
     &\operatorname{Tr}(\rho)=1. \label{SDP:PPT_margin}
    \end{aligned}
\end{equation}
By duality between the PPT cone and the decomposable cone \cite{Fang2020SumSquares}, 
$\mu(W)$ provides a quantitative measure of indecomposability and, in the regime relevant for our numerics, is equal to the distance $d_\infty(W,\mathrm{Dec})$, as illustrated in Fig~\ref{fig:dual_dec_ppt}. 

\begin{figure*}[t!]
    \centering
    \includegraphics[width=1\textwidth]{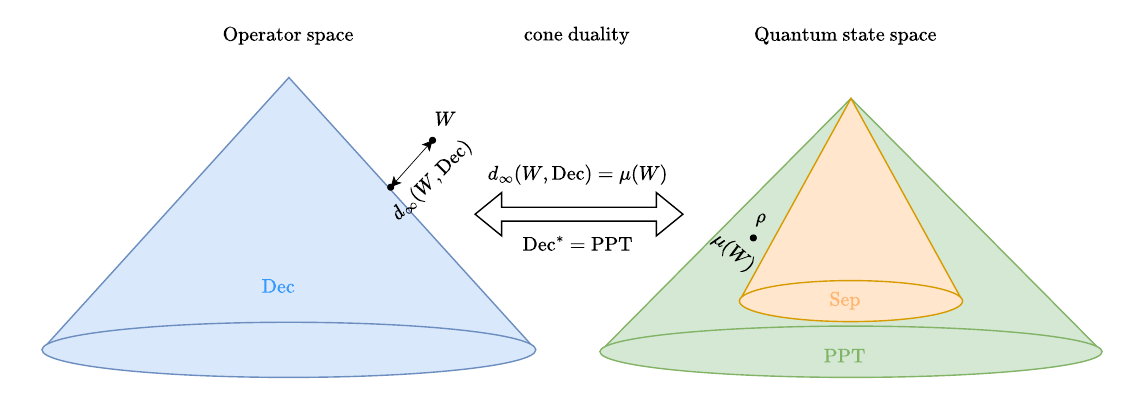}
    \caption{Geometric interpretation of decomposability and PPT detection. 
Left: in operator space, $d_\infty(W,\mathrm{Dec})$ denotes the operator-norm distance from the witness $W$ to the decomposable cone. 
Right: in state space, the PPT margin $\mu(W)=-\min_{\rho\in \mathrm{PPT}_1}\Tr(W\rho)$ measures the maximal violation of \(W\) on normalized PPT states. An optimal state $\rho$ defines the supporting functional associated with this violation. The relation between the two quantities is governed by the cone duality $\mathrm{Dec}^*=\mathrm{PPT}$.}
    \label{fig:dual_dec_ppt}
\end{figure*}

We use these SDPs map by map, and then average the resulting values over the generated family in order to obtain a numerical characterization of how close the KSMZ witnesses are to the decomposable boundary. 

In our dataset, the maximal PPT violation observed is only of the order of
\begin{equation}
\min_{\rho\in \mathrm{PPT}_1}\Tr(W\rho)\sim -3\times 10^{-6},
\end{equation}
so that the corresponding PPT margin remains small. This indicates that, although the witnesses are numerically indecomposable, they typically lie  close to the decomposable cone. Hence, the KSMZ witnesses appear to detect PPT entanglement only weakly on average, which is consistent with a near-boundary geometric location. Since these values are close to the numerical precision of the SDP solver, we performed additional numerical consistency checks, which are reported in Appendix~\ref{app:numerics}.

The previous SDPs were written for a fixed witness $W$. In the first numerical tests, we apply them to the Choi witness associated with a KSMZ map. However, as explained in Section~\ref{subsec:recover_detection_power_family_witnesses}, a single positive map $\Phi$ gives rise to the larger family
\begin{equation}
W_{\Phi,\eta}
=
(\operatorname{Id}\otimes\Phi^\dagger)(\ket{\eta}\bra{\eta}),
\end{equation}
where $\ket{\eta}$ ranges over full-Schmidt-rank vectors. Therefore, the same numerical indecomposability analysis naturally extends to the whole family: for each fixed $\ket{\eta}$, one simply replaces $W$ by $W_{\Phi,\eta}$ in
\eqref{SDP:distance_dec} and \eqref{SDP:PPT_margin}. This gives the quantities
\begin{equation}
d_\infty(W_{\Phi,\eta},\mathrm{Dec})
\qquad \text{and} \qquad
\mu(W_{\Phi,\eta}),
\end{equation}
which quantify the distance to the decomposable cone and the PPT-detection margin of the particular hyperplane selected by $\ket{\eta}$.

In the square case $n=m$, the full-Schmidt-rank condition also gives a structural interpretation of this family. Every full-Schmidt-rank vector can be written as
\begin{equation}
\ket{\eta}
=
(I\otimes A_\eta)\ket{\Omega},
\end{equation}
with $A_\eta$ invertible. Hence $W_{\Phi,\eta}$ is the Choi operator of the transformed map
\begin{equation}
\Phi^\dagger\circ \operatorname{Ad}_{A_\eta},
\qquad
\operatorname{Ad}_{A_\eta}(X)=A_\eta X A_\eta^\dagger.
\end{equation}
Since $\operatorname{Ad}_{A_\eta}$ is an invertible completely positive map,
decomposability is preserved under this transformation. Thus, independently of numerical approximations, the full-Schmidt-rank witnesses inherit indecomposability from the original KSMZ map. Nevertheless, the numerical quantities $d_\infty(W_{\Phi,\eta},\mathrm{Dec})$ and $\mu(W_{\Phi,\eta})$ are not invariant under the choice of $\eta$. The family may contain witnesses that are closer or farther from the decomposable cone than the Choi witness. Consequently, Section~\ref{SDP:distance_dec} and Section~\ref{SDP:PPT_margin} provide not only a way to quantify the average indecomposability of the Choi witnesses, but also a practical diagnostic for optimizing over the family of witnesses associated with the same KSMZ map.

\subsection{Benchmark against general criteria}

To complete the characterization of the KSMZ maps, we used the MATLAB package QETLAB (Quantum Entanglement Theory Laboratory)~\cite{qetlab}. To our knowledge, it is currently the most comprehensive and up-to-date package for entanglement detection, owing to its broad catalogue of criteria. In particular, the function IsSeparable can decide whether a given density matrix is separable or entangled by applying a hierarchy of 19 tests, including 7 entanglement criteria.

Since the criteria implemented in \textit{IsSeparable} are arranged in a hierarchy of increasing strength, we attribute each detection to the first criterion in the sequence that certifies entanglement. This provides a consistent, non-overlapping classification of the detection mechanisms according to their position in the hierarchy. This can however hide the fact that the most powerful criteria can potentially detect all the states previously detected. We provide the list of criteria of the \textit{IsSeparable} function in its original order in Appendix~\ref{app:IsSeparable Criterion List}.

Among the 2,000 states that maximally violate the KSMZ witness in our dataset, \textit{IsSeparable} detected the entanglement of only 33 states, corresponding to 1.7\% of the total. It therefore failed to detect entanglement in the remaining 98.3\% of cases. Among the detected states, 9 were identified by the realignment criterion, 11 by the stronger realignment criterion and 13 by the filter covariance matrix criterion, as displayed in Fig~\ref{fig:isseparableresults}.

\begin{figure}[h!]
    \centering
    \includegraphics[width=1.\linewidth]{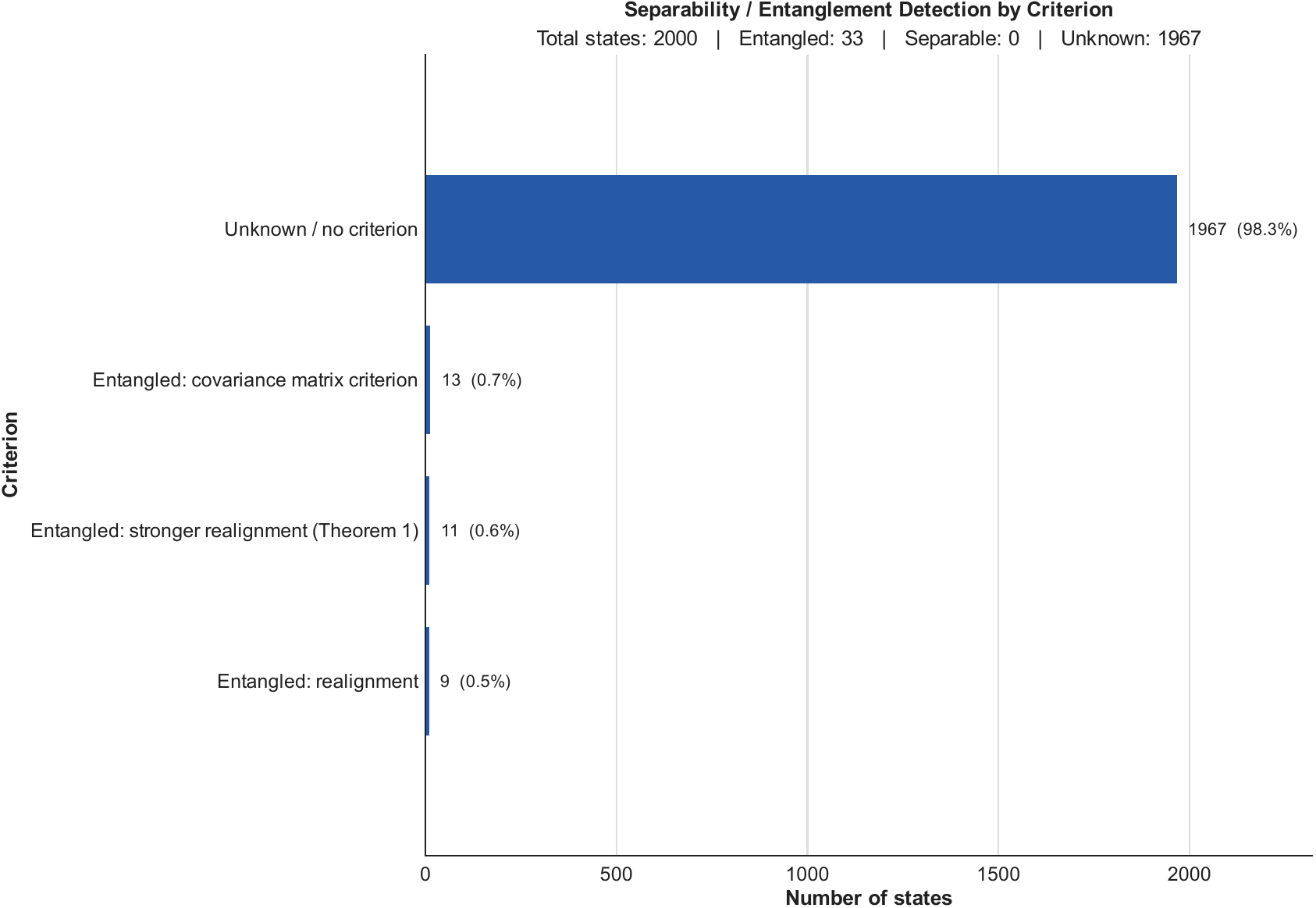}
    \caption{Results of the IsSeparable function for the entanglement detection of the 2,000 PPT entangled states that are maximally detected by the KSMZ maps. IsSeparable classifies loosely the criteria}
    \label{fig:isseparableresults}
\end{figure}

In order to complete this characterization, we searched the entanglement of these states with the DPS criterion only~\cite{Doherty2004CompleteFamily}, this time allowing for PPT symmetric extension up to level $4$, whereas it is by default set to $2$ in \textit{IsSeparable}. The DPS criterion is particularly interesting since it converges to the separable set as illustrated in Fig~\ref{fig:DPS_convergence}. However, its exploding computational cost prevents it from being able to detect all entangled states in practice. $15$ states out of the $2,000$ we considered were detected in this way. Interestingly, they were already detected at the level $2$ of the DPS hierarchy. 

\begin{figure}[h!]
    \centering
    \includegraphics[width=0.7\linewidth]{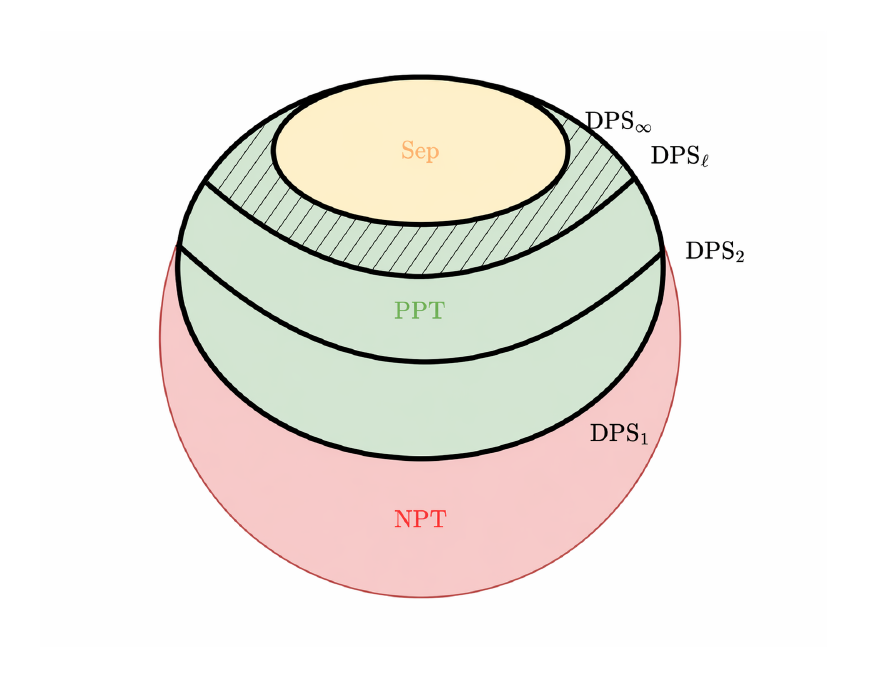}
    \caption{The DPS hierarchy converges to the set of separable states. At infinity, it can thus detect all entangled states. In practice, the computation cost explodes as the level of the hierarchy increases.}
    \label{fig:DPS_convergence}
\end{figure}


\subsection{Numerical characterization}

In the light of all these results, it is possible to propose a characterization of the KSMZ maps. These PnCP maps are capable of robustly detecting NPT states. They maps are inequivalent with either Choi, Breuer, or any PnCP maps we found so far. They are capable of detecting NPT states in a robust way, and,  in a non-robust way, PPT entangled states very close to the set of separable sets. These states are not detected by most-known criteria. In order to detect them with a DPS hierarchy, it would necessitate to reach levels of the hierarchy that are out of reach because of complexity issues. In this sense, these maps contribute significantly to the separability problem in dimensions $3 \times 3$ or $4 \times 4$. However, unlike what is said in~\cite{Bhardwaj2023PracticalApproach}, it is not possible to detect in polynomial time the entanglement of an unknown state by randomly producing and applying some of these maps. Still, the constitution of a large library of KSMZ maps could be incorporated with some benefit in a function like IsSeparable, as a test for entanglement inequivalent to any others. 

\par The authors were suggested at some stage to modify the algorithm, so that the polynomials and thus, the maps, that are created, would be adapted to a target unknown state. Indeed, since the generational part of the algorithm starts with a collection of random points, it would seem logical to implement a machine learning framework, that choses these points according to their capability to generate interesting maps with regard to the target. However, a careful analysis of the algorithm reveals that not only one, but two steps actually contain a choice of random objects. Because of that, we could verify that the polynomials generated do not depend continuously on the initial points, which prevents this line of development from becoming fruitful.




\section{Conclusion}
\label{sec:conclusion}

In this work, we presented a new criteria to detect entanglement based on PnCP maps. In practice, we implemented an algorithm proposed in~\cite{Klep2017ManyMorePositiveMaps} that generates positive non SoS polynomials and turns them into PnCP maps through a well chosen isomorphism. Then, we numerically produced these maps \textit{en masse}, after having added a test to ensure no numerical imprecision could hinder our results. We theoretically investigated the entanglement power of these maps, showed they were inequivalent with most known PnCPs and explored their location within the set of positive maps. We showed that they could detect NPT states in a robust way, and that they could detect PPT states that most criteria failed to detect. Doing so, we presented a clear framework of the relations between polynomials, maps and operators for entanglement detection. We hope that this will serve as an incentive for further explorations of the rich frameworks of polynomial optimization in the quantum information community. As far as this work is concerned, we plan to extend it in several aspects. First, we would like to explore the construction of other PnCP maps using the construction of positive non \SOS polynomials, \textit{à la}~\cite{buckley2022newexamplesentangled} for instance. Second, we would like to construct new PnCP maps from the KSMZ maps by adding a skew-symetric part. This problem is NP-hard in general because it necessitates checking the positivity of the map from scratch. However, we hope that a careful understanding of the location of the KSMZ maps in the set of positive maps could be fruitfully used. 

\section{Ackowledgments}
\label{sec:acknowledgments}
G.M and E. O acknowledge fundings from ANR-22-PETQ-0009,ANR-24-CE97-0005-01, ANR-23-QUAC-0001 and ANR-22-PETQ-0011. M.R acknowledges fundings from ANR-22-PETQ-0007 and from Thales SIX internal funding. 
This project lasted many years and has led to a great number of interesting discussions. We would particularly like to emphasize that fruitful discussions occurred with Lucas Porto, Lucas Viera, Marco Tulio Quintino, Ulysse Chabaud and Seiseki Akibue.

\newpage


\onecolumn
\appendix
\newpage

\section{Notations}
\label{app:preliminaries}

\subsection{Operator notation}
\label{subsec:operator_notation_cones}

Throughout the paper, all Hilbert spaces are assumed to be finite-dimensional and complex. 
We denote by $\mathcal L(\mathcal H,\mathcal K)$ the vector space of linear operators from 
$\mathcal H$ to $\mathcal K$. When $\mathcal H=\mathcal K$, we simply write 
$\mathcal L(\mathcal H)$. After fixing an orthonormal basis of $\mathcal H$, with 
$n=\dim\mathcal H$, the space $\mathcal L(\mathcal H)$ is identified with the matrix 
algebra $M_n(\mathbb C)$.

For an operator $O:\mathcal H\to\mathcal K$, its operator norm is
\begin{equation}
    \|O\|_\infty
    =
    \sup_{\|x\|_{\mathcal H}\leq 1}
    \|O(x)\|_{\mathcal K}.
\end{equation}
Its trace norm is defined by
\begin{equation}
    \|O\|_1
    =
    \Tr(|O|)
    =
    \Tr\!\left(\sqrt{O^\dagger O}\right).
\end{equation}
Since we work in finite dimension, every linear operator is bounded and trace-class. 
Thus, no distinction between bounded, trace-class and linear operators is needed in what 
follows.

We write $O\succeq 0$ when $O$ is positive semidefinite. A density operator on 
$\mathcal H$ is an operator $\rho\in\mathcal L(\mathcal H)$ such that
\begin{equation}
    \rho=\rho^\dagger,
    \qquad
    \rho\succeq 0,
    \qquad
    \Tr(\rho)=1.
\end{equation}

Let $\operatorname{Herm}(\mathcal H)$ denote the real vector space of Hermitian operators 
on $\mathcal H$. For a bipartite system $\mathcal H_A\otimes\mathcal H_B$, we write
\begin{equation}
    \mathcal H_{AB}:=\mathcal H_A\otimes\mathcal H_B.
\end{equation}

\begin{definition}[Positive cone of operators]
The positive cone of operators on $\mathcal H_{AB}$ is
\begin{equation}
    \mathsf{PSD}(\mathcal H_{AB})
    :=
    \left\{
        W\in \operatorname{Herm}(\mathcal H_{AB})
        :
        W\succeq 0
    \right\}.
\end{equation}
\end{definition}

\begin{definition}[Block-positive cone]
The block-positive cone is

\begin{equation}
    \mathsf{BP}(\mathcal H_A,\mathcal H_B)
    :=
    \left\{
        W\in \operatorname{Herm}(\mathcal H_A\otimes\mathcal H_B)
        :
        \langle x\otimes y|W|x\otimes y\rangle\geq 0
        \ \forall x\in\mathcal H_A,\ \forall y\in\mathcal H_B
    \right\}.
\end{equation}

\end{definition}

\begin{definition}[Cone of decomposable operators]
Let $T_B$ denote the partial transposition on the second subsystem. The cone of decomposable operators is

\begin{equation}
    \mathsf{Dec}_{Op}(\mathcal H_A,\mathcal H_B)
    :=
    \left\{
        W=P+Q^{T_B}
        :
        P,Q\in \mathsf{PSD}(\mathcal H_A\otimes\mathcal H_B)
    \right\}.
\end{equation}

\end{definition}

These cones satisfy the inclusions
\begin{equation}
    \mathsf{PSD}(\mathcal H_{AB})
    \subseteq
    \mathsf{DecOp}(\mathcal H_A,\mathcal H_B)
    \subseteq
    \mathsf{BP}(\mathcal H_A,\mathcal H_B).
\end{equation}

\subsection{Projective notation}
\label{subsec:projective_notation}

Let $\mathbb K$ denote either $\mathbb R$ or $\mathbb C$. For an integer $N\geq 1$, we 
denote by $\mathbb P^{N-1}_{\mathbb K}$ the projective space associated with 
$\mathbb K^N$. It is defined as
\begin{equation}
    \mathbb P^{N-1}_{\mathbb K}
    =
    \big(\mathbb K^N\setminus\{0\}\big)/\sim,
\end{equation}
where the equivalence relation is given by
\begin{equation}
    u\sim v
    \quad\Longleftrightarrow\quad
    \exists \lambda\in\mathbb K^*
    \text{ such that }
    v=\lambda u.
\end{equation}
Here $\mathbb K^*$ denotes the set of nonzero elements of $\mathbb K$.

The equivalence class of a nonzero vector $u\in\mathbb K^N$ is denoted by
\begin{equation}
    [u]
    =
    \{\lambda u:\lambda\in\mathbb K^\times\}.
\end{equation}
Thus, a projective point $[u]\in\mathbb P^{N-1}_{\mathbb K}$ represents the one-dimensional 
linear subspace generated by $u$. In particular,
\begin{equation}
    [u]=[\lambda u],
    \qquad
    \forall \lambda\in\mathbb K^\times.
\end{equation}

\subsection{Polynomial cones}
\label{subsec:polynomial_cones}

Let $\mathbb R[x,y]_{2,2}$ denote the vector space of real bihomogeneous polynomials 
of bidegree $(2,2)$ in the variables $x\in\mathbb R^n$ and $y\in\mathbb R^m$.

\begin{definition}[Positive cone of biquadratic forms]
The positive cone of biquadratic forms is

\begin{equation}
    \mathrm{Pos}_{2,2}^{n,m}
    :=
    \left\{
        p\in \mathbb R[x,y]_{2,2}
        :
        p(x,y)\geq 0
        \ \forall x\in\mathbb R^n,\ 
        \forall y\in\mathbb R^m
    \right\}.
\end{equation}

\end{definition}

\begin{definition}[Boundary of the positive cone]
The boundary of $\mathrm{Pos}_{2,2}^{n,m}$ is
\begin{equation}
    \partial \mathrm{Pos}_{2,2}^{n,m}
    :=
    \overline{\mathrm{Pos}_{2,2}^{n,m}}
    \setminus
    \operatorname{int}\!\left(\mathrm{Pos}_{2,2}^{n,m}\right).
\end{equation}
\end{definition}

\begin{definition}[SOS cone]
The sum-of-squares cone is
\begin{equation}
    \mathrm{SOS}_{2,2}^{n,m}
    :=
    \left\{
        p\in \mathbb R[x,y]_{2,2}
        :
        p(x,y)=\sum_{r=1}^R h_r(x,y)^2
    \right\},
\end{equation}
where each $h_r$ is a real bilinear form in $(x,y)$. Through this paper we use the simple notation $\SOS$ which refer to $\mathrm{SOS}_{2,2}^{n,m}$
\end{definition}

\subsection{Cones of maps}
\label{subsec:map_cones}

Let $S_n(\mathbb R)$ denote the vector space of real symmetric $n\times n$ matrices. 
We consider real linear maps
\[
    \Phi:S_n(\mathbb R)\longrightarrow S_m(\mathbb R).
\]

\begin{definition}[Positive cone of maps]
The cone of positive maps from $S_n(\mathbb R)$ to $S_m(\mathbb R)$ is
\begin{equation}
    \mathcal P_{n,m}
    :=
    \left\{
        \Phi\in \mathcal L(S_n(\mathbb R),S_m(\mathbb R))
        :
        X\succeq 0 \Longrightarrow \Phi(X)\succeq 0
    \right\}.
\end{equation}
\end{definition}

\begin{definition}[Boundary of the positive-map cone]
The boundary of $\mathcal P_{n,m}$ is
\begin{equation}
    \partial \mathcal P_{n,m}
    :=
    \overline{\mathcal P_{n,m}}
    \setminus
    \operatorname{int}(\mathcal P_{n,m}).
\end{equation}
\end{definition}

\begin{definition}[Cone of completely positive maps]
The cone of completely positive maps is
\begin{equation}
    \mathcal{CP}_{n,m}
    :=
    \left\{
        \Phi\in \mathcal L(S_n(\mathbb R),S_m(\mathbb R))
        :
        \Phi(X)=\sum_{r=1}^R A_r X A_r^T
    \right\},
\end{equation}

where $A_r\in M_{m,n}(\mathbb R)$.
\end{definition}

\newpage

\section{Explanation of the KSMZ Algorithm}
\label{app:algo}

\subsection{Introduction to the Segre variety}
The Segre variety is an algebraic variety that realizes the Cartesian product of two projective spaces as a projective subvariety.
Fix $n,m\ge 2$ and consider the complex projective spaces $\mathbb P^{n-1}_{\mathbb C}$ and $\mathbb P^{m-1}_{\mathbb C}$.

\begin{definition}[Segre map]
\label{def:segre-map}
\begin{align*}
\sigma_{n,m}:\ &\mathbb P^{n-1}_{\mathbb C}\times \mathbb P^{m-1}_{\mathbb C}\longrightarrow \mathbb P^{nm-1}_{\mathbb C},\\
&([x],[y])\longmapsto [x\otimes y]
= [x_i y_j]_{1\le i\le n,\ 1\le j\le m}.
\end{align*}
\end{definition}

\begin{figure}[h]
    \centering
    \includegraphics[width=0.7\linewidth]{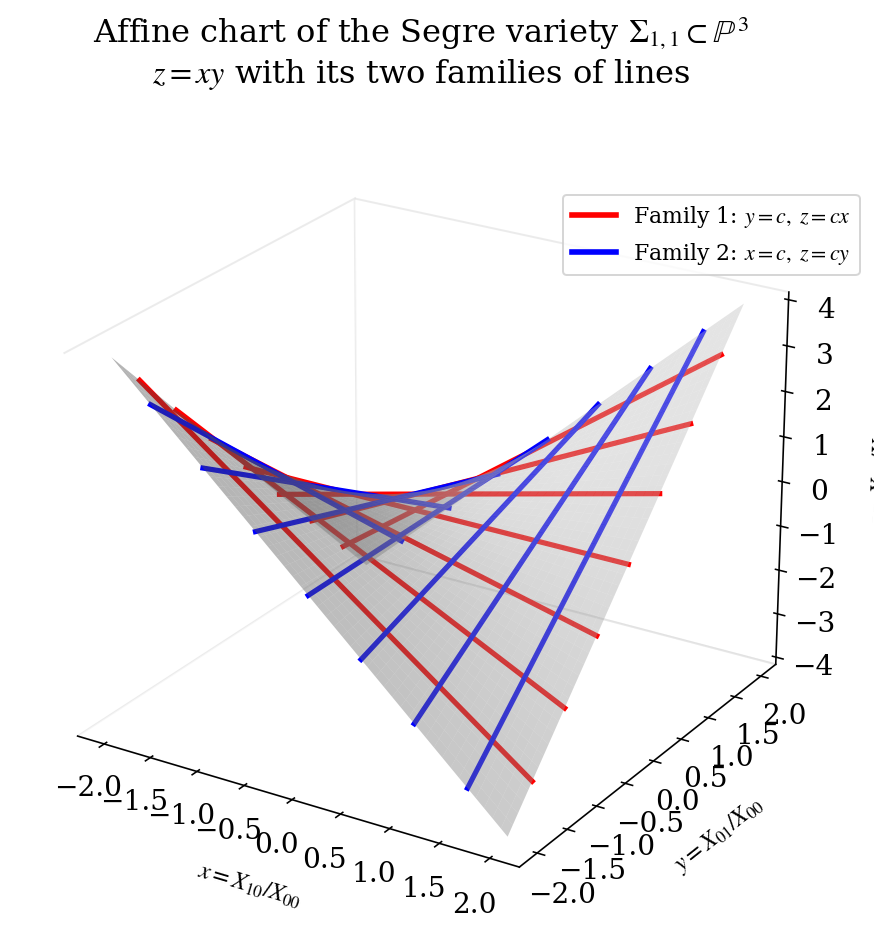}
    \caption{Affine chart of the Segre variety 
    $S_{1,1}\subset \mathbb P^3$, the image of the Segre embedding
    $\mathbb P^1\times\mathbb P^1\hookrightarrow\mathbb P^3$.
    In the chart $X_{00}\neq 0$, it is described by the equation
    $z=xy$. The red and blue lines show the two families of rulings,
    obtained by fixing one of the two projective coordinates.}
    \label{fig:segre-11}
\end{figure}

\begin{definition}[Segre variety]
\label{def:segre-variety}
The Segre variety (or Segre embedding) is the projective subvariety
\[
S_{n-1,m-1}\;:=\;\sigma_{n,m}\big(\mathbb P^{n-1}_{\mathbb R}\times \mathbb P^{m-1}_{\mathbb R}\big)\ \subset\ \mathbb P^{nm-1}_{\mathbb R}.
\]
Equivalently, $S_{n-1,m-1}$ is the set of projective classes of rank-one matrices in $\mathbb R^{n\times m}$:
\[
S_{n-1,m-1}=\big\{[Z]\in \mathbb P^{nm-1}_{\mathbb C}:\ \mathrm{rank}(Z)=1\big\}.
\]
\end{definition}

Let $z_{ij}$ ($1\le i\le n$, $1\le j\le m$) be homogeneous coordinates on $\mathbb P^{nm-1}_{\mathbb R}$, collected into an
$n\times m$ matrix of indeterminates $Z=(z_{ij})$.
The Segre variety is cut out by the $2\times 2$ minors of $Z$.

\begin{definition}[Defining equations of the Segre variety]
\label{def:segre-ideal}
Let $I_{n,m}\subset \mathbb R[z_{ij}]$ be the homogeneous ideal generated by all $2\times 2$ minors of $Z=(z_{ij})$, i.e.
\[
z_{ij}z_{k\ell}-z_{i\ell}z_{kj}\qquad (1\le i<k\le n,\ 1\le j<\ell\le m).
\]
Then
\[
S_{n-1,m-1}=V(I_{n,m})\subset \mathbb P^{nm-1}_{\mathbb R}.
\]
\end{definition}

The algorithm in \cite{Klep2017ManyMorePositiveMaps} is naturally expressed in the coordinate ring of the Segre variety. Consider the graded ring homomorphism induced by Definition \ref{def:segre-map} :
\begin{align*}
    \sigma^{\#}_{n,m} : &\mathbb R[z_{ij}] \to \mathbb R[x_1,...,x_n,y_1,...,y_m] \\
    &\sigma_{n,m}^{\#}(z_{ij}) \mapsto x_iy_j
\end{align*}
Since all $2\times 2$ minors vanish on rank-one matrices, we have $I_{n,m} \subset \ker(\sigma^\#_{n,m})$ and hence $\sigma^\#_{n,m}$
factors through the quotient:
\begin{equation}
\label{eq:sigma-sharp-quotient}
\widetilde{\sigma}^\#_{n,m}:\ \mathbb R[z_{ij}]/I_{n,m} \longrightarrow \mathbb R[x,y],
\qquad
\widetilde{\sigma}^\#_{n,m}([z_{ij}])=x_i y_j.
\end{equation}
In particular, quadratic forms in $\mathbb R[z_{ij}]/I_{n,m}$ pull back to biquadratic forms (bidegree $(2,2)$) in the variables $(x,y)$.

\subsection{Explanation of the KSMZ Algorithm}

Algorithm \cite{Klep2017ManyMorePositiveMaps} produces such a biquadratic form via a quadratic form on the Segre variety.

\subsection{ Choose generic Segre points}
Pick real points on the Segre variety,
\[
s_k \in S_{n-1,m-1}(\mathbb R)\subset \mathbb P^{nm-1}_{\mathbb R},
\qquad k=1,\dots,N,
\]
in general position (the algorithm is designed so that the required rank/dimension conditions hold with probability $1$ for random choices).

Let $J\subset \mathbb R[z]/I_{n,m}$ denote the homogeneous ideal of the chosen point set $\{s_1,\dots,s_N\}$ in the coordinate ring of the Segre variety.

\subsection{Build a linear system of vanishing linear forms}
Let $J_1$ be the degree-1 part of $J$. Choose a basis
\[
h_0,\dots,h_d \in J_1,
\qquad d=n+m-2,
\]
so that each $h_i$ vanishes on all $s_k$:
\[
h_i(s_k)=0\quad \forall\, i,\ k.
\]
The sum of squares
\begin{equation}\label{eq:sum-h2}
H(z):=\sum_{i=0}^{d} h_i(z)^2
\end{equation}
is automatically nonnegative on $S_{n-1,m-1}(\mathbb R)$ and satisfies $H(s_k)=0$ for all selected points.

\subsection{Pick a quadratic perturbation with \textit{second-order} vanishing}
Let $J_2$ be the degree-2 part of $J$, and let $J^{(2)}_2$ denote the subspace of degree-2 forms that vanish to \textit{second order}
at each $s_k$ (equivalently: they vanish and all first derivatives vanish at those points, in any affine chart).
Choose
\[
f \in J^{(2)}_2
\quad\text{such that}\quad
f \notin \mathrm{span}\{h_i h_j:\ 0\le i\le j\le d\}.
\]
The condition $f \notin \mathrm{span}\{h_i h_j\}$ is the key algebraic obstruction that will force the final form to be \textit{non-SOS}
in the quotient ring (hence ``non-CP'' on the map side).

\subsection{Form \texorpdfstring{$q=\delta f + \sum h_i^2$}{q=delta f + sum h_i^2} and choose $\delta>0$}
Define a one-parameter family of quadratic forms in $\mathbb R[z]/I_{n,m}$:
\begin{equation}\label{eq:qdelta}
q_\delta(z):=\delta\,f(z)+\sum_{i=0}^{d} h_i(z)^2.
\end{equation}


\begin{lemma}[Nonnegativity on the real Segre variety for small $\delta$]\label{lem:small-delta}
Assume that the common zero set of $J_1=\mathrm{span}\{h_0,\dots,h_d\}$ on the real Segre variety
is exactly the prescribed finite set of points:
\begin{equation}\label{eq:zero-set-assumption}
\{\,s\in S_{n-1,m-1}(\mathbb R):\ h_i(s)=0\ \forall i\,\}=\{s_1,\dots,s_N\}.
\end{equation}
Assume moreover that $f$ vanishes to second order at each $s_k$ along $S_{n-1,m-1}(\mathbb R)$.
Then there exists $\delta_0>0$ such that for all $0<\delta\le \delta_0$,
\[
q_\delta(z):=\delta f(z)+\sum_{i=0}^{d} h_i(z)^2 \ \ge\ 0
\quad \text{for all } [z]\in S_{n-1,m-1}(\mathbb R).
\]
\end{lemma}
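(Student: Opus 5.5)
The plan is a compactness argument that splits the real Segre variety into small neighbourhoods of the prescribed zeros $s_1,\dots,s_N$ and the complement of those neighbourhoods. First I normalize. Both $f$ and $H=\sum_i h_i^2$ are quadratic forms, so the sign of $q_\delta$ at $[z]$ does not depend on the representative. It therefore suffices to prove $q_\delta(z)\ge 0$ for $z$ in the compact set
\[
K=\{z=x\otimes y:\ \|x\|=\|y\|=1\}\subset\mathbb R^{nm},
\]
which surjects onto $S_{n-1,m-1}(\mathbb R)$. Each $s_k$ has exactly two preimages $\pm \hat s_k\in K$. Both $f$ and $H$ are continuous on $K$, and I set $M:=\max_K|f|<\infty$.

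\emph{Away from the zeros.} Fix $\varepsilon>0$ and let $U_\varepsilon$ be the union of the open balls of radius $\varepsilon$ around the points $\pm\hat s_k$. The set $K\setminus U_\varepsilon$ is compact. By the zero-set assumption \eqref{eq:zero-set-assumption}, $H$ has no zero on it, so $m_\varepsilon:=\min_{K\setminus U_\varepsilon}H>0$. For $\delta\le m_\varepsilon/M$ this gives $q_\delta\ge H-\delta M\ge 0$ on $K\setminus U_\varepsilon$.

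\emph{Near the zeros.} This is the main step and the expected obstacle. Near $\hat s_k$ I parametrize $K$ locally by a smooth chart $t\mapsto z(t)$ with $z(0)=\hat s_k$; for instance, perturb $x$ and $y$ on their unit spheres. Since $f$ vanishes to second order at $s_k$ along the variety, $f(z(t))$ and its first derivatives in $t$ vanish at $t=0$. Taylor's theorem on a compact chart then gives $|f(z(t))|\le C_k\|t\|^2$. For $H$ I need the matching lower bound $H(z(t))\ge c_k\|t\|^2$ with $c_k>0$, so I need the zero of $H$ at $s_k$ to be nondegenerate. Since each $h_i(z(t))$ vanishes at $t=0$, one has $H(z(t))=\sum_i\big(dh_i(T\,t)\big)^2+O(\|t\|^3)$, where $T=dz(0)$. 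The bound therefore holds exactly when the linear forms $h_0,\dots,h_d$, restricted to the tangent space $T_{\hat s_k}K$, have no common kernel.

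I would try to establish this nondegeneracy first from the fact that $J_1$ is the full space of linear forms vanishing on $\{s_1,\dots,s_N\}$ together with the general-position choice of the points. The tangent space of the Segre variety at a point is spanned by vectors of the form $u\otimes y_k+x_k\otimes v$, and for generic points with $N$ small relative to $nm$ one can find linear forms vanishing at all $s_j$ with prescribed differential at $s_k$. This is the genericity condition built into the KSMZ construction, namely that the points impose independent conditions on linear forms and their first-order data. If it cannot be derived from \eqref{eq:zero-set-assumption} alone, I would state it explicitly as part of the general-position hypothesis. The same issue appears in the cited KSMZ argument, since it does not follow from the zero set alone.

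Granting the nondegeneracy, let $c=\min_k c_k$ and $C=\max_k C_k$, with $\varepsilon$ chosen small enough that both Taylor estimates hold on $U_\varepsilon$. For $\delta\le c/C$ this gives
\[
q_\delta\ge (c-\delta C)\|t\|^2\ge 0 \quad\text{on } U_\varepsilon .
\]
Taking $\delta_0=\min\{m_\varepsilon/M,\ c/C\}$ proves the lemma.
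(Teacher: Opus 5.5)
Your proposal is correct and follows the paper's proof: the same compactness split into the region away from the $s_k$, where $H$ has a positive minimum and $f$ is bounded, and small neighbourhoods of the $s_k$, where Taylor estimates give $|f|\le C_k\|t\|^2$ and $H\ge c_k\|t\|^2$, with $\delta_0$ taken as the minimum of the resulting bounds. You are also right that the nondegeneracy $H\ge c_k\|t\|^2$ does not follow from the zero-set hypothesis alone; the paper uses it too, through an unstated genericity/rank condition (the differentials of the $h_i$ spanning the cotangent space at $s_k$), so stating it explicitly as a hypothesis is the correct fix.
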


\begin{proof}
Let $S:=S_{n-1,m-1}(\mathbb R)$ denote the real Segre variety. As a real projective algebraic variety, $S$ is compact.

Define
\[
H(z):=\sum_{i=0}^{d} h_i(z)^2.
\]
By construction, $H\ge 0$ on $S$ and $H(s_k)=0$ for all $k$.

\subsubsection*{Uniform positivity away from the common zero set.}
Let
\[
Z:=\{z\in S:\ H(z)=0\}.
\]
By \eqref{eq:zero-set-assumption}, $Z=\{s_1,\dots,s_N\}$ is finite.

Fix $\varepsilon>0$ and consider the closed set
\[
S_\varepsilon:=S\setminus \bigcup_{k=1}^N B_\varepsilon(s_k),
\]
where $B_\varepsilon(s_k)$ denotes a small neighborhood of $s_k$ in $S$ (in any fixed metric induced by an embedding in Euclidean space).
Since $S_\varepsilon$ is compact and does not intersect $Z$, the continuous function $H$ attains a strictly positive minimum on $S_\varepsilon$:
\[
m_\varepsilon:=\min_{z\in S_\varepsilon} H(z) > 0.
\]
Likewise, $f$ is continuous on $S$, so it is bounded on $S_\varepsilon$:
\[
M_\varepsilon:=\max_{z\in S_\varepsilon} |f(z)| <\infty.
\]
Choose
\[
\delta_\varepsilon:=\frac{m_\varepsilon}{2M_\varepsilon}\quad (\text{with the convention }\delta_\varepsilon=1\text{ if }M_\varepsilon=0).
\]
Then for any $0<\delta\le \delta_\varepsilon$ and any $z\in S_\varepsilon$,
\[
q_\delta(z)=H(z)+\delta f(z)\ \ge\ H(z)-\delta |f(z)|
\ \ge\ m_\varepsilon - \delta M_\varepsilon
\ \ge\ m_\varepsilon - \frac{m_\varepsilon}{2}
\ =\ \frac{m_\varepsilon}{2}>0.
\]
Hence $q_\delta\ge 0$ on $S_\varepsilon$ for all $0<\delta\le \delta_\varepsilon$.

\medskip
\subsubsection*{Local nonnegativity near each prescribed zero using second-order vanishing.}
Fix $k\in\{1,\dots,N\}$. Work in an affine chart of $\mathbb P^{nm-1}$ containing $s_k$ and choose local coordinates
$u\in\mathbb R^{\dim S}$ on $S$ near $s_k$, with $u=0$ corresponding to $s_k$.
Let $\widehat{H}(u)=H(s_k(u))$ and $\widehat{f}(u)=f(s_k(u))$ denote the pullbacks to these local coordinates.

Because each $h_i$ is linear and $h_i(s_k)=0$, we have $\widehat{H}(u)=O(\|u\|^2)$ and $\widehat{H}(u)\ge 0$.
Moreover, by the genericity/rank conditions in Algorithm~4.1 (equivalently: the differentials of the $h_i$ span the cotangent space at $s_k$),
there exists a constant $c_k>0$ and a neighborhood $U_k$ of $0$ such that
\begin{equation}\label{eq:H-lower-local}
\widehat{H}(u)\ \ge\ c_k \|u\|^2\qquad\forall u\in U_k.
\end{equation}
(Geometrically, this is strict positive definiteness of the Hessian of $H$ along the tangent space of $S$ at $s_k$.)

Next, since $f$ vanishes to second order at $s_k$ along $S$, we have
\[
\widehat{f}(0)=0,\qquad \nabla\widehat{f}(0)=0,
\]
hence Taylor's theorem yields
\[
|\widehat{f}(u)| \le C_k \|u\|^2 \qquad\forall u\in U_k'
\]
for some constant $C_k>0$ and a (possibly smaller) neighborhood $U_k'\subset U_k$.

Set
\[
\delta_k:=\frac{c_k}{2C_k}\quad (\text{with }\delta_k=1\text{ if }C_k=0).
\]
Then for all $0<\delta\le \delta_k$ and all $u\in U_k'$,
\[
\widehat{q}_\delta(u)=\widehat{H}(u)+\delta \widehat{f}(u)
\ \ge\ \widehat{H}(u)-\delta|\widehat{f}(u)|
\ \ge\ c_k\|u\|^2 - \delta C_k\|u\|^2
\ \ge\ \frac{c_k}{2}\|u\|^2\ \ge\ 0.
\]
Thus $q_\delta\ge 0$ on a neighborhood of $s_k$ in $S$ for all $0<\delta\le \delta_k$.

\medskip
\subsubsection*{Global choice of $\delta_0$.}
Choose $\varepsilon>0$ small enough so that the neighborhoods $B_\varepsilon(s_k)$ are contained in the coordinate neighborhoods
where Step 2 applies. Define
\[
\delta_0:=\min\{\delta_\varepsilon,\delta_1,\dots,\delta_N\}>0.
\]
Then for any $0<\delta\le \delta_0$, Step 1 gives $q_\delta\ge 0$ on $S_\varepsilon$, while Step 2 gives $q_\delta\ge 0$ on each $B_\varepsilon(s_k)$.
Since $S=S_\varepsilon\cup \bigcup_k B_\varepsilon(s_k)$, this proves $q_\delta\ge 0$ on all of $S$.
\end{proof}


\begin{lemma}[Non-SOS in the quotient ring]\label{lem:nonsos}
Let $J\subset \mathbb R[z]/I_{n,m}$ be the homogeneous ideal of the point set $\{s_1,\dots,s_N\}\subset S_{n-1,m-1}(\mathbb R)$,
and let $J_1$ be its degree-1 component. Assume $h_0,\dots,h_d$ form a basis of $J_1$, so that
\[
J_1=\mathrm{span}\{h_0,\dots,h_d\}.
\]
Let $f\in J_2^{(2)}$ be a quadratic form vanishing to second order at each $s_k$, and assume
\begin{equation}\label{eq:f-not-span}
f\ \notin\ \mathrm{span}\{h_i h_j:\ 0\le i\le j\le d\}.
\end{equation}
Then for every $\delta>0$, the quadratic form
\[
q_\delta=\delta f+\sum_{i=0}^d h_i^2 \in (\mathbb R[z]/I_{n,m})_2
\]
is \emph{not} a sum of squares in the quotient ring $\mathbb R[z]/I_{n,m}$.
\end{lemma}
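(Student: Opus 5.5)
We argue by contradiction, along the lines already sketched in Section~\ref{subsubsec:AlgorithmKSMZ}. Suppose that for some $\delta>0$ we have
\[
q_\delta=\sum_{r=1}^R \ell_r^2 \quad\text{in } (\mathbb R[z]/I_{n,m})_2,
\]
with the $\ell_r$ linear forms. Since $q_\delta$ is homogeneous of degree $2$ and the quotient is graded, we may take each $\ell_r$ in degree $1$. Degree-$0$ parts would contribute a constant, and degree-$2$ or higher parts would produce top-degree squares whose sum cannot cancel. This step needs care in the quotient, because $I_{n,m}$ is generated in degree $2$. We handle it by noting that the quotient is the coordinate ring of the Segre variety, which is reduced. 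A sum of squares of forms of top degree $k>1$ vanishing in degree $2k$ on the real Segre variety forces each such form to vanish on the Zariski-dense real points, hence to be zero in the quotient. So we reduce to linear $\ell_r$. In degree $1$ the quotient coincides with $\mathbb R[z]_1$, since $I_{n,m}$ has no linear elements.

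Next we evaluate at the prescribed points. Both $f$ and every $h_i$ vanish at each $s_k$, so $q_\delta(s_k)=0$. Choosing an affine representative of $s_k$ gives $\sum_r \ell_r(s_k)^2=0$, hence $\ell_r(s_k)=0$ for all $r$ and $k$. Therefore each $\ell_r$ lies in $J_1=\mathrm{span}\{h_0,\dots,h_d\}$, and we can write $\ell_r=\sum_i a_{ri}h_i$. Substituting,
\[
\sum_r\ell_r^2=\sum_{i\le j} c_{ij}\,h_ih_j \in \mathrm{span}\{h_ih_j\}.
\]
This identity holds in the quotient ring, which is where the span in \eqref{eq:f-not-span} is also taken.

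It follows that $\delta f=\sum_r\ell_r^2-\sum_i h_i^2$ lies in $\mathrm{span}\{h_ih_j\}$. Dividing by $\delta>0$ gives $f\in\mathrm{span}\{h_ih_j\}$, contradicting \eqref{eq:f-not-span}. Hence $q_\delta$ is not SOS for any $\delta>0$. The second-order vanishing of $f$ is not needed here; it matters only for positivity (Lemma~\ref{lem:small-delta}).

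The main obstacle is the first step: making sure an SOS decomposition in the quotient ring can be taken with linear summands. This is essentially the standard degree bound for SOS in a graded ring of a real variety with Zariski-dense real points. If it causes trouble, we will simply interpret ``SOS in the quotient'' as SOS of linear forms modulo $I_{n,m}$, which is the notion relevant to the pullback to $\mathbb R[x,y]_{2,2}$ (a bilinear-form SOS). The remaining steps are straightforward linear algebra.
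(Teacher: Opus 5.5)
Your proposal is correct and follows the same route as the paper. Evaluating the putative decomposition $q_\delta=\sum_r \ell_r^2$ at the prescribed zeros forces every $\ell_r$ into $J_1=\mathrm{span}\{h_i\}$, so $\delta f=\sum_r\ell_r^2-\sum_i h_i^2\in\mathrm{span}\{h_ih_j\}$, which is a contradiction; the only difference is that you justify the reduction to linear summands (top-degree squares must vanish on the Zariski-dense real points of the Segre variety, whose full vanishing ideal is $I_{n,m}$), a step the paper simply asserts.
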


\begin{proof}
Suppose, for contradiction, that $q_\delta$ is SOS in $\mathbb R[z]/I_{n,m}$ for some $\delta>0$.
Since $q_\delta$ has degree $2$, any SOS representation can be taken with linear summands:
there exist linear forms $\ell_1,\dots,\ell_r\in (\mathbb R[z]/I_{n,m})_1$ such that
\begin{equation}\label{eq:q-sos-linear}
q_\delta=\sum_{k=1}^r \ell_k^2.
\end{equation}

\medskip
\subsubsection*{SOS zeros force each summand to vanish.}
For every prescribed point $s_j$, we have $h_i(s_j)=0$ for all $i$, and $f(s_j)=0$ (since $f$ vanishes to second order), hence
\[
q_\delta(s_j)=\delta f(s_j)+\sum_{i=0}^d h_i(s_j)^2=0.
\]
Evaluating \eqref{eq:q-sos-linear} at $s_j$ gives
\[
0=q_\delta(s_j)=\sum_{k=1}^r \ell_k(s_j)^2.
\]
Over $\mathbb R$, a sum of squares is zero if and only if each term is zero, so
\begin{equation}\label{eq:ell-vanish}
\ell_k(s_j)=0\qquad \forall\,k,\ \forall\,j.
\end{equation}
Thus each $\ell_k$ vanishes on the entire set $\{s_1,\dots,s_N\}$, meaning $\ell_k\in J_1$ for all $k$.

\medskip
\subsubsection*{Conclude that $q_\delta$ lies in the quadratic span of $J_1$.}
Since $\ell_k\in J_1=\mathrm{span}\{h_0,\dots,h_d\}$, we can write
\[
\ell_k=\sum_{i=0}^d \alpha_{k,i} h_i
\]
for real coefficients $\alpha_{k,i}$. Therefore
\[
q_\delta=\sum_{k=1}^r \ell_k^2
= \sum_{k=1}^r \left(\sum_{i=0}^d \alpha_{k,i} h_i\right)^2
\in \mathrm{span}\{h_i h_j:\ 0\le i\le j\le d\}.
\]
In particular, $q_\delta$ belongs to the subspace $\mathrm{span}\{h_i h_j\}$.

\medskip
\subsubsection*{Contradiction with the choice of $f$.}
But by definition,
\[
q_\delta=\delta f+\sum_{i=0}^d h_i^2.
\]
The term $\sum_{i=0}^d h_i^2$ is already in $\mathrm{span}\{h_i h_j\}$. Hence if $q_\delta$ is also in $\mathrm{span}\{h_i h_j\}$,
then so is $\delta f = q_\delta-\sum_i h_i^2$, and therefore $f\in \mathrm{span}\{h_i h_j\}$, contradicting \eqref{eq:f-not-span}.
This proves that no such SOS representation \eqref{eq:q-sos-linear} can exist, i.e.\ $q_\delta$ is not SOS.
\end{proof}


\newpage
\section{Proving the indecomposability of maps thanks to the Kira polynomials: the case of the Choi map}
\label{app:ChoiIndec}

\begin{proposition}[Indecomposability of the Choi map from the real slice]
Let \(\Phi_{\mathrm{Ch}}:M_3(\mathbb C)\to M_3(\mathbb C)\) be the Choi map
\[
\Phi_{\mathrm{Ch}}(X)
=
\begin{pmatrix}
x_{11}+x_{33} & -x_{12} & -x_{13}\\
-x_{21} & x_{22}+x_{11} & -x_{23}\\
-x_{31} & -x_{32} & x_{33}+x_{22}
\end{pmatrix}.
\]
Then \(\Phi_{\mathrm{Ch}}\) is indecomposable.
\end{proposition}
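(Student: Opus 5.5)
The plan mirrors the proof of the Non-RSOS theorem of Section~\ref{subsec:indecomposability}, run in reverse. By the duality of \cite{Fang2020SumSquares}, $\Phi_{\mathrm{Ch}}$ is decomposable if and only if its Hermitian polynomial $p_{\Phi_{Ch}}$ of \eqref{eq:choi_polynomial} is \RSOS. So I will assume, for contradiction, that
\[
p_{\Phi_{Ch}}(z,\bar z,w,\bar w)=\sum_i g_i(z,\bar z,w,\bar w)^2
\]
with bi-Hermitian $g_i$. Then I restrict to the real slice $z=a\in\mathbb R^3$, $w=c\in\mathbb R^3$.

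On this slice every $|z_i|^2$ becomes $a_i^2$. Each pair of cross terms $-\bar z_i z_j\bar w_i w_j-\bar z_j z_i\bar w_j w_i$ becomes $-2a_ia_jc_ic_j$. The restriction is therefore the real biquadratic Choi form
\[
p_{\mathrm{Ch}}(a,c)=\sum_{i}(a_i^2+a_{i-1}^2)c_i^2-2\sum_{i<j}a_ia_jc_ic_j ,
\]
with indices taken cyclically ($a_0:=a_3$). Here each $g_i(a,a,c,c)$ is a real polynomial, and it is still bilinear: degree counting in the bihomogeneous identity forces each $g_i$ to be of bidegree $(1,1)$ in $(z,\bar z)$ and $(w,\bar w)$ jointly. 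We would thus obtain a real \SOS decomposition of $p_{\mathrm{Ch}}(a,c)$. This step is purely formal, exactly as in the Non-RSOS theorem.

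The real work, and the main obstacle, is to show that $p_{\mathrm{Ch}}(a,c)$ is not \SOS in $\mathbb R[a,c]_{2,2}$. I would follow Choi's classical argument.

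First, I take a would-be decomposition $\sum_k \ell_k(a,c)^2$ with bilinear $\ell_k=\sum_{ij}\lambda^k_{ij}a_ic_j$. The form $p_{\mathrm{Ch}}$ contains no monomials $a_i^2c_j^2$ for the "missing" pairs $(i,j)$ with $j=i-1$ cyclically. Since squares contribute nonnegative diagonal coefficients, this forces $\lambda^k_{ij}=0$ for those pairs.

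Second, the coefficients of the mixed monomials $a_ia_jc_ic_j$ then only receive contributions from the diagonal entries $\lambda^k_{ii}\lambda^k_{jj}$ and from off-diagonal products that survive. Comparing coefficients shows that the vectors $u_i=(\lambda^k_{ii})_k$ satisfy $\|u_i\|^2=1$ and $\langle u_i,u_j\rangle=-1$ for all $i\neq j$ (up to possibly remaining admissible cross-products, which must be checked to vanish or be controlled).

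Third, I conclude with a contradiction. We would have $\|u_1+u_2+u_3\|^2=3-6<0$, which is impossible.

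If the bookkeeping of the off-diagonal $\lambda^k_{ij}$ becomes messy, the fallback is a dual certificate. I would exhibit a real symmetric matrix $M\succeq0$ on the $9$-dimensional space of bilinear monomials $a_ic_j$, with zero entries where the form forces them, such that $\langle M, G\rangle<0$ for every Gram matrix $G$ of $p_{\mathrm{Ch}}$. The candidate is the moment matrix of the classical PPT entangled state detected by the Choi map. With this non-\SOS property established, the contradiction with the real slice shows $p_{\Phi_{Ch}}$ is not \RSOS, hence $\Phi_{\mathrm{Ch}}$ is indecomposable.
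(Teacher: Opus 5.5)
Your proposal is correct and follows the paper's proof essentially step for step. You assume an \RSOS decomposition, restrict it to the real slice, use the three missing monomials $a_1^2c_3^2$, $a_2^2c_1^2$, $a_3^2c_2^2$ to kill the corresponding coefficients, and derive a contradiction from the Gram vectors $u_i$; your $\|u_1+u_2+u_3\|^2=-3$ is just a cleaner form of the paper's Cauchy--Schwarz equality argument.

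The cross-products you hedge on do vanish: for every $i\neq j$, exactly one of $(i,j)$ and $(j,i)$ has the form $(l,l-1)$ cyclically, so $\lambda^k_{ij}\lambda^k_{ji}=0$. The fallback dual certificate is therefore unnecessary. Your degree-counting remark, that the real-slice summands are bilinear, justifies a step the paper leaves implicit.
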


\begin{proof}
Consider the Hermitian biquadratic polynomial associated with the Choi map defined in~\eqref{eq:choi_polynomial},
\[
p_{\mathrm{Ch}}(z,w)
=
\bra{w}\Phi_{\mathrm{Ch}}(\ket z\bra z)\ket w .
\]
Restricting to real variables \(z=x\in\mathbb R^3\), \(w=y\in\mathbb R^3\), one obtains
\begin{align}
p_{\mathrm{Ch}}(x,y)
&=
(x_1^2+x_3^2)y_1^2
+
(x_2^2+x_1^2)y_2^2
+
(x_3^2+x_2^2)y_3^2  \nonumber\\
&\quad
-2x_1x_2y_1y_2
-2x_1x_3y_1y_3
-2x_2x_3y_2y_3 .
\end{align}

We first show that this real biquadratic form is not SOS. Suppose, by contradiction, that
\[
p_{\mathrm{Ch}}(x,y)=\sum_r q_r(x,y)^2
\]
with real bilinear forms
\[
q_r(x,y)=\sum_{i,j=1}^3 a^{(r)}_{ij}x_i y_j .
\]
Since the monomials \(x_1^2y_3^2\), \(x_2^2y_1^2\), and \(x_3^2y_2^2\) do not appear in
\(p_{\mathrm{Ch}}\), their coefficients must vanish:
\[
\sum_r \bigl(a^{(r)}_{13}\bigr)^2
=
\sum_r \bigl(a^{(r)}_{21}\bigr)^2
=
\sum_r \bigl(a^{(r)}_{32}\bigr)^2
=
0.
\]
Hence
\[
a^{(r)}_{13}=a^{(r)}_{21}=a^{(r)}_{32}=0
\qquad \text{for all } r.
\]

Now compare the coefficients of the diagonal monomials
\(x_i^2y_i^2\). We get
\[
\sum_r \bigl(a^{(r)}_{11}\bigr)^2
=
\sum_r \bigl(a^{(r)}_{22}\bigr)^2
=
\sum_r \bigl(a^{(r)}_{33}\bigr)^2
=
1.
\]
Next, compare the coefficients of
\(x_1x_2y_1y_2\), \(x_1x_3y_1y_3\), and \(x_2x_3y_2y_3\). Since
\(a^{(r)}_{21}=a^{(r)}_{13}=a^{(r)}_{32}=0\), one obtains
\[
\sum_r a^{(r)}_{11}a^{(r)}_{22}=-1,
\qquad
\sum_r a^{(r)}_{11}a^{(r)}_{33}=-1,
\qquad
\sum_r a^{(r)}_{22}a^{(r)}_{33}=-1.
\]

Define the real vectors
\[
u=(a^{(r)}_{11})_r,\qquad
v=(a^{(r)}_{22})_r,\qquad
t=(a^{(r)}_{33})_r .
\]
The coefficient identities give
\[
\|u\|=\|v\|=\|t\|=1,
\]
and
\[
u\cdot v=-1,\qquad u\cdot t=-1,\qquad v\cdot t=-1.
\]
By equality in Cauchy--Schwarz, \(u\cdot v=-1\) implies \(v=-u\), and
\(u\cdot t=-1\) implies \(t=-u\). Hence \(v=t\), so \(v\cdot t=1\), contradicting
\(v\cdot t=-1\). Therefore \(p_{\mathrm{Ch}}(x,y)\) is not SOS.

Now suppose that \(p_{\mathrm{Ch}}(z,w)\) were RSOS. Then there would exist Hermitian
polynomials \(g_k(z,\bar z,w,\bar w)\) such that
\[
p_{\mathrm{Ch}}(z,w)=\sum_k g_k(z,\bar z,w,\bar w)^2 .
\]
Restricting this identity to real variables \(z=x\), \(w=y\) gives a real SOS
decomposition of \(p_{\mathrm{Ch}}(x,y)\), contradicting what we have just proved.
Therefore \(p_{\mathrm{Ch}}(z,w)\) is not RSOS.

By the correspondence between decomposable maps and RSOS Hermitian polynomials,
\(\Phi_{\mathrm{Ch}}\) is not decomposable. Since the Choi map is positive, it is
therefore an indecomposable positive map.
\end{proof}

\newpage
\section{Technical identities for the complexified KSMZ polynomial}
\label{app:complexified_KSMZ_polynomials}

\subsection{Real-variable expansion of the Hermitian polynomial $p_{\Gamma_{\mathbb C}}$}
\label{app:derivation_Herm_pol}

Let $\Gamma_{\mathbb C}=(\Phi \oplus 0)$ be the complexified map obtained from the algorithm \ref{sec:algoimpl} and defined in \eqref{eq:complexification}, and $p_{\Gamma_\mathbb C}$ be its associated Hermitian polynomial  expressed as 
\begin{equation}
    p_{\Gamma_\mathbb C}(z,w)=\bra{w}\Gamma_\mathbb C(\ket{z}\bra{z})\ket{w} \label{eq:pol}
\end{equation}
with $\ket{z} \in \mathbb C^n$ and $\ket{w} \in \mathbb C^m$. 
Now fix $\ket{z}=\ket{a}+i\ket{b}$ and $\ket{w}=\ket{c}+i\ket{d}$, where $\ket{a},\ket{b} \in \mathbb R^n$ and  $\ket{c}, \ket{d} \in \mathbb R^m$. Then 
\begin{align}
    \ket{z}\bra{z}&=\big(\ket{a}+i\ket{b}\big)\big(\bra{a}-i\bra{b}\big) \nonumber \\
    &=\ket{a}\bra{a}+\ket{b}\bra{b}-i\big(\ket{a}\bra{b}-\ket{b}\bra{a}\big) \label{eq:realified_variable}
\end{align}
Then using \eqref{eq:complexification} and by injecting \eqref{eq:realified_variable} in \eqref{eq:pol}
\begin{align}
     p_{\Gamma_\mathbb C}(z,w)&=\big(\ket{c}-i\ket{d}\big)\Phi\big(\ket{a}\bra{a}+\ket{b}\bra{b}\big)\big(\ket{c}+i\ket{d}\big) \nonumber \\
     &=\bra{c}\Phi\big(\ket{a}\bra{a}+\ket{b}\bra{b}\big)\ket{c}+\bra{d}\Phi\big(\ket{a}\bra{a}+\ket{b}\bra{b}\big)\ket{d} \nonumber \\ &+i\big(\bra{c}\Phi\big(\ket{a}\bra{a}+\ket{b}\bra{b}\big)\ket{d}-\bra{d}\Phi\big(\ket{a}\bra{a}+\ket{b}\bra{b}\big)\ket{c}\big)
\end{align}
Since $\Phi \in S_n(\mathbb R)$, $\bra{c}\Phi\big(\ket{a}\bra{a}+\ket{b}\bra{b}\big)\ket{d}-\bra{d}\Phi\big(\ket{a}\bra{a}+\ket{b}\bra{b}\big)\ket{c}=0$ and then 
\begin{align}
     p_{\Gamma_\mathbb C}(z,w)&=\bra{c}\Phi\big(\ket{a}\bra{a}\big)\ket{c}+\bra{c}\Phi\big(\ket{b}\bra{b}\big)\ket{c}+\bra{d}\Phi\big(\ket{a}\bra{a}\big)\ket{d}+\bra{d}\Phi\big(\ket{b}\bra{b}\big)\ket{d}\nonumber \\
     &=p_\Phi(a,c)+p_\Phi(b,c)+p_\Phi(a,d)+p_\Phi(b,d)
\end{align}
That recover the expression of the Hermitian polynomial stated in \eqref{eq:pol-dec}. 
\subsection{Vanishing of bihomogeneous forms on zero slices}
\label{app:bihomogeneous_zero_argument}
\begin{lemma}

Let $p\in\mathbb R[x,y]_{2,2}$ be bihomogeneous of bidegree $(2,2)$.
Then, for every $x\in\mathbb R^n$ and every $y\in\mathbb R^m$,
\[
    p(x,0)=0,
    \qquad
    p(0,y)=0,
    \qquad
    p(0,0)=0.
\]
\end{lemma}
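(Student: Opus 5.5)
The plan is to write $p$ explicitly in monomials and read off the three identities directly. Since $p$ is bihomogeneous of bidegree $(2,2)$, every monomial of $p$ has total degree exactly $2$ in the $x$-variables and exactly $2$ in the $y$-variables. So we can write
\[
p(x,y)=\sum_{i\le k,\ j\le l} c_{ik,jl}\,x_i x_k\,y_j y_l
\]
for some real coefficients $c_{ik,jl}$.

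Substituting $y=0$ kills each summand, because each contains the factor $y_jy_l$ with both factors equal to zero. Hence $p(x,0)=0$ for every $x\in\mathbb R^n$. The symmetric argument with the factor $x_ix_k$ gives $p(0,y)=0$, and $p(0,0)=0$ follows as a special case of either identity.

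Alternatively, and as a cross-check, one can use the scaling relation already invoked in the proposition on infima of bihomogeneous polynomials:
\[
p(\lambda x,\mu y)=\lambda^2\mu^2\,p(x,y)\qquad\text{for all }\lambda,\mu\in\mathbb R .
\]
Taking $\mu=0$ and $y$ arbitrary gives $p(x,0)=p(x,0\cdot y)=0\cdot p(x,y)=0$. In the same way, $\lambda=0$ gives $p(0,y)=0$.

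No step is a genuine obstacle; the proof is purely routine. The only point needing a remark is that bihomogeneity forces $d_2=2\ge 1$, so no monomial is free of $y$ (and, symmetrically, none is free of $x$). Since this follows directly from the definition of bidegree, the argument is complete in a few lines.
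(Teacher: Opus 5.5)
Your proof is correct and is essentially the paper's argument: both expand $p$ as a combination of monomials $x_ix_jy_ky_l$ and note that every term vanishes once $y=0$ (resp.\ $x=0$). Your scaling cross-check with $\mu=0$ (resp.\ $\lambda=0$) in $p(\lambda x,\mu y)=\lambda^2\mu^2p(x,y)$ is also valid, but the paper does not need it.
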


\begin{proof}
Since $p$ is bihomogeneous of bidegree $(2,2)$, every monomial appearing in
$p$ has total degree $2$ in the variables $x=(x_1,\dots,x_n)$ and total degree
$2$ in the variables $y=(y_1,\dots,y_m)$. Thus $p$ can be written as
\[
    p(x,y)
    =
    \sum_{i,j=1}^n
    \sum_{k,l=1}^m
    c_{ij,kl}\,x_i x_j y_k y_l .
\]
If $y=0$, then every factor $y_k y_l$ vanishes, hence
\[
    p(x,0)=0.
\]
Similarly, if $x=0$, then every factor $x_i x_j$ vanishes, hence
\[
    p(0,y)=0.
\]
Taking both variables equal to zero gives immediately
\[
    p(0,0)=0.
\]
This proves the claim.
\end{proof}

\newpage
\section{Proof that the KSMZ maps are inequivalent with Choi even with a skew-symetric extension}
\label{app:InequivalenceChoiExtension}

In this part we demonstrate that  the KSMZ maps are inequivalent with Choi or Transposition \textit{even if} a skew-symmetric extension was added. We start by considering the restriction of a PnCP map to $M_n(\mathbb R)$. Then we decompose this map into it symmetric part and its antisymmetric part, and we check the polynomial that is associated to the symmetric part. For the original map to be reachable by KSMZ algorithm, the symmetric part of the restriction of the original map to $M_n(\mathbb R)$ must be associated with a SoS polynomial. If not, the KSMZ algorithm cannot generate this map. Before giving some examples, we just need a small result, \textit{i.e.} that the restriction to the real matrices is unique. 

\subsection{Unicity of the restriction}
Let $\phi_{\mathbb C} : M_n(\mathbb C) \to M_n(\mathbb C)$ be a $\mathbb C$-linear map.
Every matrix $M \in M_n(\mathbb C)$ can be written uniquely as
\[
M = A + iB, \qquad A,B \in M_n(\mathbb R).
\]
Define a real-linear map
\[
\phi : M_n(\mathbb R) \to M_n(\mathbb R),
\qquad
\phi(A) := \phi_{\mathbb C}(A),
\]
where $A$ is viewed as a complex matrix with zero imaginary part.
Then, by complex-linearity of $\phi_{\mathbb C}$,
\[
\phi_{\mathbb C}(A+iB)
= \phi_{\mathbb C}(A) + \phi_{\mathbb C}(iB)
= \phi(A) + i\,\phi(B).
\]
Hence,
\[
\phi_{\mathbb C}(M) = \phi(A) + i\,\phi(B).
\]
The map $\phi$ is uniquely determined as the restriction of $\phi_{\mathbb C}$
to $M_n(\mathbb R)$.

\subsection{Transposition}
Consider the transposition acting on $M_n(\mathbb R)$. This map 

\[
T : M \longmapsto M^T
\]
can be decomposed into

\[
T = T_S \oplus T_{AS},
\]
where
\[
M \longmapsto \frac{M^T + M}{2} \;\oplus\; \frac{M^T - M}{2}.
\]

More precisely,
\[
T_S : M \longmapsto \frac{M^T + M}{2}, \qquad
T_{AS} : M \longmapsto \frac{M^T - M}{2}.
\]

This decomposition is unique, with
\[
T_S : S_n \longrightarrow S_n, \qquad
T_{AS} : A_n \longrightarrow A_n.
\]

Now, we write the action of $T_S$ on the basis $E_{ij}$
\[
T_S(E_{11}) = E_{11},
\]
\[
T_S(E_{12} + E_{21}) = \frac{E_{12} + E_{21}}{2},
\]
\[
\ldots
\]

and from it we can compute the polynomial associated to \(T_S\), which is
\[
P_{T_S}
= x_1^2 y_1^2 + x_2^2 y_2^2 + x_3^2 y_3^2
+ 2 x_1 x_2 y_1 y_2
+ 2 x_2 x_3 y_2 y_3
+ 2 x_1 x_3 y_1 y_3.
\]

Equivalently,
\[
P_{T_S} = (x_1 y_1 + x_2 y_2 + x_3 y_3)^2.
\]
Since this polynomial is SoS, the KSMZ algorithm cannot produce the transpose.

\subsection{Generalized Choi}
We now consider the Generalized Choi maps proposed by and analysed in \cite{ha2011one}.

Let $\Phi_{Choi}$ $:   B(\mathcal H_B) \longrightarrow  B(\mathcal H_B)$ and consider the following family of maps in $M_3(\mathbb C)$
\begin{equation*}
\Phi_{\mathrm{Choi}}[a,b,c](X)
=
\begin{pmatrix}
a x_{11} + b x_{22} + c x_{33} & -x_{12} & -x_{13} \\
-x_{21} & a x_{11} + b x_{22} + c x_{33} & -x_{23} \\
-x_{31} & -x_{32} & a x_{11} + b x_{22} + c x_{33}
\end{pmatrix}.
\end{equation*}

\begin{itemize}
\item
$\Phi_{\mathrm{Choi}}[a,b,c]$ , for $a,b,c$ positive reals, is PnCP if and only if
\[ (C1)
\begin{cases}
0 \le a \le 2, \\[4pt]
a + b + c \le 2, \\[4pt]
\text{if } a \le 1, \text{ then } bc \ge 1 - a^2.
\end{cases}
\]
\end{itemize}

We want to decompose the restriction of this Choi map to  $M_n(\mathbb R)$ into it symmetric and antisymetric part. The transformation of the diagonal elements

\[
\begin{pmatrix}
x_{11} & x_{12} & x_{13} \\
x_{21} & x_{22} & x_{23} \\
x_{31} & x_{32} & x_{33}
\end{pmatrix}
\longmapsto
\begin{pmatrix}
a x_{11} + b x_{22} + c x_{33} & * & * \\
* & a x_{11} + b x_{22} + c x_{33} & * \\
* & * & a x_{11} + b x_{22} + c x_{33}
\end{pmatrix}.
\]
belongs to the symmetric part. 

Now consider
\[
X_{AD}=\begin{pmatrix}
* & x_{12} & x_{13} \\
x_{21} & * & x_{23} \\
x_{31} & x_{32} & *
\end{pmatrix}
\longmapsto
\begin{pmatrix}
* & -x_{12} & -x_{13} \\
-x_{21} & * & -x_{23} \\
-x_{31} & -x_{32} & *
\end{pmatrix} = -X_{AD}
\]

The only way to decompose 
\[
X_{AD}\longmapsto- X_{AD}
\]
is through
\[
- \left( \frac{X_{AD} + X_{AD}^T}{2} \right)
\;\oplus\;
- \left( \frac{X_{AD} - X_{AD}^T}{2} \right),
\]
where
\[
\frac{X_{AD} + X_{AD}^T}{2} \in S_n(\mathbb{R}),
\qquad
\frac{X_{AD} - X_{AD}^T}{2} \in K_n(\mathbb{R}).
\]

So we need to to:
\[
\frac{X_{AD} + X_{AD}^T}{2}
\;\longmapsto\;
- \frac{X_{AD} + X_{AD}^T}{2},
\]
\[
\frac{X_{AD} - X_{AD}^T}{2}
\;\longmapsto\;
- \frac{X_{AD} - X_{AD}^T}{2}.
\]
Now if we mix the diagonal and antidiagonal part, we can obtain the symetric part of the restriction of Choi to $M_n(\mathbb R)$. It is defined by 

\[
\Phi_R(E_{ii}) = \sum_{j=1}^3 M_{ij} E_{jj},
\qquad
\Phi_R(E_{ij} + E_{ji}) = -\frac{1}{2}(E_{ij} + E_{ji}) \ (i\neq j),
\]
with 
\[
M  =
\begin{pmatrix}
a & b & c\\
b & a & c\\
c & b & a
\end{pmatrix}.
\]
and the associated polynomial is 
\[
\begin{aligned}
p_\Phi(x,y)
&= a\bigl(x_1^2 y_1^2 + x_2^2 y_2^2 + x_3^2 y_3^2\bigr) \\
 + &\quad b\bigl(x_1^2 y_2^2 + x_2^2 y_1^2 + x_3^2 y_2^2\bigr) \\
 + &\quad c\bigl(x_1^2 y_3^2 + x_2^2 y_3^2 + x_3^2 y_1^2\bigr) \\
 - &\quad \bigl(x_1 x_2 y_1 y_2 + x_1 x_3 y_1 y_3 + x_2 x_3 y_2 y_3\bigr).
\end{aligned}
\]
or in a more elegant display: 
\[
p_\Phi(x,y)
=
\sum_{i,j=1}^3 M_{ij}\,x_i^2\,y_j^2
-
\sum_{1 \le i < j \le 3} x_i x_j y_i y_j,
\]
still with 

\[
M  =
\begin{pmatrix}
a & b & c\\
b & a & c\\
c & b & a
\end{pmatrix}.
\]

We want to know if this polynomial can be positive but not SoS. In order for the KSMZ to be able to generate these maps, we need to find a range of parameters for which the original map is PnCP and the polynomial associated to its restriction to $S_n(\mathbb R)$ is positive but not SoS. 
It can be shown that within the range of parameters $(C1) p_\Phi$ is always not SoS. Indeed,this is equivalent to showing that within the range of parameters $(C1)$, the Choi Matrix of the restricted map $\Phi_R$ cannot be positive. 
Since $\Phi_R$ is defined only on $S_n(\mathbb R)$ ,
\[
\Phi_R (E_{ij}) = \Phi_R ( \frac{(E_{ij} + E_{ji} )}{2})
\]

Therefore,
\[
\Phi(E_{ij}) = \Phi\!\left(\frac{E_{ij}+E_{ji}}{2}\right)
= -\frac14 (E_{ij}+E_{ji}), \qquad i \neq j.
\]

In particular,
\[
\Phi(E_{12}) = -\frac14 (E_{12}+E_{21}).
\]

\[
C_{\Phi_R} = \sum_{i,j=1}^3 E_{ij} \otimes \Phi(E_{ij}).
\]

We use the ordered basis
\[
(E_{11},E_{12},E_{13},E_{21},E_{22},E_{23},E_{31},E_{32},E_{33}).
\]

For instance
\[
E_{12} =
\begin{pmatrix}
0 & 1 & 0\\
0 & 0 & 0\\
0 & 0 & 0
\end{pmatrix},
\qquad
\Phi(E_{12}) = -\frac14
\begin{pmatrix}
0 & 1 & 0\\
1 & 0 & 0\\
0 & 0 & 0
\end{pmatrix}.
\]

\[
E_{12} \otimes \Phi(E_{12})
=
\begin{pmatrix}
0 & \Phi(E_{12}) & 0\\
0 & 0 & 0\\
0 & 0 & 0
\end{pmatrix}.
\]

\[
E_{12} \otimes \Phi(E_{12})
=
-\frac14
\begin{pmatrix}
0&0&0&0&1&0&0&0&0\\
0&0&0&1&0&0&0&0&0\\
0&0&0&0&0&0&0&0&0\\
0&0&0&0&0&0&0&0&0\\
0&0&0&0&0&0&0&0&0\\
0&0&0&0&0&0&0&0&0\\
0&0&0&0&0&0&0&0&0\\
0&0&0&0&0&0&0&0&0\\
0&0&0&0&0&0&0&0&0
\end{pmatrix}.
\]

Carrying out the same computation for all \(i,j\), we obtain
\[
C_\Phi =
\begin{pmatrix}
a&0&0&0&-\frac14&0&0&0&-\frac14\\
0&b&0&-\frac14&0&0&0&0&0\\
0&0&c&0&0&0&-\frac14&0&0\\
0&-\frac14&0&b&0&0&0&0&0\\
-\frac14&0&0&0&a&0&0&0&-\frac14\\
0&0&0&0&0&c&0&-\frac14&0\\
0&0&-\frac14&0&0&0&c&0&0\\
0&0&0&0&0&-\frac14&0&b&0\\
-\frac14&0&0&0&-\frac14&0&0&0&a
\end{pmatrix}.
\]

The spectrum is given by
\[
\lambda_1 = \tfrac12(-1 + 2a),
\]
\[
\lambda_{2,3} = \tfrac14(1 + 4a),
\]
\[
\lambda_4 = \tfrac14(-1 + 4b),
\qquad
\lambda_5 = \tfrac14(1 + 4b),
\]
\[
\lambda_6 = \tfrac14(-1 + 4c),
\qquad
\lambda_7 = \tfrac14(1 + 4c),
\]
\[
\lambda_{8,9}
=
\tfrac14\!\left(
2b + 2c \pm \sqrt{1 + 4b^2 - 8bc + 4c^2}
\right).
\]
This is positive if and only if 
\[ (C2)
\begin{cases}
1/2 \le a  \\[4pt]
1/4 \le b  \\[4pt]
1/4 \le c  \\[4pt]
\end{cases}
\]
Some algebra shows that breaking any condition of (C2) leads to breaking a condition of (C1). Hence, for the original map to be PnCP, the KSMZ polynomial must be SoS, which is impossible. 

\newpage
\section{The Breuer-Hall symmetric core is SOS}
\label{app:breuer_hall_sos_core}

In this appendix we prove the algebraic fact used in
Proposition~\ref{prop:ksmz_not_breuer_hall}. Let \(d=2r\geq 4\), and let
\[
    J
    =
    \bigoplus_{\alpha=1}^{r}
    \begin{pmatrix}
        0 & 1 \\
        -1 & 0
    \end{pmatrix}.
\]
Thus \(J^T=-J\), \(J^TJ=I_d\), and \(J^2=-I_d\). Consider the canonical
Breuer-Hall map
\[
    \Phi_J(X)
    =
    \operatorname{Tr}(X)I_d
    -
    X
    -
    J X^T J^T .
\]
Its restriction to real symmetric matrices is
\[
    \Psi_J(S)
    =
    \operatorname{Tr}(S)I_d
    -
    S
    -
    J S J^T,
    \qquad
    S\in S_d(\mathbb R).
\]

\begin{proposition}
\label{prop:BH_symmetric_core_SOS}
The biquadratic form associated with the symmetric real restriction
\(\Psi_J\),
\[
    p_{\Psi_J}(x,y)
    =
    y^T\Psi_J(xx^T)y,
    \qquad
    x,y\in\mathbb R^d,
\]
is a sum of squares of real bilinear forms.
\end{proposition}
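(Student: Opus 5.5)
The plan is to compute $p_{\Psi_J}$ explicitly and show it equals a sum of squares of bilinear forms. The route is a complex Lagrange identity, obtained by identifying $\mathbb R^{d}$ with $\mathbb C^{r}$ through the block structure of $J$.

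\textbf{Step 1 (explicit form).} For $x,y\in\mathbb R^d$, using $\operatorname{Tr}(xx^T)=\|x\|^2$ and $y^TJxx^TJ^Ty=(y^TJx)^2$, one gets
\[
p_{\Psi_J}(x,y)=\|x\|^2\|y\|^2-(x\cdot y)^2-(y^TJx)^2 .
\]
With the given $J$,
\[
y^TJx=\sum_{\alpha=1}^r\bigl(y_{2\alpha-1}x_{2\alpha}-y_{2\alpha}x_{2\alpha-1}\bigr).
\]

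\textbf{Step 2 (complexification of the coordinates).} Set $u_\alpha=x_{2\alpha-1}+ix_{2\alpha}$ and $v_\alpha=y_{2\alpha-1}+iy_{2\alpha}$ for $\alpha=1,\dots,r$. Then $\|u\|^2=\|x\|^2$ and $\|v\|^2=\|y\|^2$. A direct check on each $2\times2$ block gives
\[
u_\alpha\bar v_\alpha=(x_{2\alpha-1}y_{2\alpha-1}+x_{2\alpha}y_{2\alpha})+i\,(x_{2\alpha}y_{2\alpha-1}-x_{2\alpha-1}y_{2\alpha}).
\]
Summing over $\alpha$, this yields
\[
\sum_\alpha u_\alpha\bar v_\alpha=x\cdot y+i\,y^TJx,
\qquad
(x\cdot y)^2+(y^TJx)^2=\Bigl|\sum_\alpha u_\alpha\bar v_\alpha\Bigr|^2 .
\]
Hence $p_{\Psi_J}=\|u\|^2\|v\|^2-|\langle v,u\rangle|^2$.

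\textbf{Step 3 (Lagrange identity).} Expanding the modulus squared gives the complex Lagrange identity
\[
\|u\|^2\|v\|^2-\Bigl|\sum_\alpha u_\alpha\bar v_\alpha\Bigr|^2=\sum_{1\le\alpha<\beta\le r}|u_\alpha v_\beta-u_\beta v_\alpha|^2 .
\]
The cross term $2\operatorname{Re}\sum u_\alpha v_\beta\bar u_\beta\bar v_\alpha$ reassembles into $|\sum_\alpha u_\alpha\bar v_\alpha|^2$.

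Each quantity $w_{\alpha\beta}:=u_\alpha v_\beta-u_\beta v_\alpha$ is $\mathbb R$-bilinear in $(x,y)$. Therefore both $\operatorname{Re}w_{\alpha\beta}$ and $\operatorname{Im}w_{\alpha\beta}$ are real bilinear forms, and
\[
p_{\Psi_J}(x,y)=\sum_{\alpha<\beta}\Bigl[(\operatorname{Re}w_{\alpha\beta})^2+(\operatorname{Im}w_{\alpha\beta})^2\Bigr].
\]
This is the required SOS decomposition. It is nontrivial precisely because $r\ge 2$; for $r=1$ the form vanishes identically.

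\textbf{Main obstacle.} The only delicate point is Step 2: the sign and orientation conventions must match, namely whether one pairs $u$ with $\bar v$ or $\bar u$ with $v$, and the sign of $J$. The goal is to make the imaginary part of the Hermitian product reproduce $y^TJx$ exactly. Since that quantity enters only squared, either sign convention suffices, so I expect this to be a bookkeeping check rather than a real difficulty.
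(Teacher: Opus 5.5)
Your proposal is correct and follows essentially the same route as the paper. Both compute $p_{\Psi_J}=\|x\|^2\|y\|^2-(x^Ty)^2-(y^TJx)^2$, identify $\mathbb R^{2r}$ with $\mathbb C^r$ so that the last two terms combine into $\bigl|\sum_\alpha u_\alpha\bar v_\alpha\bigr|^2$, and finish with the complex Lagrange identity, splitting each $|u_\alpha v_\beta-u_\beta v_\alpha|^2$ into the squares of its real and imaginary parts. Your explicit $2\times 2$ block computation and your check of the conjugation convention are steps the paper only states without verification.
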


\begin{proof}
For a rank-one input \(S=xx^T\), we have
\[
    \operatorname{Tr}(xx^T)=\|x\|^2,
    \qquad
    Jxx^TJ^T = (Jx)(Jx)^T .
\]
Therefore
\begin{align}
    p_{\Psi_J}(x,y)
    &=
    y^T
    \left(
        \|x\|^2 I_d
        -
        xx^T
        -
        (Jx)(Jx)^T
    \right)
    y                                                        \\
    &=
    \|x\|^2\|y\|^2
    -
    (x^Ty)^2
    -
    (y^TJx)^2 .
    \label{eq:BH_real_core_polynomial}
\end{align}

We now identify \(\mathbb R^{2r}\) with \(\mathbb C^r\). Write
\[
    z_\alpha = x_{2\alpha-1}+ i x_{2\alpha},
    \qquad
    w_\alpha = y_{2\alpha-1}+ i y_{2\alpha},
    \qquad
    \alpha=1,\dots,r.
\]
Then
\[
    \|z\|_{\mathbb C^r}^2 = \|x\|_{\mathbb R^{2r}}^2,
    \qquad
    \|w\|_{\mathbb C^r}^2 = \|y\|_{\mathbb R^{2r}}^2,
\]
and
\[
    |\langle z,w\rangle_{\mathbb C}|^2
    =
    (x^Ty)^2+(y^TJx)^2 .
\]
Thus \eqref{eq:BH_real_core_polynomial} becomes
\begin{equation}
    p_{\Psi_J}(x,y)
    =
    \|z\|^2\|w\|^2
    -
    |\langle z,w\rangle|^2 .
\end{equation}
By the complex Lagrange identity,
\begin{equation}
    \|z\|^2\|w\|^2
    -
    |\langle z,w\rangle|^2
    =
    \sum_{1\leq \alpha<\beta\leq r}
    |z_\alpha w_\beta-z_\beta w_\alpha|^2 .
    \label{eq:complex_lagrange_identity}
\end{equation}
Each term in the right-hand side is the modulus square of a complex bilinear
form. Writing real and imaginary parts gives
\[
    |z_\alpha w_\beta-z_\beta w_\alpha|^2
    =
    \Re(z_\alpha w_\beta-z_\beta w_\alpha)^2
    +
    \Im(z_\alpha w_\beta-z_\beta w_\alpha)^2 ,
\]
which is a sum of two squares of real bilinear forms in \(x\) and \(y\).
Hence \(p_{\Psi_J}\) is a real SOS biquadratic form.
\end{proof}

\newpage
\section{Extremality, Optimality and Tightness of the EWs}
\label{app:optimality}

In this appendix we collect the geometric facts used in 
Section~\ref{subsec:location_of_the_maps} to discuss tightness, optimality, and 
extremality of the witnesses obtained from the KSMZ construction. We work with 
the cone of block-positive operators
\begin{equation}
    \mathrm{BP}(\mathcal H_A,\mathcal H_B)
    =
    \Bigl\{
    W=W^\dagger :
    \bra{x\otimes y}W\ket{x\otimes y}\geq 0
    \quad
    \forall\, x\in\mathcal H_A,\; y\in\mathcal H_B
    \Bigr\}.
\end{equation}
For a block-positive operator \(W\), we write
\begin{equation}
    q_W(x,y)
    =
    \bra{x\otimes y}W\ket{x\otimes y}
\end{equation}
for the associated biquadratic form, and
\begin{equation}
    \mathcal Z(W)
    =
    \Bigl\{
    \ket{x\otimes y} :
    q_W(x,y)=0
    \Bigr\}
\end{equation}
for its set of saturating product vectors. The set \(\mathcal Z(W)\) describes
the contact locus between the supporting hyperplane defined by \(W\) and the
separable cone.

\subsection{The hierarchy: extremal, optimal, tight}

We first justify the implication chain
\begin{equation}
    \text{extremal}
    \quad\Longrightarrow\quad
    \text{optimal}
    \quad\Longrightarrow\quad
    \text{tight}.
\end{equation}

\begin{lemma}[Optimal witnesses are tight]
Let \(W\) be an optimal entanglement witness. Then \(W\) is tight, i.e.
there exists at least one product vector \(\ket{x\otimes y}\) such that
\begin{equation}
    \bra{x\otimes y}W\ket{x\otimes y}=0.
\end{equation}
\end{lemma}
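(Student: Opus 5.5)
We argue by contraposition: if $W$ is not tight, we construct a nonzero $P\succeq 0$ such that $W-P$ is still block-positive. This contradicts the definition of optimality recalled above, which says that no positive semidefinite operator can be subtracted from $W$ while preserving block positivity. Concretely, we take $P=\varepsilon\,\id_{AB}$ with $\varepsilon>0$ small.

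\emph{Step 1: compactness.} Suppose no product vector saturates $W$. By the bihomogeneity of $q_W(x,y)=\bra{x\otimes y}W\ket{x\otimes y}$, we have $q_W(\lambda x,\mu y)=|\lambda|^2|\mu|^2 q_W(x,y)$. So it suffices to consider normalized $\|x\|=\|y\|=1$, and non-tightness means $q_W(x,y)>0$ there. The product of the unit spheres of $\mathcal H_A$ and $\mathcal H_B$ is compact, since we are in finite dimension. Moreover $q_W$ is continuous, being a polynomial. Hence
\[
\varepsilon:=\min_{\|x\|=\|y\|=1} q_W(x,y)>0 .
\]

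\emph{Step 2: subtraction.} Set $W'=W-\varepsilon\,\id$. For unit $x,y$ we get $\bra{x\otimes y}W'\ket{x\otimes y}=q_W(x,y)-\varepsilon\geq 0$, and by homogeneity this extends to all product vectors. Thus $W'\in\mathrm{BP}(\mathcal H_A,\mathcal H_B)$. The operator $W'$ is still non-positive, because $W$ was non-positive and subtracting $\varepsilon\,\id$ only lowers the spectrum, so $W'$ remains an entanglement witness.

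\emph{Step 3: contradiction.} Since $P=\varepsilon\,\id\succeq 0$ and $P\neq 0$, $W$ is not optimal, which contradicts the assumption. Hence an optimal $W$ must admit a product vector with $\bra{x\otimes y}W\ket{x\otimes y}=0$, i.e.\ it is tight.

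The main point needing care is Step 1. The minimum must be strictly positive, and this requires both compactness (finite dimension) and the normalization reduction via homogeneity. Without the normalization, the infimum over all product vectors is trivially $0$ and gives no information. The remaining steps are direct consequences of the definitions.
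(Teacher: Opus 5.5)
Your proof is correct and follows essentially the same route as the paper. Both assume non-tightness, use compactness of the normalized product vectors to obtain a strictly positive minimum, and subtract a small multiple of the identity from $W$ to contradict optimality. Your choice of $\varepsilon$ equal to the minimum (the paper takes $0<\varepsilon<m$) and your extra remark that $W-\varepsilon I$ remains non-positive are harmless refinements.
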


\begin{proof}
Assume by contradiction that \(W\) is not tight. Then
\begin{equation}
    \bra{x\otimes y}W\ket{x\otimes y}>0
\end{equation}
for every normalized product vector \(\ket{x\otimes y}\). Since the set of
normalized product vectors is compact, there exists \(m>0\) such that
\begin{equation}
    \bra{x\otimes y}W\ket{x\otimes y}\geq m
\end{equation}
for every normalized product vector. Hence, for any \(0<\varepsilon<m\),
\begin{equation}
    \bra{x\otimes y}(W-\varepsilon I)\ket{x\otimes y}
    =
    \bra{x\otimes y}W\ket{x\otimes y}
    -
    \varepsilon
    \geq 0.
\end{equation}
Thus \(W-\varepsilon I\) remains block-positive. But \(\varepsilon I\succeq 0\)
is a nonzero positive semidefinite operator, so one can subtract a nonzero
positive operator from \(W\) while preserving block positivity. This contradicts
optimality. Therefore \(W\) must be tight.
\end{proof}

\begin{lemma}[Extremal witnesses are optimal]
Let \(W\) be an extremal entanglement witness in the cone
\(\mathrm{BP}(\mathcal H_A,\mathcal H_B)\). Then \(W\) is optimal.
\end{lemma}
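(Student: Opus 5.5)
We argue by contradiction, in the same style as the previous lemma. We use the paper's definition of optimality: $W$ is optimal if no nonzero positive semidefinite operator can be subtracted from it while keeping block positivity. Suppose $W$ is extremal in $\mathrm{BP}(\mathcal H_A,\mathcal H_B)$ but not optimal. Then there is a nonzero $P\succeq 0$ such that $W':=W-P$ is still block-positive. The plan is to exhibit $W$ as a nontrivial convex combination of two block-positive operators that are not positive multiples of $W$. This contradicts extremality, which we read as generating an extremal ray of the cone.

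The decomposition is immediate. Every positive semidefinite operator is block-positive, since $\bra{x\otimes y}P\ket{x\otimes y}\geq 0$, so $P\in\mathrm{BP}$. We then write
\[
W=\tfrac12\,(2W')+\tfrac12\,(2P),
\]
with both $2W'$ and $2P$ in $\mathrm{BP}$.

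The one real step is to check that this decomposition is proper. For $W$ on an extremal ray, extremality in the cone means that any decomposition $W=W_1+W_2$ with $W_i\in\mathrm{BP}$ forces $W_1=\lambda_1 W$ and $W_2=\lambda_2 W$ with $\lambda_i\geq 0$. Applied here, it gives $P=\lambda W$ for some $\lambda\geq 0$. Since $P\neq 0$, we have $\lambda>0$, so $W=P/\lambda\succeq 0$. This contradicts the assumption that $W$ is an entanglement witness, which by definition has a negative expectation value on some state and is therefore not positive semidefinite. Hence no such $P$ exists, and $W$ is optimal.

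The main obstacle is interpretive. The paper phrases extremality as ``not a proper convex combination of two distinct witnesses.'' For a cone, that wording must be read as extremality of the ray, because otherwise every nonzero element of the cone would be a combination of its own rescalings. I would therefore state explicitly at the start that extremal means generating an extremal ray, and then run the argument above. The closing step, $P=\lambda W$ with $\lambda>0$ implying $W\succeq 0$, uses only that a positive multiple of a PSD operator is PSD.
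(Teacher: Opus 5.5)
Your proof is correct and follows the paper's argument step by step. You decompose $W=(W-P)+P$ inside $\mathrm{BP}$, use ray-extremality to get $P=\lambda W$ with $\lambda>0$, and conclude $W\succeq 0$, which contradicts $W$ being a witness. Your clarification that extremality must be read as extremality of the ray is also exactly how the paper's proof opens.
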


\begin{proof}
Recall that extremality in a cone means extremality of the ray generated by
\(W\): if
\begin{equation}
    W = W_1+W_2,
    \qquad
    W_1,W_2\in \mathrm{BP}(\mathcal H_A,\mathcal H_B),
\end{equation}
then \(W_1=\lambda_1 W\) and \(W_2=\lambda_2 W\) for some
\(\lambda_1,\lambda_2\geq 0\).

Assume by contradiction that \(W\) is not optimal. Then there exists a nonzero
positive semidefinite operator \(P\succeq 0\) such that
\begin{equation}
    W-P\in \mathrm{BP}(\mathcal H_A,\mathcal H_B).
\end{equation}
Since \(P\succeq 0\) is also block-positive, we have a decomposition
\begin{equation}
    W=(W-P)+P
\end{equation}
inside the block-positive cone. By extremality, \(P=\lambda W\) for some
\(\lambda\geq 0\). Since \(P\neq 0\), one has \(\lambda>0\), and therefore
\(W=\lambda^{-1}P\succeq 0\). This contradicts the fact that \(W\) is an
entanglement witness, since an entanglement witness is block-positive but not
positive semidefinite. Hence \(W\) is optimal.
\end{proof}

Combining the two lemmas gives
\begin{equation}
    \text{extremal}
    \quad\Longrightarrow\quad
    \text{optimal}
    \quad\Longrightarrow\quad
    \text{tight}.
\end{equation}
The converses fail in general: optimal witnesses need not be extremal, and
tight witnesses need not be optimal.

\subsection{Are the KSMZ witnesses optimal?}
\label{app:optimal?}
Optimality of entanglement witnesses is understood in the sense of~\cite{Lewenstein_2000}. See also the review~\cite{Chruscinski2014EntanglementWitnesses}.

\begin{proposition}[Necessary condition for improving a witness]
    Let $W$ be a block-positive operator on 
    $\mathcal{H}_A\otimes\mathcal{H}_B\simeq\mathbb{R}^{nm}$, i.e.
    \begin{equation*}
        \bra{x\otimes y}W\ket{x\otimes y}\geq 0 
        \quad \forall\, x\in\mathbb{R}^n,\, y\in\mathbb{R}^m.
    \end{equation*}
    Define the saturating product set
    \begin{equation*}
        \mathcal{Z}(W)
        =\Bigl\{\ket{x\otimes y} : \bra{x\otimes y}W\ket{x\otimes y}=0\Bigr\}.
    \end{equation*}
    Then any $P\succeq 0$ such that $W' = W - P$ remains block-positive must satisfy
    \begin{equation*}
        \operatorname{span}\{\mathcal{Z}(W)\}\subseteq\ker(P).
    \end{equation*}
    In particular, since Algorithm~\ref{sec:algoimpl} guarantees that
    $\{s_1,\dots,s_N\}\subseteq\mathcal{Z}(W)$, the operator $P$ must vanish on 
    $\operatorname{span}\{s_1,\dots,s_N\}$: the witness cannot be improved in those 
    directions.
    \label{prop:NC_Wit_imp}
\end{proposition}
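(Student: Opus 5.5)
The argument is a sandwich: block positivity of $W'$ together with positivity of $P$ forces every saturating product vector into $\ker P$, and $\ker P$ is a subspace, so it contains the span. Fix a product vector $\ket{v}=\ket{x\otimes y}\in\mathcal Z(W)$. Since $W'=W-P$ is assumed block-positive, evaluating on $\ket v$ gives
\[
0\le \bra{v}W'\ket{v}=\bra{v}W\ket{v}-\bra{v}P\ket{v}=-\bra{v}P\ket{v}.
\]
On the other hand $P\succeq 0$ gives $\bra{v}P\ket{v}\ge 0$. Hence $\bra{v}P\ket{v}=0$.

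The key step is to upgrade $\bra{v}P\ket{v}=0$ to $P\ket v=0$. For a positive semidefinite operator I will write $P=P^{1/2}P^{1/2}$. Then $\bra{v}P\ket{v}=\|P^{1/2}\ket v\|^2$, so $P^{1/2}\ket v=0$ and therefore $P\ket v=P^{1/2}P^{1/2}\ket v=0$. An equivalent route uses the Cauchy--Schwarz inequality for the semi-inner product $(u,w)\mapsto\bra{u}P\ket{w}$: it gives $|\bra{u}P\ket{v}|^2\le\bra{u}P\ket{u}\bra{v}P\ket{v}=0$ for all $u$. Either way $\ket v\in\ker P$.

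Since $\ker P$ is a linear subspace containing every element of $\mathcal Z(W)$, it contains $\operatorname{span}\{\mathcal Z(W)\}$, which proves the inclusion.

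For the ``in particular'' part, each prescribed Segre point $s_k=[x_k\otimes y_k]$ is a zero of $q_\delta$, hence of $p_\Phi$. By Proposition~\ref{prop:compatiblity_3iso} it is then a zero of $E_W(x,y)=\bra{x\otimes y}W\ket{x\otimes y}$. For the complexified witness, the same conclusion follows from the identity \eqref{eq:pol-dec} together with the vanishing lemma of Appendix~\ref{app:bihomogeneous_zero_argument}, exactly as in Section~\ref{subsec:location_of_the_maps}. Thus $\{s_1,\dots,s_N\}\subseteq\mathcal Z(W)$, and applying the first part yields $P s_k=0$ for all $k$.

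There is no real obstacle here. The only point needing care is the upgrade from $\bra vP\ket v=0$ to $P\ket v=0$, which uses positive semidefiniteness of $P$ and not merely block positivity; this is exactly where the hypothesis $P\succeq0$ enters.
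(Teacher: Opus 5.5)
Your proposal is correct and follows the paper's proof essentially verbatim: the same sandwich $0\le \bra{v}W'\ket{v}=-\bra{v}P\ket{v}\le 0$ on a saturating product vector, then the upgrade to $P\ket{v}=0$ via a factorization (the paper writes $P=R^\dagger R$ where you use $P^{1/2}$), then closure of $\ker P$ under linear combinations. Your explicit justification of $\{s_1,\dots,s_N\}\subseteq\mathcal Z(W)$, via Proposition~\ref{prop:compatiblity_3iso}, identity \eqref{eq:pol-dec} and the vanishing lemma, is a welcome addition, since the paper only asserts it.
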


\begin{proof}
Let $P\succeq 0$, $P\neq 0$, such that $W':=W-P$ is block-positive. 
Let $\ket{s}=\ket{x\otimes y}\in\mathcal{Z}(W)$, so that $\bra{s}W\ket{s}=0$.
Since $W'$ is block-positive, $\bra{s}W'\ket{s}\ge 0$, and therefore
\begin{align*}
0 \;\le\; \bra{s}W'\ket{s}
  &= \bra{s}(W-P)\ket{s}\\
  &= \underbrace{\bra{s}W\ket{s}}_{=\,0} - \bra{s}P\ket{s}
  = -\bra{s}P\ket{s}.
\end{align*}
Combined with $P\succeq 0$, which gives $\bra{s}P\ket{s}\ge 0$, we conclude 
$\bra{s}P\ket{s}=0$ for all $\ket{s}\in\mathcal{Z}(W)$.
Writing $P=R^\dagger R$, this gives $\|R\ket{s}\|^2=0$, hence $P\ket{s}=0$.
Since this holds for every element of $\mathcal{Z}(W)$, we obtain
\begin{equation*}
    \operatorname{span}\{\mathcal{Z}(W)\}\subseteq\ker(P),
\end{equation*}
which concludes the proof.
\end{proof}

The proposition shows that the prescribed zeros \cite{Klep2017ManyMorePositiveMaps}
$\{s_1,\dots,s_N\}$ act as fixed contact points between the witness
hyperplane and the separable cone. Any positive operator $P\succeq 0$
that could be subtracted from $W$ while preserving block-positivity must
annihilate all these directions. Consequently, if the prescribed product
vectors already span the full bipartite space,
\begin{equation}
    \operatorname{span}\{s_1,\dots,s_N\}
    =
    \mathcal H_A\otimes\mathcal H_B,
\end{equation}
then $\ker(P)=\mathcal H_A\otimes\mathcal H_B$, hence $P=0$, and the
witness is optimal by the spanning property \cite{Lewenstein_2000}.

Therefore, for the KSMZ witnesses, optimality can be certified by checking
the spanning property of the prescribed zero set. If this spanning condition
fails, the proposition does not imply non-optimality; it only shows that any
possible improvement must be supported on the orthogonal complement of
$\operatorname{span}\{s_1,\dots,s_N\}$. In that case, optimality remains
undecided and would require either an analysis of the full zero set
$\mathcal Z(W)$ or another independent argument.

\subsection{Non-extremality of the perturbative KSMZ witnesses}
\label{app:non_extremality_ksmz}

The previous subsection concerns optimality, namely whether a positive
semidefinite operator can be subtracted from $W$ while preserving
block-positivity. Extremality is a different and stronger geometric property:
it asks whether $W$ generates an extremal ray of the whole block-positive
cone \cite{Chruscinski2014EntanglementWitnesses}. We now show that, for the perturbative KSMZ construction, extremality
fails whenever the parameter $\delta$ is chosen strictly inside the admissible
positivity interval.

\begin{proposition}[Non-extremality away from the endpoint]
\label{prop:ksmz_non_extremal}
Let
\[
    q_\delta = H+\delta f
\]
be a KSMZ qudratic form \cite{Klep2017ManyMorePositiveMaps}, where \(H=\sum_i h_i^2\) is the reference sum-of-squares term 
and \(f\) is the perturbation. Assume that \(q_\delta\) is nonnegative on the 
Segre variety and that there exists \(\delta'>\delta\) such that
\[
    q_{\delta'} = H+\delta' f
\]
is still nonnegative on the Segre variety. Then the associated entanglement 
witness \(W_\delta\) is not extremal in the cone of block-positive operators.
In particular, this applies to every strict choice \(0<\delta<\delta_0\) in the 
standard KSMZ perturbative construction.
\end{proposition}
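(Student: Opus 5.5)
The plan is to exhibit $q_\delta$ as a proper convex combination of two non-proportional nonnegative forms on the Segre variety, and then transport this decomposition to the witness side through the linear isomorphisms already established. The natural endpoints are $H$ (the case $\delta=0$) and $q_{\delta'}$. With $t:=\delta/\delta'\in(0,1)$ one has
\begin{equation*}
    q_\delta = H+\delta f = (1-t)\,H + t\,(H+\delta' f) = (1-t)\,q_0 + t\,q_{\delta'} .
\end{equation*}
Both $q_0=H$ and $q_{\delta'}$ are nonnegative on $S_{n-1,m-1}(\mathbb R)$: the first because it is a sum of squares, the second by hypothesis.

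Next I would pull this identity back through the Segre map $\widetilde\sigma^\#_{n,m}$. Nonnegativity on the real Segre variety is equivalent to nonnegativity of the biquadratic form, so
\begin{equation*}
    p_\delta=(1-t)\,p_0+t\,p_{\delta'},
\end{equation*}
with $p_0,p_{\delta'}\in \mathrm{Pos}^{n,m}_{2,2}$. I would then apply, in order, the KSMZ isomorphism \eqref{eq:KSMZIsomorphism}, the trivial skew-symmetric complexification \eqref{eq:complexification}, and the Choi map. All of these are linear in the polynomial. The identity \eqref{eq:pol-dec} shows that a nonnegative $p_\Phi$ yields a nonnegative $p_{\Gamma_\mathbb C}$. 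Proposition~\ref{prop:compatiblity_3iso} then shows that the resulting Choi operators are block-positive. This gives
\begin{equation*}
    W_\delta=(1-t)\,W_0+t\,W_{\delta'}, \qquad W_0,\ W_{\delta'}\in \mathrm{BP}(\mathcal H_A,\mathcal H_B).
\end{equation*}

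The main obstacle is showing the decomposition is nontrivial, that is, $W_0$ and $W_{\delta'}$ are not both positive multiples of $W_\delta$. Since the composite map $q\mapsto W$ is injective on the quotient (it is a composition of isomorphisms; the complexification is injective on the symmetric part), it suffices to work at the level of $q$. Suppose $H=\lambda q_\delta=\lambda H+\lambda\delta f$ for some $\lambda\ge 0$. Then $(1-\lambda)H=\lambda\delta f$. If $\lambda=0$ we get $H=0$, which is impossible because the $h_i$ form a basis of $J_1$ and are not all zero. If $\lambda=1$ we get $f=0$. Otherwise $f=\frac{1-\lambda}{\lambda\delta}H\in\operatorname{span}\{h_ih_j\}$. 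In the last two cases we contradict \eqref{eq:f-not-span}, since $0$ lies in that span. Hence $W_0$ is not proportional to $W_\delta$, and the ray spanned by $W_\delta$ is not extremal.

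For the final claim, take any $0<\delta<\delta_0$ and choose $\delta'=\delta_0$. Lemma~\ref{lem:small-delta} guarantees that $q_{\delta_0}\ge 0$ on the Segre variety, so the hypothesis of the proposition holds. I would also remark that $W_0$ is itself PSD, being the image of an SOS form. The decomposition therefore writes $W_\delta$ as a PSD part plus another block-positive operator, which makes the non-extremality geometrically transparent.
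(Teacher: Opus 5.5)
Your argument is correct and is essentially the paper's proof. The paper uses the same convex combination $q_\delta=(1-\delta/\delta')H+(\delta/\delta')q_{\delta'}$, carries it linearly to $W_\delta=(1-t)W_0+tW_{\delta'}$, and gets non-proportionality from $f\notin\operatorname{span}\{h_ih_j\}$. Your use of \eqref{eq:pol-dec} for block-positivity on complex product vectors, and your check that $q\mapsto W$ is injective, spell out two steps the paper only asserts.

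You should, however, drop the closing remark that $W_0$ is PSD, because it is false in general. Under the trivial skew-symmetric extension \eqref{eq:complexification}, $W_0$ is the Choi operator of $X\mapsto\Phi_0\big(\tfrac{X+X^T}{2}\big)$. Here $\Phi_0(X)=\sum_i C_i^TXC_i$, where $x^TC_iy$ is the pullback of $h_i$. Hence $W_0=\tfrac12\big(P+P^{T_B}\big)$ with $P=\sum_i|v_i\rangle\langle v_i|\succeq 0$ and $|v_i\rangle=\sum_{j,l}(C_i)_{jl}|j\rangle|l\rangle$. In the paper's language, a real SOS yields an RSOS (decomposable, partial-transpose invariant) operator, not a CSOS (PSD) one.

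Concretely, take two prescribed zeros $[x_1\otimes y_1]$, $[x_2\otimes y_2]$ and set $u=c_1\,x_1\otimes y_1+c_2\,x_2\otimes y_2$. Then
$\langle u|W_0|u\rangle=c_1c_2\sum_i (x_1^TC_iy_2)(x_2^TC_iy_1)$.
For generic points this is negative for one choice of the sign of $c_1c_2$.

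If the remark were true, $W_\delta-(1-t)W_0=tW_{\delta'}$ would show that $W_\delta$ is not optimal, which the paper explicitly leaves open. The non-extremality proof itself never uses the remark, so it stands without it.
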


\begin{proof}
Since \(q_{\delta'}=H+\delta'f\), we can write
\begin{equation}
    q_\delta
    =
    \left(1-\frac{\delta}{\delta'}\right)H
    +
    \frac{\delta}{\delta'}q_{\delta'}.
    \label{eq:ksmz_qdelta_segment}
\end{equation}
Indeed,
\begin{align}
    \left(1-\frac{\delta}{\delta'}\right)H
    +
    \frac{\delta}{\delta'}q_{\delta'}
    &=
    \left(1-\frac{\delta}{\delta'}\right)H
    +
    \frac{\delta}{\delta'}(H+\delta'f) \\
    &=
    H+\delta f \\
    &=
    q_\delta.
\end{align}
Since \(0<\delta<\delta'\), one has
\[
    0<\frac{\delta}{\delta'}<1.
\]
Thus \eqref{eq:ksmz_qdelta_segment} expresses \(q_\delta\) as a convex 
combination of two forms, \(H\) and \(q_{\delta'}\).

Both \(H\) and \(q_{\delta'}\) are nonnegative on the Segre variety. After 
pullback to product directions, this means that the corresponding biquadratic 
forms are nonnegative on all product vectors. Equivalently, the associated 
Choi operators \(W_H\) and \(W_{\delta'}\) are block-positive.

By linearity of the pullback and of the Choi correspondence, 
\eqref{eq:ksmz_qdelta_segment} gives
\begin{equation}
    W_\delta
    =
    \left(1-\frac{\delta}{\delta'}\right)W_H
    +
    \frac{\delta}{\delta'}W_{\delta'}.
    \label{eq:ksmz_Wdelta_segment}
\end{equation}
Hence \(W_\delta\) lies on the line segment joining \(W_H\) and 
\(W_{\delta'}\) inside the block-positive cone.

This segment is nontrivial. Indeed, \(W_H\), \(W_\delta\), and 
\(W_{\delta'}\) correspond respectively to the forms
\[
    H,
    \qquad
    H+\delta f,
    \qquad
    H+\delta'f.
\]
Since \(f\neq 0\) and \(\delta\neq\delta'\), these forms are distinct. Moreover,
in the KSMZ construction \(f\) is chosen outside the linear span generated by 
the products \(h_i h_j\), whereas \(H=\sum_i h_i^2\) belongs to that span. Thus 
\(H\) and \(f\) are linearly independent, so the above forms are not 
proportional.

Therefore \eqref{eq:ksmz_Wdelta_segment} is a genuine convex decomposition of 
\(W_\delta\) into two non-proportional block-positive operators. Consequently, 
\(W_\delta\) cannot generate an extremal ray of the block-positive cone. Hence 
\(W_\delta\) is not extremal.
\end{proof}

\begin{remark}
The argument relies on the existence of a larger admissible perturbation 
parameter \(\delta'>\delta\). Therefore it applies to the standard strict 
perturbative regime \(0<\delta<\delta_0\). It does not decide what happens at a 
maximal endpoint
\[
    \delta_*=\sup\{t>0:\ H+t f \geq 0 \text{ on the Segre variety}\}.
\]
At such an endpoint, extremality would require a separate zero and Hessian-zero in the sense of \cite{hansen2015extremalentanglementwitnesses}
analysis.
\end{remark}

\newpage
\section{Numerical precision and reliability of the computations}
\label{app:numerics}

\subsection{Precision of $\mathrm{Tr}(W_D\rho)$}

Let
\[
\widehat t := \operatorname{fl}\!\bigl(\operatorname{Tr}(W_D\rho)\bigr)
\]
denote the floating-point value returned by \textsc{MATLAB}, and
\[
t := \operatorname{Tr}(W_D\rho)
\]
the exact mathematical quantity. We write
\[
\widehat t = t + \delta_{\mathrm{Tr}},
\]
where $\delta_{\mathrm{Tr}}$ is the absolute forward rounding error.

All computations are performed in IEEE double precision, with machine precision
\[
\varepsilon_{\mathrm{mach}} = 2^{-52} \approx 2.22\times 10^{-16}.
\]

A standard result from backward error analysis of matrix multiplication~\cite{higham2002accuracy} states that the computed product satisfies
\[
\operatorname{fl}(W_D\rho) = W_D\rho + \Delta, 
\qquad
\|\Delta\|_F \le \gamma_n \|W_D\|_F \|\rho\|_F,
\]
where
\[
\gamma_n := \frac{n\,\varepsilon_{\mathrm{mach}}}{1 - n\,\varepsilon_{\mathrm{mach}}}.
\]
Taking the trace, which is a linear operator of norm $1$ with respect to the Frobenius inner product, yields
\[
|\delta_{\mathrm{Tr}}| \le \|\Delta\|_F \le \gamma_n \|W_D\|_F \|\rho\|_F.
\]

In the present case ($n=9$), using the matrices for the script \textit{A complete numerical demonstration of indecomposability}~\cite{GithubCode} :

\[
\rho = \texttt{new\_rho\_temp}, 
\qquad 
W_D = \texttt{W\_phi},
\]

\[
\rho =
\begin{pmatrix}
0.1448 & 0.0512 & 0.0028 & 0.1788 & 0.0517 & -0.0053 & -0.0070 & -0.0031 & 0.0033 \\
0.0512 & 0.0297 & 0.0023 & 0.0517 & 0.0431 & 0.0003 & -0.0031 & 0.0248 & 0.0065 \\
0.0028 & 0.0023 & 0.0097 & -0.0053 & 0.0003 & -0.0006 & 0.0033 & 0.0065 & 0.0127 \\
0.1788 & 0.0517 & -0.0053 & 0.4027 & 0.2021 & 0.0736 & 0.0422 & 0.0464 & 0.0262 \\
0.0517 & 0.0431 & 0.0003 & 0.2021 & 0.2296 & 0.0826 & 0.0464 & 0.1303 & 0.0438 \\
-0.0053 & 0.0003 & -0.0006 & 0.0736 & 0.0826 & 0.0412 & 0.0262 & 0.0438 & 0.0181 \\
-0.0070 & -0.0031 & 0.0033 & 0.0422 & 0.0464 & 0.0262 & 0.0188 & 0.0255 & 0.0159 \\
-0.0031 & 0.0248 & 0.0065 & 0.0464 & 0.1303 & 0.0438 & 0.0255 & 0.0976 & 0.0340 \\
0.0033 & 0.0065 & 0.0127 & 0.0262 & 0.0438 & 0.0181 & 0.0159 & 0.0340 & 0.0260
\end{pmatrix}
\]

\[
W_D =
\begin{pmatrix}
0.6414 & -0.8063 & -0.4999 & -0.3104 & 0.2598 & 0.1416 & 0.2522 & -0.1127 & -0.0358 \\
-0.8063 & 1.0277 & 0.6321 & 0.2598 & -0.1598 & -0.0768 & -0.1127 & -0.1332 & -0.1222 \\
-0.4999 & 0.6321 & 0.4903 & 0.1416 & -0.0768 & 0.0166 & -0.0358 & -0.1222 & -0.1891 \\
-0.3104 & 0.2598 & 0.1416 & 0.1566 & -0.0804 & -0.0174 & -0.1302 & 0.0276 & -0.0450 \\
0.2598 & -0.1598 & -0.0768 & -0.0804 & 0.0635 & -0.0070 & 0.0276 & -0.0260 & 0.0298 \\
0.1416 & -0.0768 & 0.0166 & -0.0174 & -0.0070 & 0.0519 & -0.0450 & 0.0298 & -0.0236 \\
0.2522 & -0.1127 & -0.0358 & -0.1302 & 0.0276 & -0.0450 & 0.1271 & 0.0069 & 0.0777 \\
-0.1127 & -0.1332 & -0.1222 & 0.0276 & -0.0260 & 0.0298 & 0.0069 & 0.0758 & 0.0023 \\
-0.0358 & -0.1222 & -0.1891 & -0.0450 & 0.0298 & -0.0236 & 0.0777 & 0.0023 & 0.0905
\end{pmatrix}
\]

one obtains
\[
\|W_D\|_F \approx 2.05,
\qquad
\|\rho\|_F \approx 0.55,
\]
and therefore
\[
|\delta_{\mathrm{Tr}}| \lesssim 
\gamma_9 \|W_D\|_F \|\rho\|_F
\approx 2.2\times 10^{-15}.
\]

Hence, any computed value of order $10^{-6}$ is several orders of magnitude larger than the maximal floating-point uncertainty; in particular, its sign is numerically certified.

\subsection{Hermitian matrix $\rho$ and the eigenvalue computation}

The matrix $\rho$ is Hermitian. MATLAB uses LAPACK in some linear algebra functions such as \texttt{eig}
(\href{https://www.mathworks.com/help/coder/ug/lapack-calls-in-generated-code.html}{MathWorks documentation}).
For Hermitian/symmetric eigenproblems, LAPACK’s User Guide~\cite{lapack_users_guide} states that the computed eigenvalues satisfy a bound of the form
\[
|\widehat\lambda_i-\lambda_i| \le \mathrm{EERRBD},
\qquad
\mathrm{EERRBD} = \varepsilon_{\mathrm{mach}}\,\mathrm{ANORM},
\]
with corresponding bounds for the eigenvectors as well
(\href{https://www.netlib.org/lapack/lug/node89.html}{LAPACK User Guide, ``Error Bounds for the Symmetric Eigenproblem''}).

More precisely, the LAPACK routine documentation for \texttt{DSYEVR} states~\cite{lapack_dsyevr} that an approximate eigenvalue is considered converged when it is located in an interval $[a,b]$ of width at most
\[
\mathrm{ABSTOL} + \mathrm{EPS}\max(|a|,|b|),
\]
where $a$ and $b$ are the bracketing endpoints produced by the iterative refinement, $\mathrm{EPS}$ is machine precision, and if $\mathrm{ABSTOL}\le 0$ then $\mathrm{EPS}\,|T|$ is used instead, with $|T|$ the $1$-norm of the tridiagonal matrix obtained after reduction to tridiagonal form
(\href{https://www.netlib.org/lapack/explore-html/d1/d56/group__heevr_gaa334ac0c11113576db0fc37b7565e8b5.html}{LAPACK \texttt{DSYEVR} documentation}).

For the matrix $\rho$, we have
\[
\|\rho\|_2 \approx 0.665,
\]
and the smallest computed eigenvalue is
\[
\lambda_{\min}(\rho) \approx 5.7 \times 10^{-9}.
\]

A conservative a priori bound on the absolute eigenvalue error is therefore
\[
\tau_\rho := 9\,\varepsilon_{\mathrm{mach}} \|\rho\|_2 
\approx 9 \times (2.22\times 10^{-16}) \times 0.665
\approx 1.3\times 10^{-15}.
\]

\paragraph{Conclusion.}
We obtain
\[
\lambda_{\min}(\rho) \approx 5.7\times 10^{-9}
\;\gg\;
\tau_\rho \approx 10^{-15}.
\]
Thus, even the smallest eigenvalue remains positive by a margin of approximately six orders of magnitude beyond the worst-case floating-point uncertainty.

In addition, $\rho$ was certified to be numerically positive by the backward-stable Cholesky decomposition algorithm of MATLAB.

\newpage

\section{List of the Criterion of Entanglement and Separability test operated by IsSeparable}
\label{app:IsSeparable Criterion List}

\textbf{[E]} denoes an entanglement criterion whereas \textbf{[S]} denotes a separability criteria.

\begin{enumerate}
\item \textbf{[E] PPT Criterion}: The most fundamental test; if the partial transpose of the matrix has any negative eigenvalues, the state is entangled \cite{Peres1996}.
\item \textbf{[S] Low-Dimensional PPT Sufficiency}: For systems of dimension $2 \otimes 2$ or $2 \otimes 3$, being PPT is not just necessary but also sufficient for separability \cite{Horodecki1996}.
\item \textbf{[S] Low-Rank PPT Sufficiency}: A PPT state is guaranteed to be separable if its rank is low (specifically $\leq 3$) or does not exceed the rank of its reduced density matrices \cite{Horodecki2000}.
\item \textbf{[E] Realignment (CCN) Criterion}: Also known as the Computable Cross Norm; if the trace norm of the realigned matrix is greater than 1, the state is entangled \cite{Chen2003}.
\item \textbf{[E] Stronger Realignment (Zhang et al.)}: A more sensitive version of the realignment test that uses the local properties of the subsystems to detect entanglement \cite{Zhang2008}.
\item \textbf{[S] Qubit-Qudit Spectrum Check}: A specific test for $2 \otimes n$ systems that determines separability based solely on the eigenvalues of the state \cite{Johnston2013}.
\item \textbf{[S] Block Hankel Structure}: In $2 \otimes n$ cases, if the blocks of the matrix satisfy the symmetry of a block Hankel matrix, the state is separable \cite{Hildebrand2008}.
\item \textbf{[S] Perturbed Block Hankel}: An extension for $2 \otimes n$ systems where the matrix is a rank-1 perturbation of a block Hankel matrix \cite{Hildebrand2005}.
\item \textbf{[S] Homothetic Images Approach}: A geometric check for $2 \otimes n$ systems determining if the state belongs to specific separable cones \cite{Hildebrand2016}.
\item \textbf{[S] Block Matrix Lemma (Lemma 1)}: A sufficient condition for $2 \otimes n$ states involving the relationship between the blocks of the bipartite matrix \cite{Johnston2013}.
\item \textbf{[S/E] Chow Form Check}: Specifically for $3 \otimes 3$ states of rank 4, this calculates the "Chow Form" determinant; a zero result proves separability, while a non-zero result proves entanglement \cite{Chen2013}.
\item \textbf{[S] In-Separable Ball}: Determines if the state is close enough to the maximally mixed state to fall within the guaranteed "separable ball" \cite{Gurvits2002}.
\item \textbf{[S] Identity Rank-1 Perturbation}: If the state is a small rank-1 perturbation of the identity and is PPT, it is proven separable \cite{Vidal1999}.
\item \textbf{[S] Operator Schmidt Rank}: If the operator Schmidt rank (the Schmidt rank of the matrix when viewed as a vector) is $\leq 2$, the state is separable \cite{Cariello2013}.
\item \textbf{[E] Exposed Qutrit Maps}: A test for $3 \otimes 3$ states using a specific family of positive linear maps that detect entanglement where others fail \cite{Ha2011}.
\item \textbf{[E] Breuer-Hall Maps}: Entanglement detection in even dimensions using positive maps constructed from antisymmetric unitary matrices \cite{Breuer2006, Hall2006}.
\item \textbf{[E] Filter Covariance Matrix Criterion (Filter CMC)}: A sophisticated test that applies the Covariance Matrix Criterion after bringing the state to its "Filter Normal Form" \cite{Gittsovich2008}.
\item \textbf{[E] PPT Symmetric Extension (Outer)}: Part of the Doherty-Parrilo-Spedalieri hierarchy; if a state lacks a $k$-copy PPT symmetric extension, it is entangled \cite{Doherty2004CompleteFamily}.
\item \textbf{[S] Symmetric Inner Extension}: A semidefinite programming check that looks for a symmetric extension within the separable cone to prove separability \cite{Navascues2009}.
\end{enumerate}

\newpage
\section{PPT margin, distance to the decomposable cone, and numerical interpretation}
\label{app:ppt_margin_distance_decomp}

In this appendix we explain in more detail the quantitative meaning of the two SDPs used in the numerics, namely the PPT margin \eqref{eq:PPT_margin} and the operator-norm distance from the witness $W$ to the cone of decomposable witnesses \eqref{eq:distance_dec}.
These two quantities are closely related by cone duality.

\subsection{The two cones}
Let
\begin{equation}
\mathrm{PPT}
=
\{\rho \succeq 0 : \rho^{T_B}\succeq 0,\ \text{Tr}(\rho)=1\}    
\end{equation}
denote the set of normalized PPT states, and let
\begin{equation}
\widetilde{\mathrm{PPT}}
=
\{\rho \succeq 0 : \rho^{T_B}\succeq 0\}
\end{equation}

be the corresponding unnormalized PPT cone. On the witness side, define the cone of decomposable witnesses by
\begin{equation}
\text{Dec}
=
\{\,P+Q^{T_B} : P,Q \succeq 0\,\}.
\end{equation}

A witness $W$ is decomposable if and only if $W\in Dec$.

\subsection{Why PPT states are the dual cone}
The key structural fact is that $\text{Dec}$ and $\widetilde{\mathrm{PPT}}$ are dual to one another with respect to the Hilbert--Schmidt pairing:
\begin{equation}
\text{Dec}^*
=
\widetilde{\mathrm{PPT}}.
\end{equation}
Indeed, if $\rho \succeq 0$ and $\rho^{T_B}\succeq 0$, then for every decomposable witness
\begin{equation}
W_D=P+Q^{T_B}, \qquad P,Q\succeq 0,
\end{equation}
one has
\begin{equation}
 \text{Tr}(W_D\rho)= \text{Tr}(P\rho)+ \text{Tr}(Q\rho^{T_B})\ge 0.
\end{equation}
Thus every PPT operator defines a nonnegative functional on the decomposable cone. Conversely, if
\begin{equation}
 \text{Tr}(W_D\rho)\ge 0 \qquad \forall W_D\in Dec,
\end{equation}
then testing separately against $P\succeq 0$ and $Q^{T_B}$ forces both $\rho\succeq 0$ and
$\rho^{T_B}\succeq 0$. Therefore the dual cone of decomposable witnesses is exactly the PPT cone.

\subsection{PPT margin as a quantitative certificate of indecomposability}
The quantity
\begin{equation}
\mu(W):=-\min_{\rho\in\mathrm{PPT}} \text{Tr}(W\rho)
\end{equation}
measures how far $W$ penetrates the dual cone separation. If $\mu(W)>0$, then there exists a PPT state $\rho$ such that
\begin{equation}
 \text{Tr}(W\rho)<0,
\end{equation}
which proves that $W\notin \text{Dec}$, hence that $W$ is indecomposable. In this sense,
$\mu(W)$ is a quantitative witness of indecomposability: the larger $\mu(W)$, the more strongly $W$ separates some PPT state from the decomposable cone.

The corresponding SDP is
\begin{equation}
\begin{aligned}
    \text{minimize}\quad &  \text{Tr}(W\rho)\\
    \text{subject to}\quad & \rho \succeq 0,\\
    & \rho^{T_B}\succeq 0,\\
    &  \text{Tr}(\rho)=1.
\end{aligned}
\end{equation}
This optimization is convex, numerically stable, and directly returns a PPT state minimizing the expectation value of $W$.

\subsection{Distance to the decomposable cone}
A complementary quantity is the operator-norm distance from $W$ to the decomposable cone:
\begin{equation}
d_\infty(W,\text{Dec})
=
\inf_{W_D\in \text{Dec}}\|W-W_D\|_\infty.
\end{equation}
Using the representation $W_D=P+Q^{T_B}$ with $P,Q\succeq 0$, this becomes
\begin{equation}
d_\infty(W,\text{Dec})
=
\min_{P,Q\succeq 0}\|W-(P+Q^{T_B})\|_\infty.
\end{equation}
In practice, this is implemented through the SDP
\begin{equation}
\begin{aligned}
\text{minimize}\quad & t\\
\text{subject to}\quad & P \succeq 0,\quad Q \succeq 0,\\
& -tI \preceq W-(P+Q^{T_B}) \preceq tI.
\end{aligned}
\end{equation}
The optimal value $t^\star$ is exactly the smallest operator-norm perturbation needed to move $W$ into the decomposable cone.

\subsection{Why $\mu(W)$ is a lower bound on the distance}
Let $W_D\in \text{Dec}$ be arbitrary and let $\rho\in \mathrm{PPT}$. Since $W_D$ is decomposable and $\rho$ is PPT,
\begin{equation}
 \text{Tr}(W_D\rho)\ge 0.
\end{equation}
Therefore

\begin{equation}
- \text{Tr}(W\rho)=\text{Tr}((W_D-W)\rho)- \text{Tr}(W_D\rho) \le |\text{Tr}((W-W_D)\rho)|.
\end{equation}

Because $\rho\succeq 0$ and $\Tr(\rho)=1$, one has $|\rho\|_1=1$, hence

\begin{equation}
| \text{Tr}((W-W_D)\rho)|\le \|W-W_D\|_\infty.
\end{equation}

Taking first the supremum over PPT states, and then the infimum over decomposable $W_D$, yields
\begin{equation}
\mu(W)\le d_\infty(W,Dec).
\end{equation}
Thus the PPT margin always gives a rigorous lower bound on the distance to the decomposable cone.

\subsection{When equality holds}
The previous inequality is in fact tight in the regime relevant to our numerics. Since the operator norm is dual to the trace norm on Hermitian matrices, standard conic duality yields
\begin{equation}
d_\infty(W,\text{Dec})
=
\max_{\substack{\rho\succeq 0\\ \rho^{T_B}\succeq 0\\  \text{Tr}(\rho)\le 1}}
\bigl(- \text{Tr}(W\rho)\bigr).
\end{equation}

Now, if the optimum is positive, then the maximizing $\rho$ necessarily saturates the trace constraint, i.e. $\text{Tr}(\rho)=1$, because scaling up a feasible $\rho$ with $\text{Tr}(\rho)<1$ would strictly improve the objective. Therefore, whenever $W$ actually detects PPT entanglement, one obtains

\begin{equation}
d_\infty(W,\text{Dec})
=
\mu(W).
\end{equation}

Equivalently,
\begin{equation}
d_\infty(W,\text{Dec})=\max\{0,\mu(W)\}.
\end{equation}
In particular, in the nontrivial regime $\mu(W)>0$, both SDPs return the same value.

\subsection{Numerical interpretation}
These two optimizations therefore provide complementary but ultimately equivalent information in the indecomposable regime. The PPT-margin SDP answers the question: does the witness attain a negative value on the PPT set, and by how much? The distance SDP answers the question: how far is the witness from becoming decomposable, measured in operator norm? When $\mu(W)>0$, the two viewpoints coincide quantitatively.

This is the reason why averaging these values over a family of KSMZ witnesses gives a meaningful numerical characterization of their average indecomposability: if the averaged values are small, then the witnesses typically lie close to the decomposable boundary; if they are larger, then the family penetrates more deeply into the indecomposable region.

\begin{figure}[h!]
    \centering
    \includegraphics[width=1\linewidth]{Images/Dual_dec.pdf}
    \caption{Geometric interpretation of decomposability and PPT detection. 
Left: in operator space, $d_\infty(W,\mathrm{Dec})$ denotes the operator-norm distance from the witness $W$ to the decomposable cone. 
Right: in state space, the PPT margin $\mu(W)=-\min_{\rho\in \mathrm{PPT}_1}\Tr(W\rho)$ measures the maximal violation of \(W\) on normalized PPT states. An optimal state $\rho$ defines the supporting functional associated with this violation. The relation between the two quantities is governed by the cone duality $\mathrm{Dec}^*=\mathrm{PPT}$.}
    \label{fig:placeholder}
\end{figure}

\end{document}